  \providecommand\BibTeX{{%
    \normalfont B\kern-0.5em{\scshape i\kern-0.25em b}\kern-0.8em\TeX}}}
\keywords{learned index, geo-textual data, spatial keyword query}
\definecolor{edit}{rgb}{0,0,0}
\newcommand{\idxname}{WISK\xspace}
\newtheorem{prob}{Problem}
\newtheorem{definition}{Definition}
\newcommand{\eat}[1]{}
\begin{document}

%%
%% The "title" command has an optional parameter,
%% allowing the author to define a "short title" to be used in page headers.
\title{WISK: A Workload-aware Learned Index for Spatial Keyword Queries}

%%
%% The "author" command and its associated commands are used to define
%% the authors and their affiliations.
%% Of note is the shared affiliation of the first two authors, and the
%% "authornote" and "authornotemark" commands
%% used to denote shared contribution to the research.
\author{Yufan Sheng}
\affiliation{
    \institution{University of New South Wales}
    \city{Sydney}
    \country{Australia}
}
\email{yufan.sheng@unsw.edu.au}

\author{Xin Cao}
\authornote{Xin Cao is the corresponding author.}
\affiliation{
    \institution{University of New South Wales}
    \city{Sydney}
    \country{Australia}
}
\email{xin.cao@unsw.edu.au}

\author{Yixiang Fang}
\affiliation{
    \institution{The Chinese University of Hong Kong, Shenzhen}
    \city{Shenzhen}
    \country{China}
}
\email{fangyixiang@cuhk.edu.cn}

\author{Kaiqi Zhao}
\affiliation{
    \institution{The University of Auckland}
    \city{Auckland}
    \country{New Zealand}
}
\email{kaiqi.zhao@auckland.ac.nz}

\author{Jianzhong Qi}
\affiliation{
    \institution{The University of Melbourne}
    \city{Melbourne}
    \country{Australia}
}
\email{jianzhong.qi@unimelb.edu.au}

\author{Gao Cong}
\affiliation{
    \institution{Nanyang Technological University}
    \city{Singapore}
    \country{Singapore}
}
\email{gaocong@ntu.edu.sg}

\author{Wenjie Zhang}
\affiliation{
    \institution{University of New South Wales}
    \city{Sydney}
    \country{Australia}
}
\email{wenjie.zhang@unsw.edu.au}

%%
%% By default, the full list of authors will be used in the page
%% headers. Often, this list is too long, and will overlap
%% other information printed in the page headers. This command allows
%% the author to define a more concise list
%% of authors' names for this purpose.
\renewcommand{\shortauthors}{Yufan Sheng et al.}
\received{October 2022}
\received[revised]{January 2023}
\received[accepted]{February 2023}

\begin{abstract}
    Spatial objects often come with textual information, such as Points of Interest (POIs) with their descriptions, which are referred to as geo-textual data. To retrieve such data, spatial keyword queries that take into account both spatial proximity and textual relevance have been extensively studied. 
Existing indexes designed for spatial keyword queries are mostly built based on the geo-textual data 
%only 
without considering the distribution of queries already received. 
However, previous studies have shown that utilizing the known query distribution can 
%further 
improve the index structure for future query processing.
%
%\textcolor{red}{Conventional geo-textual indexes designed for spatial keyword queries are mostly built based on the underlying data distribution only, without considering the query workload distribution.} The query performance of such indexes may vary largely over different query workloads. 
% only utilize the underlying data, which makes their performance inconsistent across different workloads. 
% The existing learned indexes are not compatible with the geo-textual dataset due to the high dimensionality of the representation of textual information. 
%
In this paper, we propose \idxname, a learned index for spatial keyword queries, which self-adapts for optimizing querying costs given a query workload. One key challenge is how to utilize both structured spatial attributes and unstructured textual information during learning the index. 
We first divide the data objects into partitions, aiming to minimize the processing costs of the given query workload. We prove the NP-hardness of the partitioning problem and propose a machine learning model to find the optimal partitions.
%where the processing costs are estimated by neural networks learned to approximate the Cumulative Distribution Function (CDF) of the geo-textual data.
%
Then, to achieve more pruning power, we build a hierarchical structure based on the generated partitions in a bottom-up manner with a reinforcement learning-based approach. We conduct extensive experiments on real-world datasets and query workloads with various distributions, and the results show that \idxname outperforms all competitors, achieving up to 8$\times$ 
%times 
speedup in querying time with comparable storage overhead.
\end{abstract}

\maketitle

\section{Introduction}
\label{sec1}
%Recent advancements in location acquisition technologies,  mobile networks and social networks have led to a large volume of \emph{geo-textual data} such as check-ins, where a data object contains a spatial location and a textual description. Further examples of such data include Points of Interest (POIs) such as restaurants and shopping malls, and geo-tagged documents on the Web. 

The worldwide mobile internet use has already surpassed desktop use since late 2016\footnote{Mobile vs. Desktop Internet Usage: \href{https://www.broadbandsearch.net/blog/mobile-desktop-internet-usage-statistics/}{https://www.broadbandsearch.net/blog/mobile-desktop-internet-usage-statistics/}}. In such a mobile internet, massive \textbf{geo-textual data} with spatial and textual attributes are generated, e.g., Points of Interest (POIs) in Google Maps are associated with geo-locations and descriptive texts. %In online social networks such as Twitter, almost 500 million tweets are posted every day and about 1\% of them are geotagged~\cite{twitterstat}.
Managing and retrieving geo-textual data at scale has attracted much attention. %Several commercial search engines have started to support location-based services, such as online yellow pages, map services, and local search. 
In recent years, the \textbf{spatial keyword queries}~\cite{DBLP:conf/sigmod/CaoCJO11, DBLP:conf/icde/ZhangZZL13, DBLP:conf/cikm/ChristoforakiHDMS11, DBLP:conf/ssd/TampakisSDPKV21, DBLP:journals/pvldb/CongJW09, DBLP:conf/ssdbm/HariharanHLM07, DBLP:conf/icde/FelipeHR08, DBLP:journals/vldb/ChenCCJ21, fang2018spatial} have been extensively studied, which take a location and a set of keywords as arguments and return objects based on different definitions of spatial proximity and textual relevance. \textbf{Spatial keyword indexes} are developed to process such queries efficiently by incorporating the techniques of indexing spatial objects and documents. 

\begin{figure}[tb]
   \begin{minipage}{\linewidth}
       \centering
       \includegraphics[width=.55\linewidth]{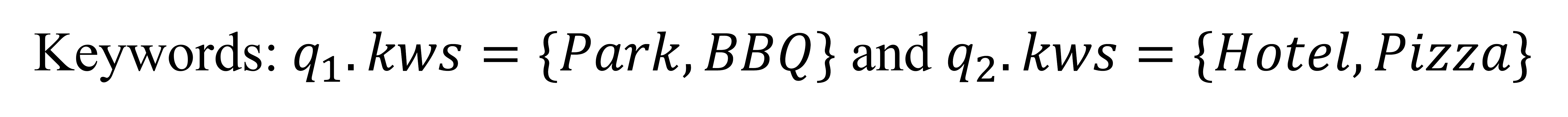}
   \end{minipage}
   \begin{minipage}{\linewidth}
       \centering
       \subcaptionbox{Data-driven index\label{data-driven}}{
            \vspace{-0.2cm}
            \includegraphics[width=.25\linewidth]{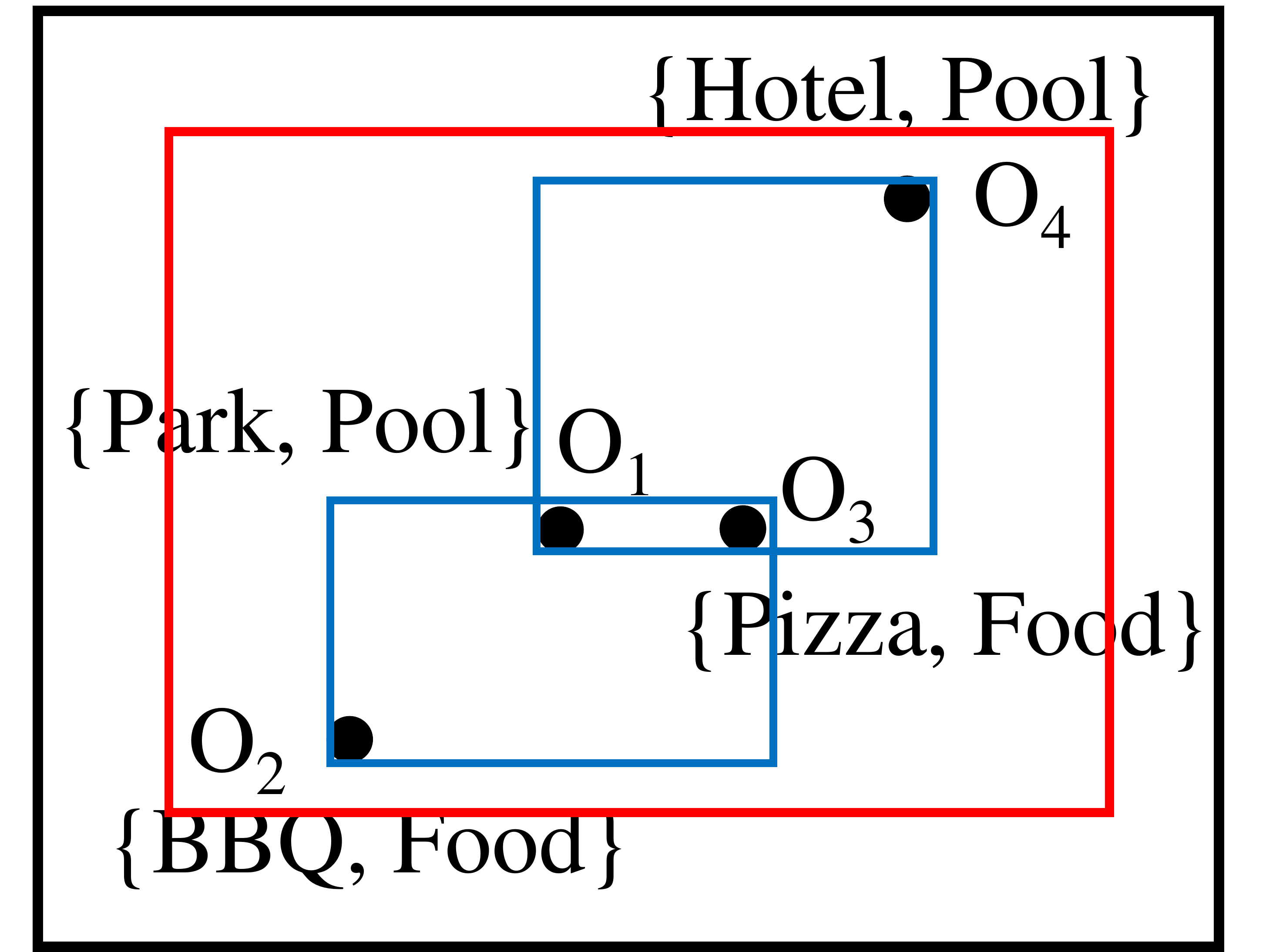}
       }
       \subcaptionbox{Query-aware index\label{query-aware}}{
            \vspace{-0.2cm}
            \includegraphics[width=.25\linewidth]{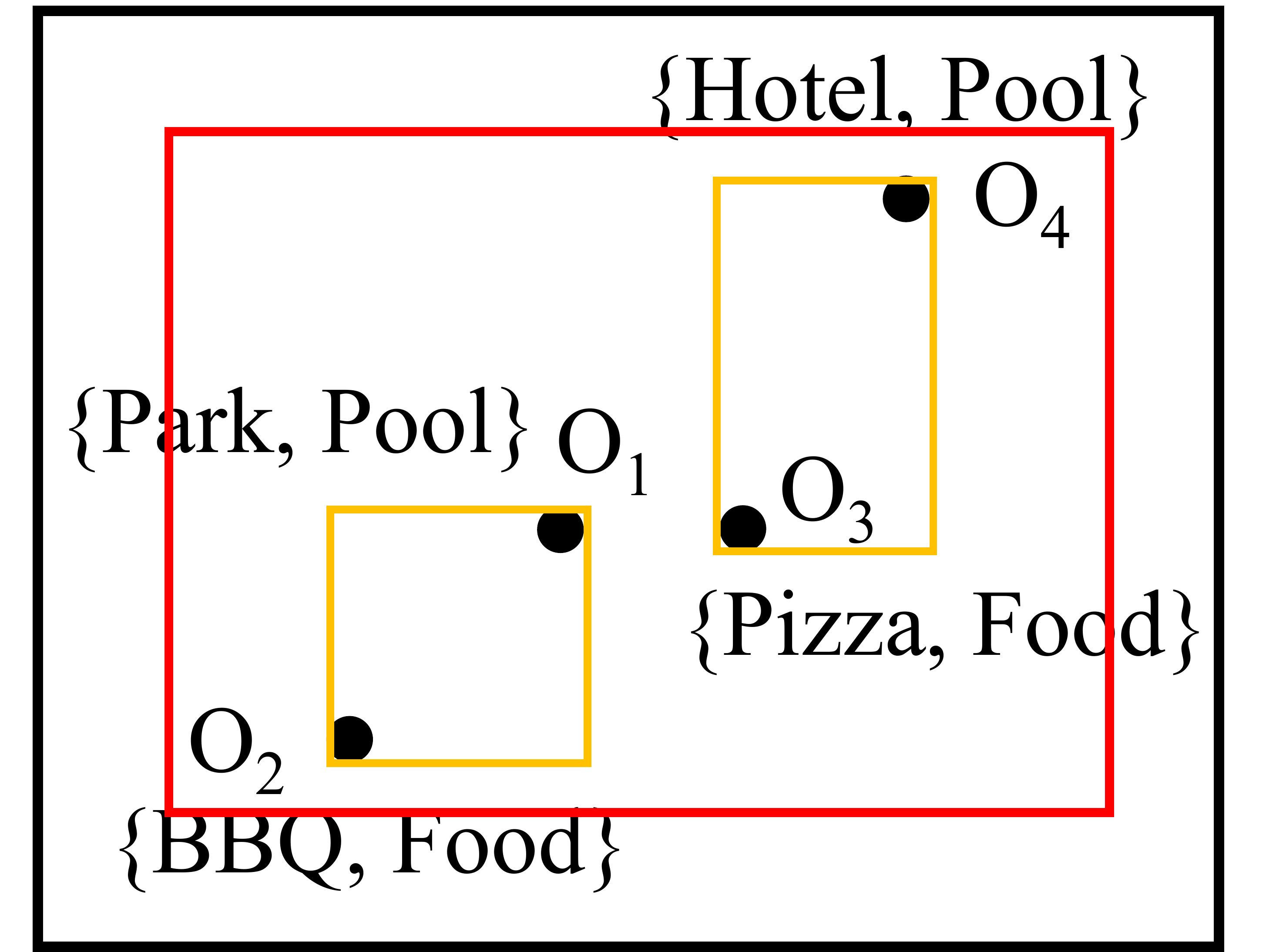}
       }
   \end{minipage}
  \vspace{-0.3cm}
   \caption{Data-driven indexes vs query-aware indexes}
   \label{fig:ex1}
\end{figure}

% Chen et al. \cite{DBLP:journals/vldb/ChenCCJ21} summarize three standard types of spatial keyword query, namely the Boolean range (BR) query, the Boolean \textit{k}NN (B\textit{k}) query, and the Top-k \textit{k}NN (T\textit{k}) query. 
%A variety of \emph{spatial keyword queries} have been considered based on different definitions of spatial proximity and textual constraints, and various indexes have been proposed to process these spatial keyword queries .

However, state-of-the-art spatial keyword indexes still have several drawbacks. First, as shown in previous studies~\cite{DBLP:journals/pvldb/ChenCJW13, DBLP:series/synthesis/2019Mahmood, DBLP:journals/vldb/ChenCCJ21}, no existing index can work efficiently for all data distributions, and there is no single approach that dominates all others. Second, \textcolor{edit}{traditional indexes may have some parameters to be set with fixed values across the entire input data space. For example, CDIR-tree~\cite{DBLP:journals/pvldb/CongJW09} needs a parameter to balance the importance of spatial proximity and textual relevancy. Since the query and data distributions vary in different areas, it is hard to select a single parameter value that fits all distributions.}~
% most indexes have some parameters which are not easy to tune. %E.g., in IR-tree~\cite{DBLP:journals/pvldb/CongJW09}, the weights on the spatial proximity and the textual relevance need to be tuned on different datasets during index construction. 
Third, no indexes have considered utilizing the query workload.  Previous works \cite{DBLP:conf/sigmod/BrunoCG01, DBLP:conf/sigir/SkobeltsynLZRA07} have shown that certain data regions could be much more heavily queried. For spatial keyword queries, both the query keyword distribution and query location distribution can be various over different spatial regions. Thus, utilizing the known query distribution could further optimize the index structure for future query processing~\cite{DBLP:conf/sigmod/NathanDAK20,DBLP:journals/pvldb/DingNAK20}. For example, as shown in Figure~\ref{fig:ex1}, a spatial keyword query workload includes $q_1$ and $q_2$ with the same query region (red rectangle) and different keywords. Existing data-driven indexes (Figure~\ref{data-driven}) store objects with close spatial distances and large textual similarities into a partition (enclosed by blue rectangles), and both queries have to check two partitions containing four objects. If the four objects are grouped in a different way to adapt for the query keywords, both queries only need to check two objects in a single partition. Query-aware index can be expected to achieve better performance on future queries following similar known query distributions.
%all queries have to access two partitions and then verify the objects in the two partitions. Given query workload B with keywords \texttt{\{a, d\}}, all queries also need to access two partitions. We aim to use the query workload to obtain an index like in Figure~\ref{fig:ex1}(b), such that for a future query similar to received queries, it is expected that only a single partition needs to be accessed and verified.

Motivated by these observations, in this work, we propose a novel \underline{W}orkload-aware learned \underline{I}ndex for \underline{S}patial \underline{K}eyword queries (\idxname). The objective is to learn an index structure using both the spatial and textual information such that the processing cost for the known query workload using this index is minimized. We 
%specifically 
focus on the spatial keyword range query workload.
%
%Table~\ref{sum:indexes} positions \idxname\ among the literature, and our proposed \idxname is the \emph{first learned index for spatial keyword queries}.

\eat{
\begin{table}[tb]
\small
\caption{Representative spatial/geo-textual indexes}
\vspace{-0.3cm}
\begin{tabular}{|m{.2\linewidth}|m{.38\linewidth}|m{.32\linewidth}|}
\hline
\multicolumn{1}{|c|}{\textbf{\begin{tabular}[c]{@{}c@{}}\\ \end{tabular}}}   & \multicolumn{1}{c|}{\textbf{\begin{tabular}[c]{@{}c@{}}Traditional Indexes\end{tabular}}} & \multicolumn{1}{c|}{\textbf{\begin{tabular}[c]{@{}c@{}}Learned Indexes\end{tabular}}} \\ \hline
\multicolumn{1}{|c|}{\textbf{\begin{tabular}[c]{@{}c@{}}Without \\ keywords\end{tabular}}} & R-tree and its variants \cite{DBLP:conf/sigmod/Guttman84, DBLP:conf/sigmod/BeckmannKSS90}, Grid \cite{DBLP:reference/gis/Maguire08}, Quadtree~\cite{}, k-d tree~\cite{}, SFC \cite{peano1890courbe}       & ZM \cite{DBLP:conf/mdm/WangFX019}, RSMI \cite{DBLP:journals/pvldb/QiLJK20}, LISA \cite{DBLP:conf/sigmod/Li0ZY020}, Flood \cite{DBLP:conf/sigmod/NathanDAK20},  Tsunami~\cite{ DBLP:journals/pvldb/DingNAK20}\\ \hline
\multicolumn{1}{|c|}{\textbf{\begin{tabular}[c]{@{}c@{}}With\\ keywords\end{tabular}}}    & KR$^*$-tree \cite{DBLP:conf/ssdbm/HariharanHLM07}, IR-Tree \cite{DBLP:journals/pvldb/CongJW09}, SKIF \cite{DBLP:conf/dexa/KhodaeiSL10}, SFC-Quad \cite{DBLP:conf/cikm/ChristoforakiHDMS11}, ST2I \cite{DBLP:conf/cikm/Hoang-VuVF16} & \multicolumn{1}{c|}{\textbf{\idxname} (This work)}       \\ \hline
\end{tabular}
\label{sum:indexes}
% \vspace{-0.5cm}
\end{table}
}

There exist some learned indexes for spatial query processing, which can also be classified into \emph{data-driven} indexes (such as ZM~\cite{DBLP:conf/mdm/WangFX019}, LISA~\cite{DBLP:conf/sigmod/Li0ZY020}, and RSMI~\cite{ DBLP:journals/pvldb/QiLJK20}) and \emph{query-aware} indexes (such as Flood~\cite{DBLP:conf/sigmod/NathanDAK20} and Tsunami~\cite{ DBLP:journals/pvldb/DingNAK20}). 
%Data-driven indexes such as LISA~\cite{DBLP:conf/sigmod/Li0ZY020}, RSMI\cite{ DBLP:journals/pvldb/QiLJK20}, and ZM~\cite{DBLP:conf/mdm/WangFX019} learn implicit patterns (data distribution) from the underlying data to be indexed. They speed up the query processing by learning models (i.e., functions) to directly predict the storage positions of data records for given search keys, rather than using a logarithmic-time search over a hierarchical structure. %Similar to the traditional indexes, such learned indexes are query agnostic, and their query costs may vary substantially across different query workloads. 
%Query-aware indexes such as Flood~\cite{DBLP:conf/sigmod/NathanDAK20} and Tsunami~\cite{ DBLP:journals/pvldb/DingNAK20} take into account the known query workload during index learning which aims to minimize the querying costs over the workload, and thus it is expected that the future query processing can be improved. 
%
These learned spatial index structures are not suitable for processing spatial keyword queries directly, because they only use spatial attributes for index learning. %One simple way is to treat each keyword as one dimension and then use existing multi-dimensional indexes. This is impractical given that a large geo-textual dataset often has more than millions of distinct keywords.
\textcolor{edit}{Concurrent with our work, a learned index has been proposed for spatial keyword querying~\cite{ding2022learned}. It uses spatial attributes for index learning first and then creates the textual index, and thus it cannot well learn the spatial and textual correlations for building the index.}

The key challenge of learning a spatial keyword index is how to capture the data and query distribution considering both the structured spatial information and unstructured textual attributes during index learning, and make use of the distribution captured to partition the data objects such that more irrelevant partitions can be filtered out during query processing, thus improving the querying efficiency. 
One simple method is to learn a spatial index first, which does not consider keywords, and then build an inverted file to manage the textual information within each partition of the index. Our experiments show that this has poor performance. The reason could be briefly explained in Figure~\ref{fig:ex2}, with a dataset containing 8 objects and a workload with 3 queries. The learned spatial indexes put fewer objects into the left partition which contains more queries to reduce costs. However, such partitions are worse than that in Figure~\ref{use-keyword}, when objects without query keywords can be ignored during query processing using inverted files. Figure~\ref{spatial-only} has 3 query-relevant objects (\texttt{Store}) in the left partition and 4 query-relevant objects (\texttt{Food}) in the right partition, and thus the number of checks required is $2 \times 3 + 1 \times 4 = 10$. But Figure~\ref{use-keyword} only needs $2 \times 3 + 1 \times 2 = 8$ checks. %, \textcolor{blue}{since the index learning only take into account the spatial attributes, and the index structure cannot well utilize the textual attributes for pruning the search space.}

\begin{figure}[tb]
    \centering
    \subcaptionbox{Learned spatial index\label{spatial-only}}{
        \vspace{-0.2cm}
        \includegraphics[width=.25\linewidth]{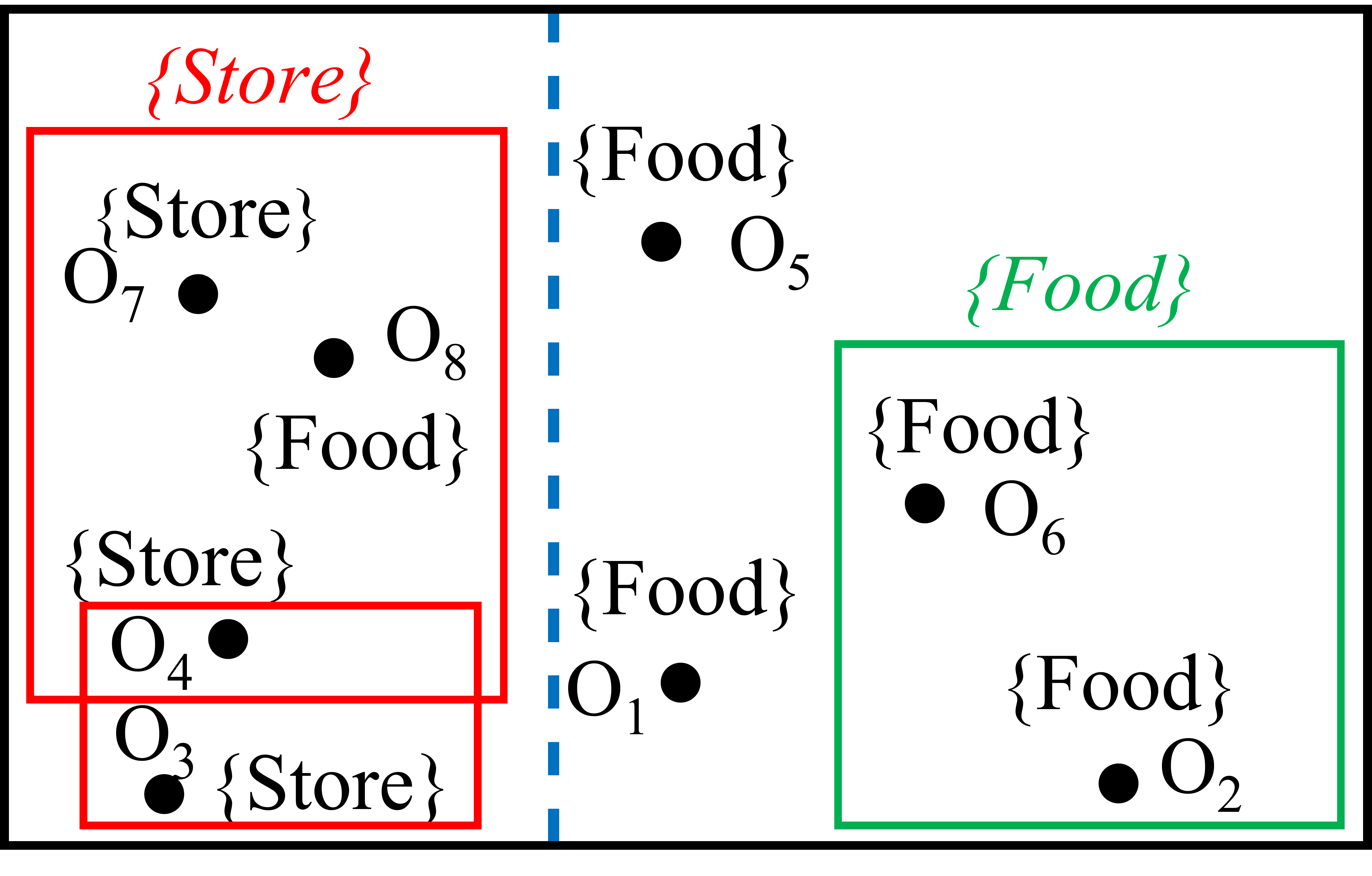}
    }
    \subcaptionbox{WISK\label{use-keyword}}{
        \vspace{-0.2cm}
        \includegraphics[width=.25\linewidth]{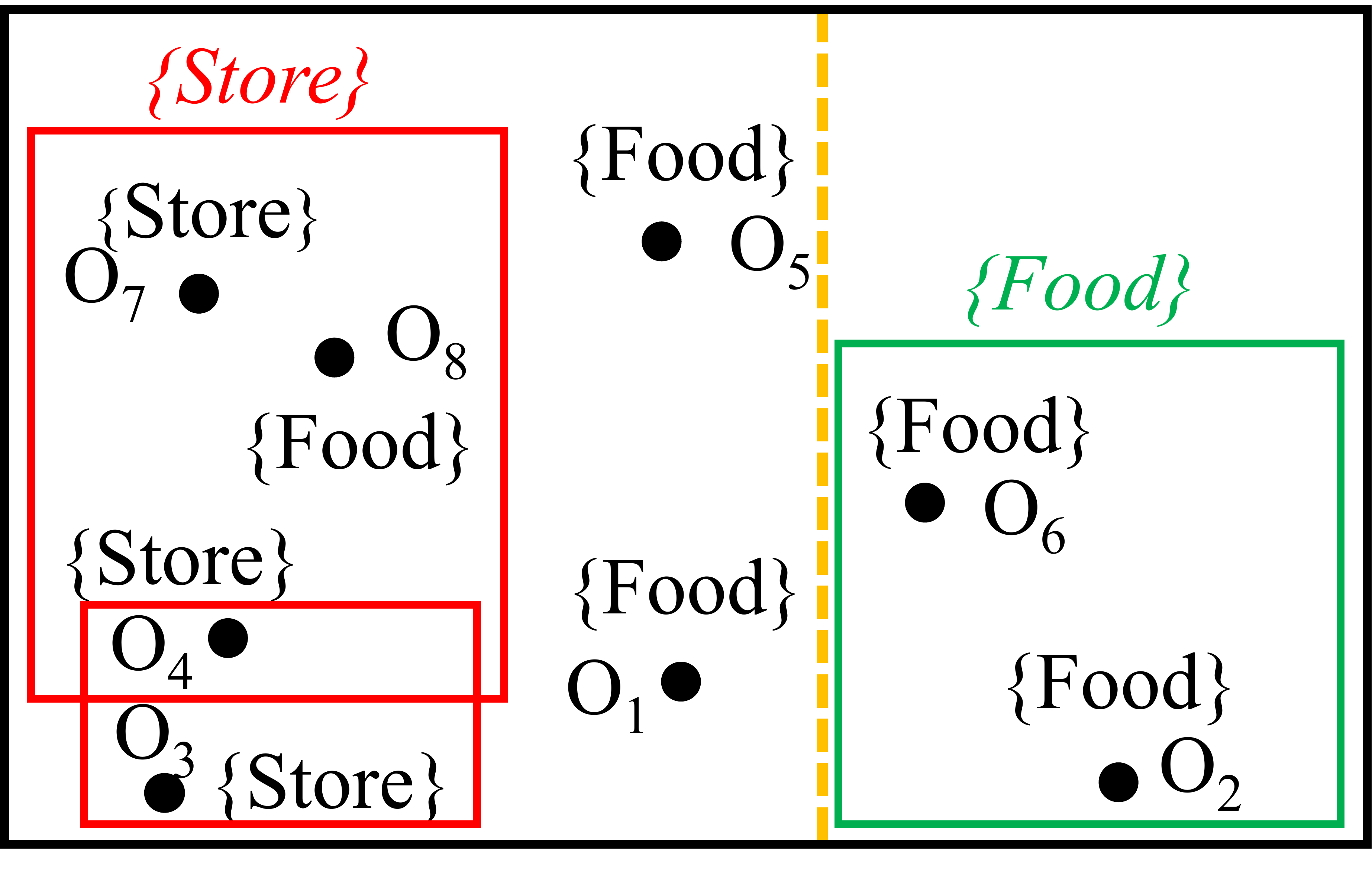}

    }
    \vspace{-0.3cm}
    \caption{Learning without and with textual information}
    \label{fig:ex2}
\end{figure}

%However, these indexes cannot be directly deployed in geo-textual data because of the high dimensionality of textual information.
% What differentiates query-adaptive indexes from data-driven ones is not only the extra learning aspect but also the optimization objective. Data-driven indexes aim to fit the data distribution while query-adaptive indexes target the automatic synthesis of the optimal layout for a specialized query distribution. 

%Query processing usually consists of two types of costs: filtering and refinement. We expect to prune as much data as possible in the filtering step using the index, and verify as less data as possible in the refinement step.
%
%As introduced before, we aim to learn an index such that the query processing costs is minimized for the given spatial keyword query workload. 

%In an index, objects are clustered into partitions and irrelevant partitions can be filtered out during query processing. Thus, 
It can be observed that the query cost largely depends on how the objects are partitioned.
We first formulate a problem of finding $k$ partitions with the optimal cost for a given query workload. We show the NP-hardness of this problem by a reduction from the MaxSkip partitioning problem~\cite{DBLP:conf/sigmod/SunFKX14, DBLP:conf/sigmod/YangCWGLMLKA20}. To learn good partitions, we design a cost estimation method considering both spatial and textual information of the query and data, based on 
trained models that can approximate the Cumulative Distribution Function (CDF) of geo-textual data.
%
%The cost is computed based on the number of partitions preserved after the filtering and the number of objects in these partitions need to be verified.  To solve this, we design a cost estimation method for a given query workload, based on trained machine learning models that can approximate the Cumulative Distribution Function (CDF) of geo-textual data.
%
Then, we propose a heuristic algorithm and use Stochastic Gradient Descent (SGD) \cite{bottou-98x} to generate the $k$ partitions.

%There remains one question: how to decide the partition number $k$? There is a trade-off on the value of $k$ for the filtering and refinement costs. If $k$ is large, we need to check many partitions to see if they are relevant to the query, and so the filtering step would take longer time. But we will have a smaller granularity and thus the refinement cost would be lower. In contrary, a small $k$ would lead to better filtering cost but worse refinement cost.
%
If there is only one-level of partitions, we need to check many partitions irrelevant to the query.
% to see if they are relevant to the query, and the filtering may be costly.
%
% Most indexes have a hierarchical structure to address this issue. However, how to organize the obtained partitions into a tree structure is non-trivial.
Hence, we further group the partitions into a hierarchy as do most indexes.
A simple way is to adopt the method in CDIR-tree~\cite{DBLP:journals/pvldb/CongJW09} for building the tree in a bottom-up manner. However, it is hard to select 
% the parameters in this method such as the fanout and
the weights of the spatial proximity and textual relevance between partitions. We show that it might even have worse query time in experiments.
Instead, we propose to pack the nodes level by level, and model the one-level node packing problem as a sequential decision-making process.
%
%Next, we pack the leaf-level partitions recursively in a bottom-up manner to construct the upper levels of our index structure. To minimize the query processing costs of the upper levels, We generate a criterion based on the given query workload to model the filtering cost (\jz{Do you want to say something like this instead: we formulate the reduction in the costs of a query workload given a node packing decision and transform the bottom-up packing as a sequential decision-making process. 
In particular, we develop a reinforcement learning~\cite{DBLP:journals/jair/KaelblingLM96} algorithm to find the optimal packing for each level and build the index in a bottom-up manner by considering the query workload.
%). Then, we formulate the bottom-up packing as a sequential decision-making process, which enables to learn an optimal packing with reinforcement learning (RL) \cite{DBLP:journals/jair/KaelblingLM96}.
% , which naturally fits the reinforcement learning (RL) framework \cite{DBLP:journals/jair/KaelblingLM96}. Hence, we propose to design a novel RL framework to build our hierarchical index based on the bottom clusters.

%Based on \idxname, we design an SKR query algorithm and conduct extensive experiments on real-world datasets and synthetic query workloads with various distributions. \idxname\ can self-adapt to suit different datasets and query workloads. \idxname\ outperforms several state-of-the-art SKR indexes by more than an order of magnitude while having a comparable index size.
% leads to more than one order of magnitude faster query performance comparing with the state-of-the-art indexes.

In summary, we make the following main contributions:
%\vspace{-0.3cm}
\begin{itemize}
    \item  We propose a query-aware learned index named \idxname considering spatial and textual attributes simultaneously.
    % \textcolor{edit}{To the best of our knowledge, this is among the first learned index considering spatial and textual attributes simultaneously.}
    % \jz{Do we need to adjust this claim due to LSTI? Also check any other place with this ``first'' claim.}

    \item To generate the leaf nodes of \idxname, we define an optimal partitioning problem and show its NP-hardness. We  propose a heuristic algorithm to solve the problem using machine learning techniques. 
    
    \item To build the hierarchy of \idxname, we propose to pack the nodes level by level in a bottom-up manner, and we treat the node packing as a sequential decision-making process. We develop a solution based on reinforcement learning.
    
    \item We %present SKR query algorithms based on \idxname\ and
    perform a comprehensive empirical study using real-world datasets and synthetic query workloads with various distributions. The results show that \idxname\ outperforms the state-of-the-art spatial keyword indexes consistently in terms of efficiency, achieving up to 8$\times$ times speedup while having a comparable index size.
\end{itemize}

% The rest of the paper is organized as follows. In Section \ref{sec2}, we define the terminologies to be used throughout the paper, propose learned baseline indexes and briefly introduce the reinforcement learning framework. Section \ref{sec3} presents an overview of \idxname. We introduce the query-adaptive partitioning algorithm and the hierarchical index construction method in Section \ref{sec4} and Section \ref{sec5}, respectively. Section \ref{sec6} discusses some adaptations for faster learning and better performance. The experimental evaluation is presented in Section \ref{sec7}. We review the related work in Section \ref{sec8}. Finally, Section \ref{sec9} concludes this paper with a discussion about future work.

\section{Preliminaries}
\label{sec2}
%We start with a problem definition and then present adapted solutions based on existing learned indexes. We also present basics of reinforcement learning for completeness in this section.

\subsection{Problem Statement}
We consider a geo-textual dataset $D$ where each data object, i.e., a geo-textual object $o\in D$, has a point location denoted as $o.loc$ and a text description denoted as $o.kws$, which is available from a range of source \cite{cong2016querying}. For ease of discussion, we assume two-dimensional coordinates in Euclidean space to represent $o.loc$, although our proposed techniques can generalize to multi-dimensional spaces easily. The text description $o.kws$ is represented as a set of keywords, e.g., tags indicating the functionality of a POI.
We aim to process \emph{spatial keyword range queries} over $D$.

% \begin{definition}[\textbf{Geo-textual Data}]
%     Let \textit{D} be a spatial database. Each spatial object \textit{$o \in D$} could be defined as a pair $(o.loc, o.kws)$, where \textit{o.loc} is a point location, and \textit{o.kws} indicates the attached set of keywords describing the corresponding object.
% \end{definition}

% In this paper, we use a two-dimensional geographical space composed of latitude and longitude to simplify the presentation, but the method can generalize to multi-dimensional spaces. In this paper, we focus on the spatial keyword range query that puts constraints on both spatial and textual aspects.

\begin{definition}[\textbf{Spatial Keyword Range (SKR) Query}]
    An SKR query \textit{q} is represented by a pair $(q.area, q.keys)$ where \textit{q.area} and \textit{q.keys} denote a  spatial region and a set of keywords, respectively. The result of \textit{q}, $q(D) = \{o \in D \mid o.loc \ in \ q.area, o.kws \cap q.keys \neq \varnothing\}$, is a subset of \textit{D} that includes all objects within the query region containing at least one query keyword.
\end{definition}

Here, we use a rectangular query region. % and require the result objects to contain at least one query keyword. 
Our techniques can be easily extended to handle other shapes (e.g., circles) by an extra filtering after querying with the bounding rectangle. 
% Figure~\ref{fig:ex1} shows an example SKR query and its results in the area of New York on Google Maps. %Similarly, we can also filter the results for more query restrictions on the keywords.
% The spatial region could be specified as a rectangular or a circular search area while the set of query keywords could be connected by OR or AND operators. In this paper, we use rectangular search area and OR operators for the keywords, but the proposal can be easily modified to adapt to other scenarios.
% \vspace{-0.2cm}
% \begin{figure}[htb]
%   \centering
%   \includegraphics[width=0.8\linewidth]{figures/example1.pdf}
%   \vspace{-0.3cm}
%   \caption{Seven objects and the query region}
%   \label{fig:ex1}
%   \vspace{-0.3cm}
% \end{figure}

%\noindent \textbf{Example 2.1:} Figure \ref{fig:ex1} shows the location of several geo-textual objects (i.e., restaurants) and their corresponding keywords (i.e., feature tags). Given a query \textit{q} with \textit{q.area} (the red rectangle) and \textit{q.keys} = \{\textit{Chinese, Restaurant}\}, the results of \textit{q} are the red circles.
% ${o_{1}, o_{3}}$

\smallskip\noindent\textbf{Problem.} Our goal is to learn an index structure that can efficiently process SKR queries utilizing the distributions of the geo-textual data and the given query workload.

\subsection{Reinforcement Learning}
\label{sec2.3}
Reinforcement learning (RL) \cite{DBLP:journals/jair/KaelblingLM96} is a machine learning technique where an agent learns from feedback obtained from trial-and-error interactions with an  environment. It has been shown to be effective for sequential decision-making problems \cite{DBLP:journals/nature/SilverHMGSDSAPL16, DBLP:conf/aaai/ZhaoS0Y021}.

RL formulation is based on the Markov Decision Process (MDP) \cite{DBLP:journals/siamrev/Feinberg96}. An MDP has four components: a set of states \textit{S}, a set of actions \textit{A}, transition probabilities $P$, and  rewards $R$. 
%More concretely, $S$ and $A$ denote a set of states and a set of actions, respectively. 
At some state $s \in S$, an agent may 
take an action $a \in A$. As a result, there is a probability $P_{a}(s, s^{\prime})$ that the agent transits to state $s^{\prime}$, and a reward $R_{a}(s, s^{\prime})$ is received from such an action and state transition. The goal of the agent is to learn a policy function $\pi : A \times S \rightarrow [0, 1]$, i.e., the probability of taking action $a$ at state $s$, such that the cumulative reward of state transitions is maximized.
Figure \ref{fig:basic_rl} shows the basic workflow of RL. 
The environment connects to the agent via perception and action and it offers the agent the possible action choices based on the current state of the agent. 
% \jz{the other concepts have been explained above, no need to repeat. Just explain what environment is and how it works.}
The agent learns its policy based on rewards accumulated from interactions with the environment. Its learning process stops when a terminal state is reached.

\begin{figure}[htb]
    \centering
    \includegraphics[width=.55\columnwidth]{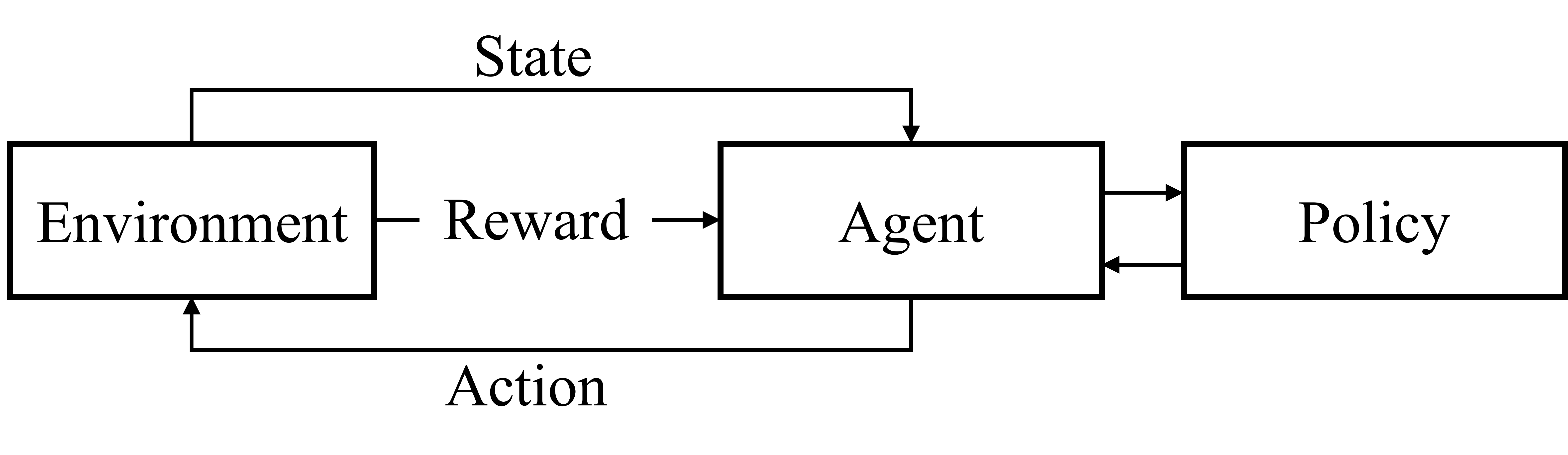}
    \vspace{-0.3cm}
    \caption{The typical RL learning framework}
    \label{fig:basic_rl}
    % \vspace{-0.3cm}
\end{figure}

%In a state, the agent chooses an action from $A$ according to the current policy, which leads to a new state. A reward is calculated via the environment based on the state transition and action taken. The agent adjusts the policy based on the reward to maximize the cumulative rewards. The process stops when a terminal state is reached.

Q-learning \cite{DBLP:journals/ml/WatkinsD92} is a commonly used value-based policy learning algorithm, which learns the value of an action given a state. %Additionally, it does not require a model for the environment, which makes it model-free and could handle any problem with stochastic transitions and rewards without any adaptation. 
It learns a policy that maximizes the value of a so-called  Q-function, $Q(s, a)$, i.e., the overall expected reward when an agent plays following the policy \cite{melo2001convergence}. State-of-the-art RL models such as Deep-Q-Network (DQN)~\cite{DBLP:journals/nature/MnihKSRVBGRFOPB15} use a deep neural network $Q(s, a;\theta)$ with parameters $\theta$ to estimate the value of the Q-function $Q(s, a)$. Once $Q(s, a;\theta)$ is trained, it can be used for decision-making for future events.

%The learned expected reward for each action at each state is stored in a lookup table called the Q-table, which is used to guide the choice of the best action at each state. When the learning starts, the agent explores the different actions at different states to learn values for the Q-table. When the Q-table is ready, i.e. training enough epochs, the agent takes the actions with the maximum reward in this table.

%However, because the size of one Q-table is equal to $n \times m$ where $n$ and $m$ denote the number of actions and states respectively, the Q-table cannot work in the environment with a large action or state space. To address this problem, Mnih et al. \cite{DBLP:journals/nature/MnihKSRVBGRFOPB15} propose Deep-Q-Network (DQN) to replace the traditional Q-table with a deep neural network $Q(s, a;\theta)$ with parameters $\theta$ to approximate the value of Q-function $Q(s, a)$.

%\vspace{-1ex}
\section{Index Overview}
\label{sec3}
\idxname consists of two parts: (1) learn an optimal data layout for the given query workload, and (2) create an index based on that layout. 
% We next discuss the two parts. 

% There are usually two 
% %types of costs, 
% costly operations, filtering and verification, for query processing on an index, where
% % . In an index, 
% objects are clustered into different partitions. 
% Those partitions irrelevant to the query are filtered out, and the objects in the remaining partitions are verified.
Query processing on an index that partitions objects into clusters typically involves two costly operations: filtering and verification. Irrelevant partitions are filtered out, and objects in the remaining partitions are verified. 
Thus, in \idxname we first aim to learn an optimal partition of the geo-textual objects, such that for the given SKR query workload we can achieve the minimum query processing costs computed using both the filtering and verification costs. However, given a large dataset, the number of possible partitions of the objects is extremely huge, and it is hard to learn an optimal partition. We propose to simplify the problem: we divide the 2D space into disjoint partitions to obtain an optimal spatial layout. 
% This could simplify this step since the partitioning can be expressed as the coordinates on the two dimensions.
We will give the detail in Section~\ref{sec4}.

%In Flood~\cite{DBLP:conf/sigmod/NathanDAK20} and Tsunami~\cite{ DBLP:journals/pvldb/DingNAK20}, they stop here and then use this layout to process queries. However, 
If there is only one layer of the index, we need to check all partitions to see if they are relevant to the query, leading to a high filtering cost. We can organize the partitions obtained in the first step into a tree structure to build the final index such that the query processing cost can be further optimized. This is a type of combinatorial optimization task. 
% it inspires us to utilize reinforcement learning to learn the hierarchy.
We propose to pack the partitions level by level, and view the one-level packing problem as a sequential decision-making process, which can solved by reinforcement learning. We will present the detail of this step in Section~\ref{sec5}.

%\idxname is a hierarchical index for SKR queries. As shown in Figure \ref{fig:framework}, \idxname\ consists of two components: (1) bottom clusters that partition the data space by jointly learning spatial and textual information, and (2) a hierarchical index that packs the clusters bottom-up through a reinforcement learning-based algorithm.
% \vspace{-0.2cm}

The framework of \idxname is shown in Figure~\ref{fig:framework}. A leaf node contains a number of objects, a minimum bounding rectangle (MBR) of the objects, and an inverted file to index the objects in this node. A non-leaf node contains pointers to its child nodes, an MBR for all child nodes, and a bitmap to index keywords appearing in its sub-tree. Here, the bitmap is used due to its much smaller size. 
% In Section \ref{sec3.1}, we first briefly describe how to generate the index based on the given dataset and workload, and then present the corresponding query processing algorithm in Section \ref{sec3.2}.

%Next, we give an overview of our index construction and query algorithms, before detailing them in the later sections.
\begin{figure}[htb]
  \centering
  \includegraphics[width=.5\linewidth]{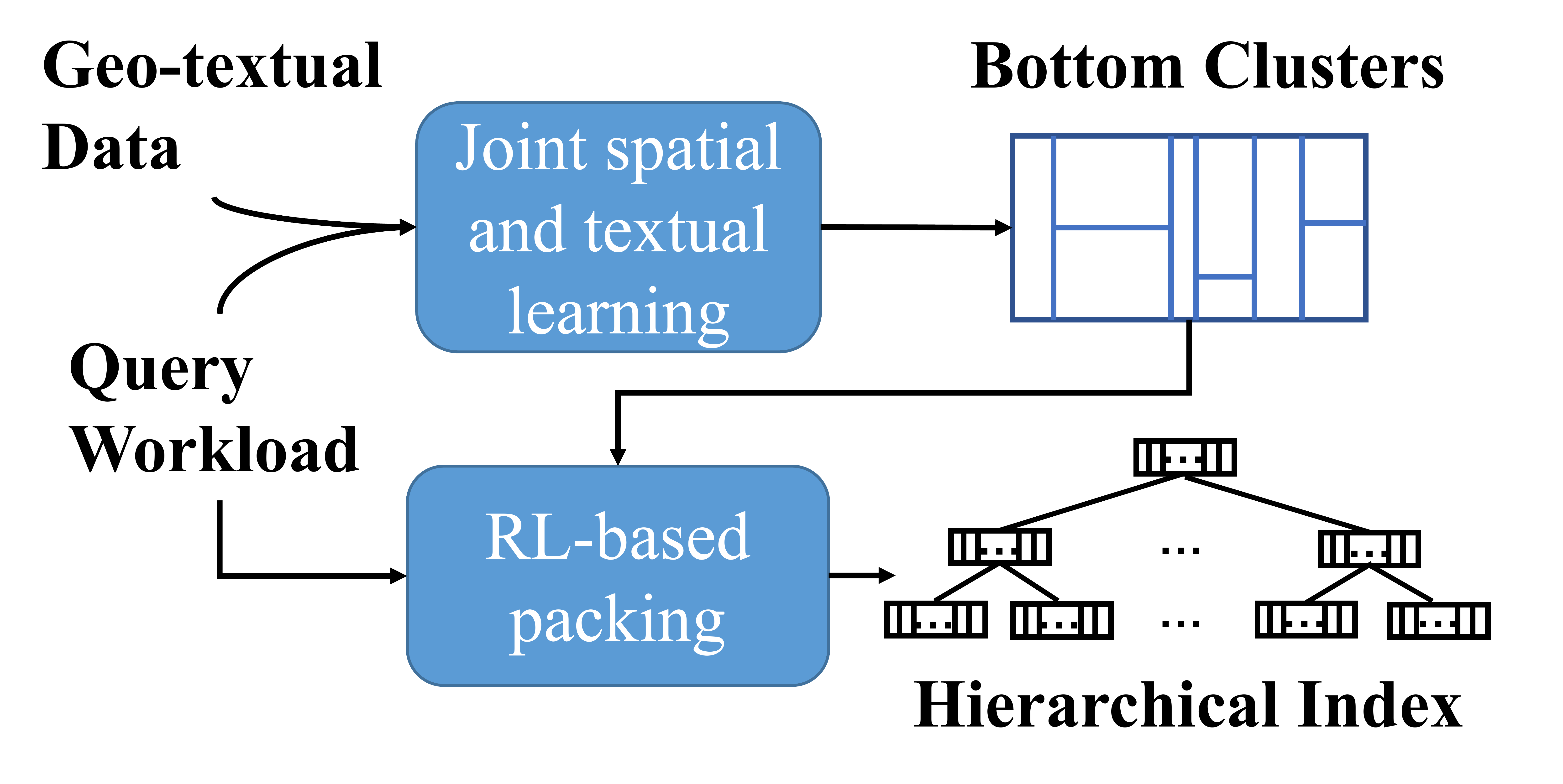}
  \vspace{-0.3cm}
  \caption{The \idxname\ framework}
  \label{fig:framework}
%   \vspace{-0.3cm}
\end{figure}
\noindent\textbf{Index construction.}
Algorithm \ref{alg:gen} summarizes the construction process of \idxname, which consists of two main steps.
Step 1 (lines~1 and~2) is to construct the bottom clusters. A high-quality partition of bottom clusters should result in a low query cost given a query workload.
We train machine learning models to approximate the Cumulative Distribution Function (CDF) of geo-textual objects. %that share the same keyword. The CDFs of different keywords are used to estimate the query cost given a query workload. 
Then, we define the cost estimation function based on the learned CDFs and make it differentiable, such that we can use stochastic gradient descent (SGD) to learn the optimal partitions. %Given the cost function, we recursively partition the dataset starting from the full data space, and stop when the query cost cannot be reduced by any further partitioning. Then, for each resulting sub-space, we compute an MBR to bound the objects inside, which forms a bottom cluster in \idxname. 
 %Finally, this step returns the bottom clusters that minimize the total cost when executing the query workload. 
%This step is detailed in Section~\ref{sec4}.
%
Step 2 (lines~3 and~4) is to construct the hierarchy of \idxname\ by a bottom-up packing of the bottom clusters. Our goal is to minimize the filtering cost when an SKQ query is processed by this hierarchy. %Motivated by previous works \cite{DBLP:conf/sigmod/ZhangCZ022, DBLP:conf/sigmod/00010SWNCB22}, 
We model the packing process in one level as an MDP and apply RL to solve the problem. Based on the given query workload, we design a reward to measure the reduction of the filtering cost given a packing decision, and we train a model to predict the reward, which can be used to guide packing the bottom clusters level by level. The RL approach can also be used to group objects into bottom clusters. However, due to a large number of objects, this will require huge amounts of states during the RL procedure, and both the training time and the performance would be unacceptable \cite{uther1998tree}.
\begin{algorithm}[tb]
\caption{\idxname\ Construction}
\label{alg:gen}
\footnotesize
\LinesNumbered
\KwIn{\textit{Q}, the workload; \textit{D}, the dataset; \textit{C}, the cost function}
\KwOut{\textit{I}, a learned \idxname index}

\textit{KM} $\leftarrow$ MLModel(\textit{D}) \tcc*{Learn CDF models of objects} %with the same keyword}

\textit{G} $\leftarrow$ SGDPartition(\textit{Q}, \textit{C}, \textit{KM}) \tcc*{\textit{G} minimizes \textit{C}}

\textit{RM} $\leftarrow$ RLTrain(\textit{G}, \textit{Q}) \tcc*{Learn a model \textit{RM} based on \textit{G}, \textit{Q}}

\textit{I} $\leftarrow$ Packing(\textit{RM}, \textit{G}) \tcc*{Group \textit{G} by using \textit{RM}}
\Return{\textit{I}}\;
\end{algorithm}

\noindent\textbf{Query processing.}
Our query algorithm is similar to those using traditional spatial keyword indexes. %, as shown in Algorithm \ref{alg:query_proc}.
% It takes a filtering-and-verifying \jz{filter-and-refine? or filtering-and-verification? or just delete this sentence} procedure.
Given a SKR query $q$, it traverses \idxname\ in a breath-first manner starting from the root. A queue $Q$ is used to manage the nodes visited. For every non-leaf node visited, only the child nodes whose MBRs overlap with $q.area$ and contain some query keywords are added to $Q$ for future verification. When a leaf node (i.e., the bottom cluster) is reached, we use its inverted file to fetch the query-relevant objects and return those in the query region.

%We first initialize a queue with the root node. In each iteration, we dequeue a node (line 5). If it is not a leaf node, the filtering step scans all of its child nodes and enqueues the nodes that intersect with the query region and include at least one query keyword (lines 7-10). When reaching the leaf level, the verifying step retrieves all the objects overlapping with the query keywords, verifies whether these objects are inside the query region, and adds the object satisfying the both constraints into the result set (lines 12-15). The algorithm stops when the queue becomes empty (line 4).
% \vspace{-0.3cm}
\eat{
\begin{algorithm}[tb]
\caption{SKR Query Process}
\label{alg:query_proc}
\small
\LinesNumbered
\KwIn{\textit{q}, the query; \textit{I}, the \idxname index}
\KwOut{\textit{O}, the query result set}

\textit{Q} $\leftarrow$ NewQueue()\;
\textit{Q}.Enqueue(\textit{I}.root)\;
\textit{O} $\leftarrow \varnothing$\; 
\While{Q is not empty}
{
    \textit{n} $\leftarrow$ \textit{Q}.Dequeue()\;
    \If{n is a non-leaf node}
    {
        \For{cn $\in$ n.children}
        {
            \If{cn.IsIntersection(q.area) \textbf{and} \\ cn.keys $\cap$ q.keys $\neq \varnothing$}
            % \tcc*{\small the child node intersects with query and includes at least one keyword}
            {
                \textit{Q}.Enqueue(\textit{cn})\;
            }
        }
    }
    \Else
    {
        \For{key $\in$ q.keys}
        {
            \For{o $\in$ n.InvertedFile(key)}
            {
                \If{o.loc in q.area}
                {
                    \textit{O}.add(\textit{o});
                }
            }
        }
    }
}
\textbf{return} $O$\;
\end{algorithm}
% \vspace{-0.3cm}
}

\section{Partitioning Optimization}
\label{sec4}
A core problem in \idxname\ construction is to form partitions (i.e., bottom clusters) for cost minimization over the query workload.
In this section, we model the query cost, define an optimal partition problem, show the NP-hardness of the problem, and present a heuristic algorithm for the problem. 

\subsection{Cost Model}
% Inspired by the preliminary experiment (Figure \ref{fig:pre_exp})
% In this section, we only focus on the leaf level, that is, we ignore the cost on the internal nodes.
% \jz{Have written the subsection. See if it is still correct.}
We model the time cost $C(q_i)$ to process an SKR query $q_i$ over a set of bottom clusters $G$ as a linear combination of (1) the cost to scan all bottom clusters to find a subset $G_i \subset G$ that overlap with $q_i.area$ and containing at least one keyword in $q_i.kws$, and (2) the cost to examine the inverted file in each cluster $c \in G_i$ and find the objects in $q_i.area$ and contain at least one query keyword. Eq.~\ref{equ:cost} formalizes the cost, where $|G|$ denotes the total number of clusters, and $\sum_{c \in G_i}|O_c|$ denotes the number of objects in $G_i$ that contains at least one query keyword. 
% The parameters $w_1$ and $w_2$ model the times to check each bottom cluster and to check each object, respectively. 
\textcolor{edit}{In particular, $w_1$ measures the time cost for checking (1) if the MBR of a cluster intersects with the query region, and (2) if the cluster contains some query keywords by scanning the textual index of the cluster. Both checks are independent of the cluster size. Meanwhile, $w_2$ measures the time cost to perform the same checks but at the object level. Following recent studies~\cite{DBLP:conf/ssd/ZhangRLZ21, DBLP:journals/pvldb/DingNAK20}, we use fixed values for these parameters.}
% We use fixed values for these parameters following recent studies.
\begin{equation}
    C(q_i) = w_1|G| + w_2\sum\nolimits_{c \in G_i}|O_c|
    \label{equ:cost}
\end{equation}

%query processing as the sum of two steps and formulate it by a linear function . Stage 1 is to scan all clusters and regard the clusters that both intersect with the query rectangle and contain at least one query keyword as the candidates. In stage 2, we verify the objects with the query keyword in each candidate to get the result of the query. The two steps contribute $w_1|G|$ and $w_2\sum_{c \in G_i}|O_c|$ to the query time, respectively, where $|G|$ denotes the total number of clusters, $G_i \subseteq G$ denotes the set of candidate clusters for query $q_i$, and $\sum_{c \in G_i}|O_c|$ denotes the number of objects inside these clusters with the query keyword. The parameters $w_1$ and $w_2$ model the time to find candidates and scan objects respectively.
% where Flood \cite{DBLP:conf/sigmod/NathanDAK20} trains a random forest regression model to predict their values. However, it turns out that this method is unnecessarily complicated. 

\noindent \textbf{Example 4.1:\label{exa:4.1}} Figure \ref{fig:trade-off} illustrates this cost function. Suppose that the red and green points represent objects that contain  keywords $k_1$ and $k_2$ respectively. There are two queries, and $q_1.kws=\{k_1\}$ and $q_2.kws=\{k_2\}$. 
If all objects are in a cluster (i.e., no partitioning,  Figure~\ref{trade-off1}), according to Eq.~\ref{equ:cost},  the two queries incur a cost of 
$2(w_1 + 4w_2) = 2w_1 + 8w_2$. This is because there is only one cluster, and each query needs to check four objects containing the query keywords (i.e., four red points for $k_1$ and four green points for $k_2$).   
\begin{figure}[b]
    \centering
    \subcaptionbox{All objects in a cluster\label{trade-off1}}{
        \vspace{-0.2cm}
        \includegraphics[width=.25\columnwidth]{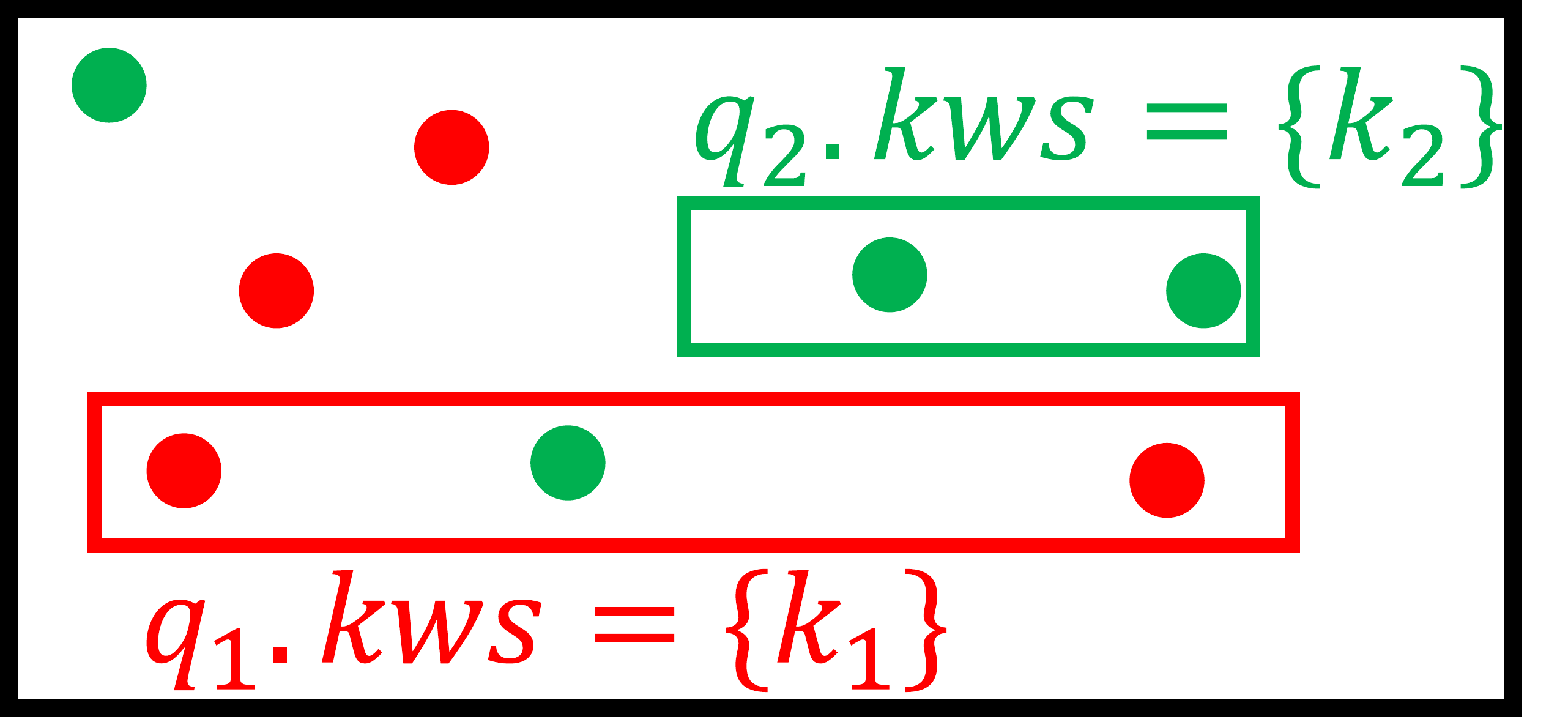}
    }
    \subcaptionbox{Objects in two clusters\label{trade-off2}}{
        \vspace{-0.2cm}
        \includegraphics[width=.25\columnwidth]{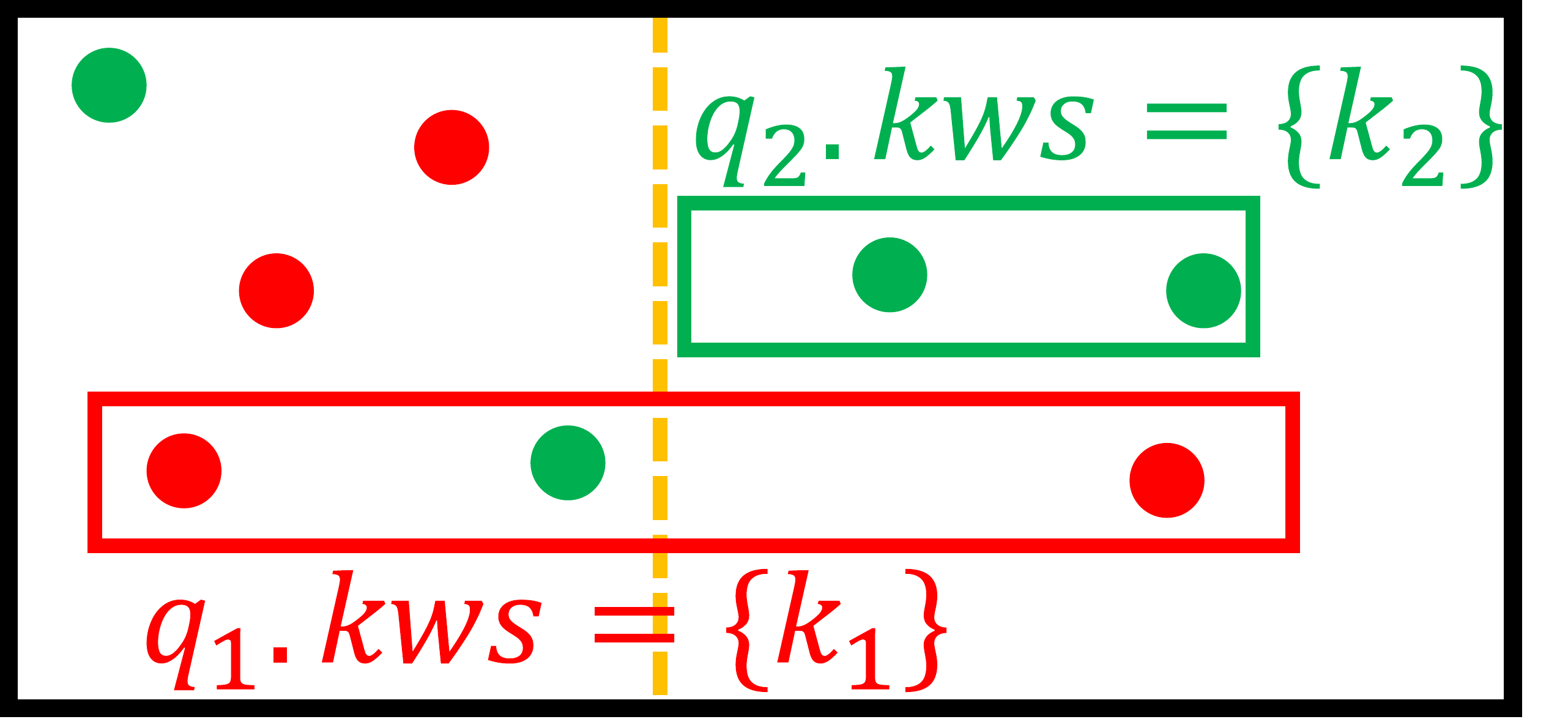}
    }
    \vspace{-0.3cm}
    \caption{Partitioning the space increases the number of clusters, which leads to a larger cluster scanning cost, but also potentially a lower object scanning cost.}
    \label{fig:trade-off}
    % \vspace{-0.3cm}
\end{figure}
If the space is split forming two clusters of five and three points each (Figure~\ref{trade-off2}), the cost of $q_{2}$ and $q_{1}$ will become $2w_1 + 2w_2$  (checking two clusters and two green points) and $2w_1 + 4w_2$ (checking two clusters and four red points), which sum up to $4w_1 + 6w_2$. The partitioning may lead to an overall lower query cost if $w_2$ dominates the cost. 
%For the latter case, we still need to scan all yellow objects since the query overlaps with both clusters, where the query cost is $2w1 + 5w2$. In total, the query cost in Figure \ref{trade-off2} is $6w1 + 10w2$.

%and the keyword of each query is shown next to  the query region (red and green rectangles). In Figure \ref{trade-off1}, the cost of these queries is equal to $(w1 + 5w_2)\times3 = 3w1 + 15w2$ since there is only one cluster, and each query needs to scan five objects during verification. If we choose to split the area like in Figure \ref{trade-off2}, the number of clusters increases, which leads to an increased cost in stage 1. An SKR query can be in one of two cases: fully enclosed in a subspace (i.e., a cluster) (the green rectangles) and partially overlapped by both sub-spaces (the red rectangle). For the former case, the cost in stage 2 is lower because only need to verify the blue objects in a cluster, where the sum of the cost is equal to $(2w1 + 3w2) + (2w1 + 2w2) = 4w1 + 5w2$. For the latter case, we still need to scan all yellow objects since the query overlaps with both clusters, where the query cost is $2w1 + 5w2$. In total, the query cost in Figure \ref{trade-off2} is $6w1 + 10w2$.

% \vspace{-0.2cm}

% Given \textit{k} nodes, we need to scan all nodes and filter out the candidates which intersect with the query rectangle and include at least one query keyword. Then, for each candidate node, we verify all objects with the query keyword and return the query result

\subsection{The Optimal Partitioning Problem}
% Indexes organize data to enable efficient query processing.
% Except the mapping-based indexes, the rest indexes usually adopt space-disjoint (grid) or data-disjoint (R-tree) partitioning algorithms. No matter which method is used, each cluster can be regarded as a set of any objects satisfying the same constraints, and 
We formulate an optimal partition problem to find a set of clusters that minimize the query cost over a given set of queries.

\begin{prob}[\textbf{Optimal Partitioning}]
    Given a dataset D = $\{o_1, o_1, \ldots,o_n\}$ and a query workload W = $\{q_1, q_2, \ldots, q_m\}$, we aim to find an optimal partition, i.e., a set of \textit{k} clusters $G = \{c_1, c_2, \ldots, c_k\}$ where (1) each object belongs to exactly one cluster, i.e., $\bigcup_{c_i \in G} c_i = D$, and $\forall_{c_i, c_j \in G}\, c_i \cap c_j = \varnothing$, and (2) the total cost, $\sum_{q_i \in W}C(q_i)$, is minimized, where $C(q_i)$ (Eq. \ref{equ:cost}) is the cost of $q_i$.
    \label{prob:partition}
\end{prob}

% \subsubsection{\textbf{Cost Model.}} 
% To tackle Problem \ref{prob:partition}, a well-defined cost function is important. As illustrated in Section \ref{sec3.1}, the query time consists of two parts, filtering and verifying.
% First, we need to scan all clusters to retrieve the candidates pool. Then we need to scan objects stored in candidates and return the qualified objects. 
% As shown in Figure \ref{fig:trade-off}, if one area is split into two sub-spaces, the number of clusters increases. In this way, processing each query requires accessing more clusters, and thus the filtering cost increases. However, if a proper split dimension and position is selected, the number of objects to scan, the verifying cost, decreases for some queries (green). Finding an optimal layout requires balancing the trade-off between filtering and verifying.

\subsubsection{\textbf{Problem Analysis.}} 
%To exploit the information given by the query workload, we can label each object with the set of queries that the object satisfies (empty if no query covers this object). 
%Without loss of generality, we assume that each object in $D$ has been queried at least once by $W$. 
%
We proceed to show that the optimal partitioning problem is NP-hard by reducing from the MaxSkip partitioning problem, which has been shown to be NP-hard \cite{DBLP:conf/sigmod/SunFKX14, DBLP:conf/sigmod/YangCWGLMLKA20}. 

\begin{theorem}
    %Given a dataset \textit{D} and a workload \textit{W}, finding the optimal partition in 
    % The optimal partitioning problem defined in 
    Problem~\ref{prob:partition} is NP-hard.
\end{theorem}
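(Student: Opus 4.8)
The plan is to establish NP-hardness by exhibiting a polynomial-time many-one reduction \emph{from} the MaxSkip partitioning problem (whose NP-hardness I may assume, citing the referenced works) \emph{to} the Optimal Partitioning problem of Problem~\ref{prob:partition}. I would work with the decision versions: a MaxSkip instance gives $n$ records, an incidence relation specifying which of $m$ workload predicates each record satisfies, a block budget, and a threshold $T$, and asks whether the records can be grouped so that the total number of blocks that cannot be skipped over the workload is at most $T$; the target Optimal Partitioning instance will ask whether some $k$-partition achieves total cost (Eq.~\ref{equ:cost}) at most a value $T'$ computed from $T$. The reduction must run in polynomial time and preserve the yes/no answer in both directions.

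For the construction I would create one geo-textual object $o_j$ per record and encode the record--predicate incidence through the textual attribute: introduce one dedicated keyword $k_i$ per workload predicate, set $q_i.keys=\{k_i\}$, and put $k_i\in o_j.kws$ exactly when record $j$ satisfies predicate $i$, so that the test $o_j.kws\cap q_i.keys\neq\varnothing$ coincides with the MaxSkip incidence. I would then fix the number of clusters $k$ to equal the MaxSkip block budget, so that the term $w_1|G|=w_1k$ contributes the \emph{same} constant $w_1 m k$ to every feasible partition and drops out when comparing costs; the discriminating part is then entirely the verification term $w_2\sum_i\sum_{c\in G_i}|O_c|$. Finally I would choose the object coordinates and the query rectangles $q_i.area$ so that the induced MBR of a cluster overlaps $q_i.area$ precisely when, in the MaxSkip instance, the corresponding block must be scanned for predicate $i$; fixing $w_2$ then lets me read off $T'$ as an affine function of $T$.

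With this setup the equivalence should follow by a cost computation in each direction, provided the gadget makes the induced MBR overlaps reproduce the MaxSkip skip pattern: any MaxSkip grouping transfers to a $k$-partition of the objects with matching cost, and conversely. The step I expect to be the main obstacle is reconciling the two cost models. MaxSkip charges once per \emph{block} that cannot be skipped, whereas Eq.~\ref{equ:cost} charges $w_2$ per keyword-matching \emph{object} ($|O_c|$) inside every scanned cluster, and moreover the scanning predicate here is geometric (MBR overlap) rather than ``the block contains a relevant record.'' To make these coincide I would collapse one spatial dimension (place all objects on a line) and space the coordinates so that a cluster's induced interval is disjoint from $q_i.area$ unless it genuinely holds a record relevant to $q_i$, ruling out the spurious overlaps that a bounding box can otherwise create; the object-level charge $|O_c|$ must then be shown to aggregate to exactly the MaxSkip objective, which is where the gadget has to be argued carefully (for instance by forcing at most one relevant object per block per predicate, or by absorbing the per-object multiplicities into $T'$). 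Verifying that the transformation is polynomial in $n$ and $m$, and that fixing $k$ legitimately neutralizes the $w_1$ term, would complete the argument.
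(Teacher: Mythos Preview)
Your high-level reduction---from MaxSkip, encoding the predicate--record incidence via one dedicated keyword per workload query---is exactly the paper's. The significant divergence is in how you handle the spatial component. You propose to engineer object coordinates and query rectangles so that a cluster's induced MBR overlaps $q_i.area$ iff the corresponding MaxSkip block must be scanned for predicate $i$, and you correctly flag this as delicate (spurious bounding-box overlaps, one-dimensional placement, and so on). The paper sidesteps all of this with a single move: it sets every $q_i.area$ to be the bounding rectangle $L$ of the \emph{entire} data space. Then every cluster trivially overlaps every query region, and membership of a cluster in $G_i$ is decided purely by the keyword test, which by construction coincides with ``the block contains a tuple satisfying predicate $i$.'' This collapses your geometric gadget to nothing; you should adopt it.

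The obstacle you single out---that MaxSkip charges $|p_i|$ per accessed block while Eq.~\ref{equ:cost} charges $|O_c|$, the number of keyword-matching objects in the cluster---is genuine. The paper's argument is loose on exactly this point: it establishes that a block being skippable for a MaxSkip query corresponds to the mapped cluster being irrelevant to the mapped SKR query, but it does not reconcile the two objective values term by term. So here your plan is actually more scrupulous than the paper's proof. Your suggested remedies (forcing at most one relevant object per predicate per block, or absorbing the discrepancy into the threshold $T'$) would need to be made concrete to close the argument fully. One further mismatch to watch: MaxSkip as stated is parameterized by a minimum block size $b$, not by a block count, so your device of fixing $k$ to neutralize the $w_1|G|$ term does not align directly with the source problem; the paper does not fix $|G|$ and leaves the $w_1$ term unaddressed as well.
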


\begin{proof}
    We first briefly introduce the MaxSkip partitioning problem, which arises from big data analytics systems. Let \textit{Q} be a collection of queries. Consider a set of partitions $P = \{p_1, p_2,..., p_k\}$ where each partition is a collection of tuples, and the size of each partition is larger than a minimum size bound $b$. 
    A big data analytics system can prune a partition $p_i$ if none of the tuples in this partition satisfies a query $q\in Q$ when processing $q$. 
    % The big data analytics system defines a cost function for a partition $p_i$ (Eq. \ref{equ:np1}) as the total number of tuples that can be skipped when executing all the $|W|$ queries, where $|p_{i}|$ and $n$ denote the number of tuples in $p_i$ and the number of skipped queries for $p_{i}$, respectively.
    A cost function (Eq.~\ref{equ:np1}) can thus be defined on each partition, which denotes the number of tuples that can be skipped for processing all queries in $Q$, if such a partition is formed. Here, $|p_{i}|$ denotes the number of tuples in partition $p_i$, and $Q_i$ denotes the set of queries that can be processed without accessing $p_i$.
    \begin{equation}
    \label{equ:np1}
        Cost(p_i) = |Q_i||p_i|, where\ Q_i\subseteq Q
    \end{equation}
   The MaxSkip partitioning problem aims to find the optimal partitions $P_{opt}$ maximizing the total number of tuples that can be skipped when executing \textit{Q}, i.e., $P_{opt} = \arg\max_{P}\sum\nolimits_{p_{i} \in P}Cost(p_{i})$.
    %\begin{equation}
    %    \label{equ:np2}
    %    \arg\max_{P} \sum\nolimits_{p_{i} \in P} C(p_{i})
    %\end{equation}
    
    % Given an arbitrary set of MaxSkip partitions $P$, we construct the set of clusters $G$ as follows. According to Eq. \ref{equ:cost}, the cost in stage 1 limits the number of clusters, i.e. the minimum size of each cluster, and only the cost in stage 2 is related to each cluster. Suppose that the range of each query in $W$ is the entire data space. Let $K_{i}$ denote the keywords set of cluster $c_{i}$. If a cluster includes at least one keyword of query $q_{j}$, that is, $K_{i} \cap q_{j}.keys \neq \varnothing$, we say that the cluster $c_{i}$ cannot be pruned when processing the query $q_{j}$. Otherwise, the objects within this cluster can be skipped.
    
    %\jz{The logic flow here is broken. Where do the $m$ queries come from? How come $C(G)$ does not need the $|G|$ term any more?}
    We map one instance of the MaxSkip partitioning problem to an instance of our optimal partitioning problem as below: for each query $q_w$ in $Q$, we create a keyword $d_w$ and form a SKR query $q=\{q.L, q.d_w\}$ where $q.L$ is the MBR of the entire space. For each tuple $t_m$, we create a geo-textual object $o_m$ such that its location is in $L$, and its keywords correspond to the queries it can satisfy in $Q$. 
    
    Given this mapping, in the MaxSkip partitioning problem, for a partition $p_i\in P$, if a set of queries $Q_i\subseteq Q$ can be skipped when processing $p_i$, we can get a cluster $c_i$ in our problem and a set of SKR queries $R_i\in W$ that are irrelevant to $c_i$ (since no geo-textual objects in $c_i$ contains a keyword in $R_i$). Hence, if we could find an optimal partition that maximize the total number of tuples when running queries in $W$, it is equivalent that we can find an optimal partitioning method that minimizes the cost in our problem. Since the mapping is of linear time, we complete the proof. 
\end{proof}

\subsection{A Heuristic Partition Algorithm}
As pointed out by Christoforaki et al.~\cite{DBLP:conf/cikm/ChristoforakiHDMS11}, a query region is usually much smaller than the data space such that many data objects are not queried by the workload $W$. Hence, to fully utilize the query workload for partitioning, a data based partitioning method is not suitable to solve our problem.
%Hence, it is impractical to solve our optimal partitioning problem using a data partitioning algorithm.(\jz{We mentioned earlier the each object is queried at least once by $W$})  
Instead, we employ a space-disjoint partitioning approach and propose a heuristic partition algorithm.

%Our partition algorithm is motivated by the Flood \cite{DBLP:conf/sigmod/NathanDAK20} partition algorithm in learning a grid-based partition that is optimized towards a set of queries.
%Flood~\cite{DBLP:conf/sigmod/NathanDAK20} partitions the space into grid cells, but it is not directly applicable to geo-textual data for two reasons. First, Flood splits the data along only one dimension in the 2D geographical space, which limits its capability to capture the complex data distribution. Second, the high dimensionality of keywords makes it impractical to treat each keyword as one dimension in Flood. 
% Nathan et al. \cite{DBLP:conf/sigmod/NathanDAK20} propose Flood, a grid-based learned index for multi-dimensional data. However, there are some inherent drawbacks of this index. In terms of \textit{d}-dimensional data, this index chooses one dimension as the sort dimension while the optimal layout is generated based on the rest of $d - 1$ dimensions. 
%
Our index aims to learn splitting the spatial data space along different dimensions and coordinate values. 
%Inspired by previous studies \cite{DBLP:journals/cacm/Bentley75, DBLP:conf/cikm/Hoang-VuVF16}, we propose a space partitioning algorithm. 
Our partition algorithm starts by initializing one single partition that covers the full data space (which corresponds to a cluster that contains the full dataset). At this point, each query contributes the same $w_1 + |D| \cdot w_2$ cost to the overall cost of the query workload $W$. Then, we find a split dimension $d_s$ and a split value $v_s$ that yield the largest reduction in the query cost. We use the resulting $d_s$ and $v_s$ to split the data space into two sub-spaces and update the total query cost. For each sub-space, we repeat the splitting process recursively until the total query cost cannot be reduced
% , the sub-space is small enough,
or some pre-defined conditions, e.g., a minimum number of queries intersecting with the sub-space, are met. When the algorithm terminates, we use the MBR of the data objects in each resultant sub-space as a bottom cluster in \idxname. 

% TODO - approximation ratio

\subsubsection{\textbf{Learning the Split Dimension and Value}}
\label{sec4.3.1}
A na\"ive method to find the value to make a split uses a brute-force search. Let $V_d$ ($d \in \{x, y\}$) be a sorted list of distinct object coordinate values along dimension $d$ in the current (sub-)space to be partitioned. Except for the first and the last values, every value in $V_d$ can be used to split the space into two sub-spaces. Examining all $|V_d|$ values takes  $\mathcal{O}\big(f\cdot(|V_{x}|+|V_{y}|)\big)$ where $f$ denotes the time cost to split on a value and run queries based on such a splitting
% (\jz{detailed later?}).
% , and $t$ denotes the number of iterations \jz{What iterations? Also $n$ has been used in Eq. 2 for a different meaning already. Should use a different symbol}.
%number of iterations and the time complexity of finding the optimal split in one iteration respectively. 
This approach becomes impractical for large datasets with a large value of $|V_x| + |V_{y}|$.

Motivated by the recent success of machine learning in solving complex problems~\cite{DBLP:conf/aaai/PratesALLV19, DBLP:journals/eor/BengioLP21},
% Thus, we learn machine learning models to predict the verifying cost without actually splitting the data space, and then compare the maximum reduction in the verifying cost with the increment of the filtering cost to decide whether we need to split the space.
we propose a learning-based method to predict the query costs given a split dimension and a split value, such that the optimal split can be approximated by minimizing the predicted query cost with high efficiency. 
At the core of the query cost prediction problem of a split is to 
(1)~predict the number of resultant sub-spaces overlapping with the query, and 
% (1)~judge if resultant sub-spaces are overlapped with the query, and 
(2)~predict the number of objects that contain any of the query keywords and reside in the resultant sub-spaces. 

To address the first prediction problem, we use the indicator function \cite{kleene1952introduction} to denote whether a sub-space overlaps with the query region. For example, let $[q_{x_{b}}, q_{x_{u}}]$ be the \textit{x} range of query \textit{q} and $p_x$ be a split value along dimension \textit{x}. The indicator functions $\vmathbb{1}(p_x \ge q_{x_{b}})$ and $\vmathbb{1}(p_x < q_{x_{u}})$ are used to decide whether $q$ intersects with the resultant left and right sub-spaces, respectively. If a sub-space has an indicator function value of 1, we need to further predict the number of query result objects within the sub-space. Otherwise, we can ignore the sub-space when computing the query cost. The indicator function is not differentiable, and machine learning methods such as gradient descent cannot be applied to solve a split value optimization problem formulated by such functions. 
As such, we use the sigmoid function~\cite{DBLP:conf/iwann/HanM95}, $\sigma(\beta x)$ with $\beta = 3$, to approximate the indicator function as does in prior work~\cite{DBLP:journals/tnn/ChenTY14, cao2020sigmoidal}, e.g., $\vmathbb{1}(p_x \ge q_{x_b}) = \vmathbb{1}(p_{x} - q_{x_b} \ge 0) \approx \sigma(3(p_x - q_{x_b}))$.

To address the second prediction problem, we follow the idea in recent studies~\cite{DBLP:conf/sigmod/KraskaBCDP18, DBLP:conf/sigmod/NathanDAK20, DBLP:conf/sigmod/0001ZC21} that learn the Cumulative Distribution Function (CDF) to estimate the density of objects in a data space. Our goal is to learn the joint CDF $F_{X, Y}(x, y)$ of two variables \textit{X} and \textit{Y}, corresponding to the spatial coordinates in two dimensions. The learned CDF can quickly estimate the number of objects in a rectangular region, i.e., a sub-space.
\textcolor{edit}{To accelerate the CDF learning, we assume that \textit{X} and \textit{Y} are independent,  following a previous study \cite{DBLP:conf/sigmod/NathanDAK20}. Thus, 
we can decompose the joint CDF into the product of two marginal CDFs, $F_X(x)$ and $F_Y(y)$, as shown in Eq. \ref{equ:jointcdf}.}
\begin{equation}
    F_{X, Y}(x, y) = P(X \le x, Y \le y) = F_X(x)F_Y(y)
    \label{equ:jointcdf}
\end{equation}

For ease of presentation, we use F(x) and F(y) to denote the marginal CDFs of \textit{X} and \textit{Y} in the rest of the paper, respectively.
% $F(x)$ and $F(y)$ are used as shorthand throughout the paper.

\begin{lemma}
    Given a two-dimensional object $(x, y)$ and a rectangular region $[(x_b, y_b), (x_u, y_u)]$ where $(x_b, y_b)$ and $(x_u, y_u)$ denote the bottom-left and the upper-right points of the rectangular region, respectively, the probability of an object residing in the area is:
    \begin{equation*}
        P(x_b \le x \le x_u, y_b \le y \le y_u)=\big(F(x_u)-F(x_b)\big)\big(F(y_u)-F(y_b)\big)
        \label{equ:lemma1}
    \end{equation*}
\end{lemma}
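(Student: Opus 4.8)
The plan is to derive the claim directly from the inclusion--exclusion identity for the joint CDF and then factor the result using the independence decomposition of Eq.~\ref{equ:jointcdf}. First I would rewrite the event $\{x_b \le x \le x_u,\, y_b \le y \le y_u\}$ as the half-open quadrant $(-\infty, x_u] \times (-\infty, y_u]$ with its three overlapping ``lower'' quadrants removed, which yields the standard two-dimensional corner identity
\begin{equation*}
  P(x_b \le x \le x_u, y_b \le y \le y_u) = F_{X, Y}(x_u, y_u) - F_{X, Y}(x_b, y_u) - F_{X, Y}(x_u, y_b) + F_{X, Y}(x_b, y_b).
\end{equation*}
I would then substitute $F_{X, Y}(x, y) = F(x)F(y)$ from Eq.~\ref{equ:jointcdf} into each of the four terms and factor the resulting expression $F(x_u)F(y_u) - F(x_b)F(y_u) - F(x_u)F(y_b) + F(x_b)F(y_b)$ as the product $\big(F(x_u) - F(x_b)\big)\big(F(y_u) - F(y_b)\big)$, which is exactly the claimed form.

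As an alternative route that makes the role of independence more transparent, I would instead observe that, because $X$ and $Y$ are assumed independent, the joint event factors into the product of the two one-dimensional events, so $P = P(x_b \le X \le x_u)\,P(y_b \le Y \le y_u)$. Each one-dimensional factor then follows from the elementary CDF identity $P(a \le Z \le b) = F(b) - F(a)$, which I would justify by writing $F(b) = P(Z \le a) + P(a < Z \le b)$ and rearranging. Both routes give the same answer; the first leans only on Eq.~\ref{equ:jointcdf} (into which independence is already baked), while the second invokes independence explicitly.

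The only point that needs a word of care is boundary handling: the identity $P(a \le Z \le b) = F(b) - F(a)$ is exact for a half-open interval and differs from the fully closed version only by the point mass $P(Z = a)$. Since we treat the coordinates as continuous and the learned CDFs $F(x)$ and $F(y)$ as continuous functions, these point masses vanish, so the rectangle carries the same probability whether or not its sides are included, and the distinction between $\le$ and $<$ is immaterial. I would state this explicitly. Beyond this remark the argument is a routine computation, so I do not anticipate any genuine obstacle.
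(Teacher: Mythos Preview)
Your proposal is correct and your first route is essentially identical to the paper's proof: it too writes the rectangle probability via the four-corner inclusion--exclusion identity for the joint CDF and then substitutes the factorization from Eq.~\ref{equ:jointcdf}. Your alternative route and the boundary remark are extra care beyond what the paper provides, but the core argument matches.
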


\begin{proof}
    According to the definition of CDF, we have $P(x_b \le x \le x_u, y_b \le y \le y_u)=F(x_u, y_u) - F(x_b, y_u) - F(x_u, y_b) + F(x_b, y_b)$. Due to the independence assumption, we can decompose each joint CDF based on Eq. \ref{equ:jointcdf}, and obtain the equation in Lemma \ref{equ:lemma1}.
\end{proof}

The CDF in Eq. \ref{equ:jointcdf} only estimates the spatial density of  objects without considering the keyword distribution. To solve this issue, we learn the marginal CDFs, i.e., $F_{k}(x)$ and $F_{k}(y)$, for each keyword $k$. The choice of CDF models will be detailed in Section~\ref{sec6}.

With the CDF models and the sigmoid functions, we formulate the cost for processing a query $q$ with region $[(x_b, y_b), (x_u, y_u)]$ after splitting on dimensions $x$ or $y$ in Eq. \ref{equ:loss}. 
\begin{equation}
\label{equ:loss}
    \begin{aligned}
    L_{q}(x)=\sigma \big(3(x-x_{b})\big) \left|O_{1}\right| + \sigma \big(3(x_{u}-x)\big) \left|O_{2}\right|
    \\
    L_{q}(y)=\sigma \big(3(y-y_{b})\big) \left|O_{1}\right| + \sigma \big(3(y_{u}-y)\big) \left|O_{2}\right|
    \end{aligned}
\end{equation}
where $\left|O_{1}\right|$ and $\left|O_{2}\right|$ denote the number of objects containing the query keywords in the two resulting sub-spaces, respectively, which are estimated through the learned keyword-based marginal CDF models. 
The sigmoid functions (e.g., $\sigma \big(3(x-x_{b})\big)$ and $\sigma \big(3(x_{u}-x)\big)$) predicts whether the query intersects the two resultant sub-spaces, respectively. We apply stochastic gradient descent (SGD) to minimize $L_{q}(x)$ and $L_{q}(y)$ using the query workload as the training data.

\subsubsection{\textbf{Bottom Cluster Generation}}
When splitting a data space, there are both profit and loss in the query costs. The profit is gained by the reduced number of objects to be checked  while the loss reflects an increased number of sub-spaces to be checked. In Example \hyperref[exa:4.1]{4.1}, the profit and loss are equal to $2w_2$ and $2w_1$, respectively. The difference between the profit and the loss determines whether a split is needed, and where the split should be made.
% Eq. \ref{equ:loss} shows the loss functions in our training process.

Algorithm \ref{algo:train} summarizes our bottom cluster generation algorithm. The algorithm takes the query workload $W$ and the data space $S$ enclosing all geo-textual objects as the input, and it aims to return a set of clusters that minimize the cost of executing all the queries in $W$. %To achieve consistent performance on workloads with different distributions, 
\begin{algorithm}[tb]
\caption{Bottom Clusters Generation}
\label{algo:train}
\footnotesize
\KwIn{\textit{W}, the query workload; \textit{S}, the data space}
\KwOut{\textit{G}, the set of clusters}
\SetKwFunction{init}{InitializeCost}
\SetKwFunction{find}{FindOptimalPartition}
\SetKwProg{Fp}{Function}{:}{}
\textit{Q} $\leftarrow$ NewPriorityQueue()\;
\textit{Q}.Enqueue(\textit{S})\;
\textit{G} $\leftarrow \varnothing$\; 
\While{Q is not empty}
{
    $s \leftarrow$ \textit{Q}.Dequeue()\;
    $C_{s} \leftarrow$ InitializeObjectCheckingCost($s$)\;
    $opt_{x} \leftarrow$ FindOptimalPartition($s$, x)\;
    $opt_{y} \leftarrow$ FindOptimalPartition($s$, y)\;
    $best \leftarrow$ $opt_{x}$ if $opt_{x}$.cost $\le$ $opt_{y}$.cost else $opt_{y}$\;
    
    \If{$C_{s} - w_2 \cdot best.cost$ > $w_1\cdot|W|$ }
    % \tcc*{\small Compare the profit and the loss}
    {
        $s_{1}, s_{2} \leftarrow$ GenerateSubSpace($best$.dim, $best$.val)\;
        \textit{Q}.Enqueue($s_{1}$)\; 
        \textit{Q}.Enqueue($s_{2}$)\;
    }
    \Else
    {
        \textit{c} $\leftarrow$ GenerateMBR($s$)\;
        \textit{G}.add(\textit{c})\;
    }
}
\textbf{return} $G$\;
% \BlankLine
% \Fp{\init{$S$}}
% {
%     $C_{v}^{S}$ = 0\;
%     \For{\textit{q} $in$ S.query}
%     {
%         $C_{v}^{S}$ += $|O_{S}^{\prime}|$\;
%     }
%     \textbf{return} $C_{v}^{S}$\;
% }
\BlankLine
\Fp{\find{$s, d$}}
{
    %$opt \leftarrow$ NewMap() \tcc*{\small a map to keep information}
    $opt.dim$ $\leftarrow$ \textit{d} \tcc*{a map structure to record  optimal split result}
    % $v_{init}$ = avg.(\textit{S}.min(\textit{d}), \textit{S}.max(\textit{d}))\;
    % $op_{cost}, op_{val}$ = SGDLearn($v_{init}$, \textit{S}.query)\; 
    $cost, val \leftarrow$ SGDLearn(\textit{s}.queries) \tcc*{ SGDLearn() returns the optimal cost and split value}
    $opt.cost$ $\leftarrow$ $cost$\;
    $opt.val$ $\leftarrow$ $val$\;
    \textbf{return} $opt$\;
}
\end{algorithm}
The algorithm maintains a priority queue $Q$ of sub-spaces to the examined, which are prioritized  by their numbers of intersecting queries. At the start, $Q$ contains only the input data space $S$ (lines 1 and 2). Then, we iterate  through the sub-spaces in $Q$. Let the current sub-space to be split be $s$. We set the initial object checking the cost of $s$ to be $|O_s| \cdot |W_s| \cdot w_2$ where $|O_s|$ and $|W_s|$ denote the number of objects in $s$ and the number of queries intersecting with $s$, respectively (lines 5 and 6). Then, we find the optimal split along both $x$- and $y$-dimensions, respectively (lines 7 and 8), and we use the one with a smaller object checking cost as our candidate split (line 9). If the reduction in the object checking cost from $C_s$ outweighs the increase in cluster checking cost, i.e., $w_1 \cdot |W|$ (every split adds a cluster to be checked against $|W|$ queries), we execute the split and enqueue  the resultant sub-spaces (lines 10 to 13). Otherwise, $s$ is finalized, and we generate the MBR for the data objects in $s$ and use it as a bottom cluster (lines 14 to 16). The process terminates when $Q$ becomes empty (line 4).

When finding the optimal splitting value along a dimension (lines 18 to 24), we apply SGD~\cite{bottou-98x} to minimize Eq.~\ref{equ:loss} (line 21). Here,  we use a map structure $opt$ to record the new object checking cost, the dimension, and the value of a learned optimal split.

The time complexity of each iteration in Algorithm \ref{algo:train} is $\mathcal{O}\big(h \cdot (E_{x} + E_{y})\big)$ where $h$ and $E_{d}, d \in \{x, y\}$ denote the time complexity of SGD per iteration and the number of epoches respectively. 
% {\color{red}{[KQ: what is $n'$??]}}
Recall that the time complexity of the brute-force algorithm is $\mathcal{O}\big(f\cdot(|V_{x}|+|V_{y}|)\big)$. We note that $f$ is larger than $h$ because our  heuristic algorithm does not need to run a split to calculate a query cost (while the brute-force algorithm does). $E_{x}$ and $E_{y}$ depend on the algorithm configurations, such as the learning rate and the number of model parameters. They are usually much smaller than $V_{x}$ and $V_{y}$, respectively.
% The value of $t^{\prime}$ is similar to that of $t$, and their values are only related to the given dataset and workload.
Therefore, the time complexity of our heuristic algorithm is lower than that of the brute-force algorithm.

\section{Bottom-up Packing}
\label{sec5}
% Compared with the traditional index \cite{DBLP:journals/cacm/Bentley75, DBLP:conf/vldb/HenrichSW89, DBLP:conf/sigmod/Robinson81, DBLP:journals/tods/LometS90, DBLP:conf/ssdbm/NamS04},
The bottom clusters generated from Section \ref{sec4} can be used as a flat and coarse-grained index. To further improve the pruning power of our index, we build a hierarchical structure over the clusters. %\cite{DBLP:journals/jair/KaelblingLM96}. 
%We only consider the filtering cost in this section as verification only happens on the bottom clusters.

\subsection{Design Considerations}
As shown in Section \ref{sec3}, when executing a query with a hierarchical index, we traverse  all qualified nodes until reaching the leaf nodes. An internal node and its descendants can be pruned if it does not intersect with the query or include any query keyword. We build our hierarchical index level by level, i.e., recursively packing the clusters to maximize the reduction in the pruning cost at each level. Here we omit the object checking costs as they are only triggered on the bottom clusters.

% An important factor influencing the filtering cost is the fanout (i.e., number of child nodes) of each node. If the fanout is too large, each nodes has many child nodes to be processed, and the aggregated MBR and keyword list may be too large to help prune the node for query processing. On the other hand, if the fanout is too small, the resultant structure may have a large height, which may also require a large query cost.
% % width of each level. On the one hand, if one level has too few nodes, the filtering capability cannot be enhanced since each node might contain bottom nodes intersecting various queries. On the other hand, if too many nodes in one level, the cost at this level increases since we need to traverse all of them. Additionally, building one hierarchical index also needs to consider the height of the index because adding one more level usually means we need more time to reach the leaf level when processing the query. 
% Hence, in the bottom-up construction process, we need to take into account both the fanout and the height of the structure.

\subsubsection{\textbf{Optimization Goal}}
The query time spent on node pruning can directly reflect the pruning capability of a hierarchical index. Measuring the query time, however, needs to run all queries in the query workload on an existing index, which is not suitable to be used as an optimization metric of our bottom-up packing problem. We observe that the pruning time cost is proportional to the number of accessed nodes for the workload such that it can be used to evaluate the pruning capability. To adopt this criterion, we associate each bottom cluster $c_{i}$ with a query label set denoted by  $c_{i}.l$. If a cluster $c_{i}$ intersects with a training query $q_{j}$ and its textual document includes any keyword of $q_{j}$, we add $q_{j}$ to the query label set of this cluster, that is, $c_{i}.l = \{q_{j}\}$. %After update, we merge the workload information into the clusters. 
During packing, the labels of a node in an upper level (an ``upper node'' for short hereafter) can be easily generated by merging all labels of its sub-tree. 

\subsubsection{\textbf{Bottom-up Packing Problem}}
% To further reduce the filtering cost, this section aims to build a hierarchical index level by level.
Next, we define the bottom-up packing problem to minimize the number of accessed nodes.

\begin{prob}[\textbf{Bottom-up Packing}]
\label{prob3}
    Given a query workload W and the set of bottom clusters $G$, the bottom-up packing  process aims to generate a hierarchical index $I$ that minimizes the number of accessed nodes to process the queries in $W$.
\end{prob}

Given the leaf nodes, i.e., bottom clusters, we can build a hierarchical index using techniques from traditional indexes such as the CDIR-Tree. However, those techniques only consider the underlying data distribution, which might lead to worse performance as shown in the later experiment.
% , and it is hard to select the optimal parameters in such methods.
% \jz{Not an expert in IR-Tree but a bit surprising to me that they would have more hyper-parameters that those in neural networks}. 

To address these issues, and motivated by the strong performance of query-aware structures learned by reinforcement learning (RL), we propose an RL-based algorithm to learn a packing. 
%
%Hence, we propose a learning-based algorithm to address these issues.
We construct our index level by level with a bottom-up packing process, and we model the  packing problem at each level as a sequential decision-making process, i.e., a Markov decision process, which makes it solvable by RL. To pack each level, the nodes from a lower level to be packed (``bottom nodes'' hereafter) are processed  sequentially, and we find an upper node to host each bottom node until there are no more bottom nodes. After the packing process of one level stops, the non-empty upper nodes become the new bottom nodes to be packed for the next level. 

\subsection{Packing with Reinforcement Learning}
We propose an RL-based packing algorithm following the idea of the  Deep-Q-Network (DQN)~\cite{DBLP:journals/nature/MnihKSRVBGRFOPB15} to learn the optimal policy (i.e., a packing strategy) for solving the packing problem (Problem \ref{prob3}). To form a tree structure, we require that the number of upper nodes does not exceed that of the bottom nodes. 
%Note that in our index, we directly connect a node with its child \jz{grandchild?} node if it has only one child node (e.g., the dash line in Figure \ref{fig:hie3}). Hence, the number of upper nodes should not exceed that of the bottom nodes. 
% Figure \ref{fig:rlframe} presents a clear description of our RL framework.

There are two main challenges in our packing problem.
% \begin{figure}[tb]
%     \centering
%     \includegraphics[width=.8\linewidth]{}
%     \vspace{-0.3cm}
%     \caption{The RL framework in \idxname}
%     \label{fig:rlframe}
% \end{figure}

\begin{enumerate}
    \item To use a neural network to estimate an expected reward (e.g., the reduction in the number of node accesses), the states (e.g., the relation of two levels resulting from a packing decision) need to be represented by a fixed-length vector. However, there are many different possibilities of bottom nodes, and it is challenging to generate such a vector to encode the current packing of bottom nodes effectively.
    
    \item Every time a node is added to the structure, it may lead to a reduced reward (i.e., more node accesses). However, it is necessary to add nodes to the structure continuously such that the structure can be built up. How to adapt the cost model for this case is another challenge. 
    % \item The number of accessed nodes significantly increases when inserting more bottom nodes. So it needs to be adapted to our sequential decision making problem.
\end{enumerate}

To address these challenges, we formulate an MDP process for our packing problem as follows:
% introduce a novel definition of states and rewards in our problem. We proceed to present the basic traits of RL.
\noindent \textbf{States.} A state needs to capture the status of a (partially packed) level in an index structure. 
%needed for making a decision. 
% According to the above analysis, the number of accessed nodes can evaluate the filtering capability, which can be computed using Eq. \ref{equ:criterion}. We observe that the upcoming bottom node is also closely related to the action choice. 
%Inspired by a prior work \cite{DBLP:conf/aaai/ZhaoS0Y021}, the state representation in our model consists of two parts: 1) the statistics of upper nodes and 2) the query statistics of a bottom node to be inserted.
%
\begin{figure}[tb]
    \centering
    \includegraphics[width=.55\linewidth]{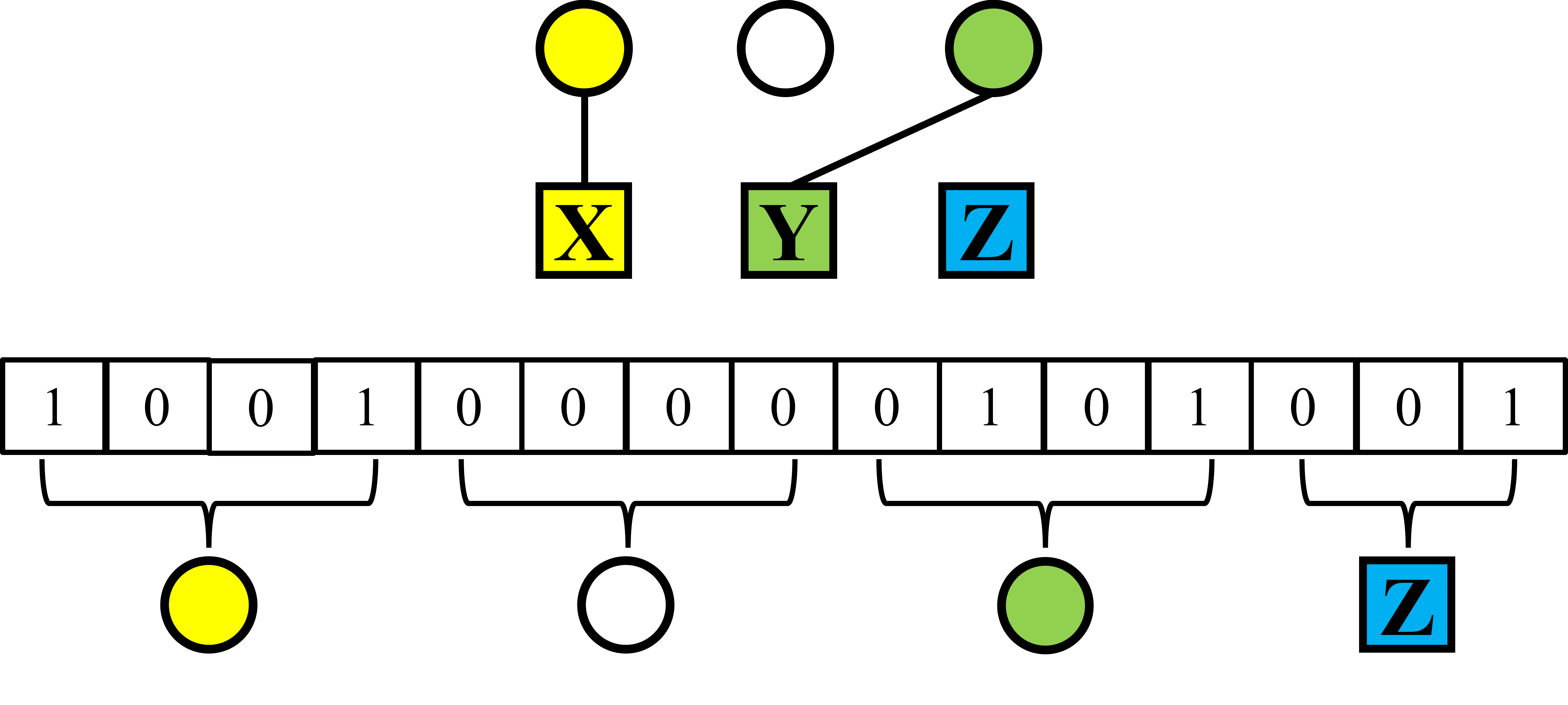}
    \vspace{-0.3cm}
    \caption{An example of the state representation}
    \label{fig:rlstate}
% \vspace{-0.3cm}
\end{figure}
As mentioned above, the number of bottom nodes bounds that of the upper nodes. Hence, we initialize $N$ empty upper nodes given $N$ bottom nodes. Consider $m$ queries are used in the learning process. 
Each of the $N$ upper nodes to be constructed takes an $(m+1)$-dimensional vector representation. The first $m$ dimensions denote whether the node is labeled by each of the $m$ queries, and the last dimension is a count on the number of bottom nodes to be connected to this node. The $N$ upper nodes together form an $(m+1)\cdot N$-dimensional vector. We further append $m$ dimensions to the vector to represent the query label of the next bottom node to be connected to (i.e., packed into) one of the upper nodes. Overall, these form an $\big((m+1)\cdot N +m\big)$-dimensional vector representing a state. 
Figure \ref{fig:rlstate} shows an example, assuming $m=3$ queries and $N=3$ bottom nodes ($X$, $Y$, and $Z$). The circles denote upper nodes, and the colors denote different query labels. 

%+m$ dimensional vector.
% If one node intersects one query and includes at least one corresponding query keyword, we say that the node is attached with the label of this query, and the query labels of one upper node can be regarded as a set of all the labels of its child nodes. 
% The first part of the state representation concatenates the features of upper nodes. The first $m$ bits of the encoding of each upper node indicate whether the node includes the corresponding query label, while the final bit presents the number of bottom nodes inside this upper node. The second part describes the query labels of the upcoming bottom node whose length is also equal to $m$. 
%The $N$ nodes together take the first $(m+1) \times N$ dimensions of the state vector, while the final m dimensions represent the query label of the next bottom node to be considered for insertion.

% Figure \ref{fig:rlstate} presents an example of our state representation. In this example, we already insert node A and B, and the next step is to insert node C into one upper node. We choose one possible packing. As there are 3 bottom nodes (rectangle), we create 3 upper nodes (circle). For each upper node, the length of the representation is equal to 4, where the first two

\noindent \textbf{Actions.} An action adds a bottom node to an upper node. To make the action space and the state representation consistent, we define the action space $A=\{1, 2, \ldots, N\}$, where action $a = i$ denotes packing the next bottom node into the $i$-th upper node.
% decides one upper node to be inserted with the upcoming inserting bottom node. To make the action space and the state representation consistent, we define the action space $A=\{1, ..., N\}$, where action $a = i$ means the RL agent inserts the bottom node into the $i$-th upper node.

\noindent \textbf{Transition.} 
% Our model uses deterministic transitions. 
Given a state and an action, the agent transits to a new state by packing a bottom node into the chosen upper node and moving on to the next bottom node. The agent reaches a terminal state when there are no more bottom nodes to be packed.

\noindent \textbf{Reward.} A larger reward represents a packing with better quality. Since we aim to reduce the number of  node accesses when executing the query workload, the reward signal should reflect the expected number of node accesses before and after taking an action. 

% According to Eq. \ref{equ:criterion}, the number of accessed nodes depends on the number of query labels of each bottom node $|l_{i}|$. However, for each bottom node, the value varies, which results in a significant increment in the total number of accessed nodes sometimes after inserting one bottom node. So, the difference in the total number of accessed nodes for all corresponding queries is hard to evaluate the performance of each insertion well. To solve this issue, 
We propose to use the average number of node  accesses per query to formulate the rewards since the total number of node accesses grows   monotonically as more bottom nodes are added to the consideration, which will lead to constant negative rewards. 
% {\color{red}{[KQ: Yufan, check if the above is correct.]}} 
\begin{equation}
\label{equ:reward}
    r = N_{a} - N_{a}^{\prime}
\end{equation}

The reward function is formulated as Eq.~\ref{equ:reward}, where $N_a$ ($N_a^{\prime}$) denotes the average number of node accesses before (after) action $a$ is taken. The agent chooses the action that maximizes the reward during exploitation. Additionally, we observe a positive correlation between the sum of rewards and the reduction in the average number of node accesses after packing all bottom nodes. Let $N_{a}^{*}$ be the average number of node accesses of packing the last bottom node. As $N_{a}$ in each iteration is identical to $N_{a}^{\prime}$ of the last iteration, the sum of rewards after packing all $N$ bottom nodes is equal to $1 - N_{a}^{*}$, and it is positively correlated to $N + 1 -N_{a}^{*}$. Note that the number of node accesses is equal to $N + 1$ before creating the upper nodes. Thus, if the sum of the rewards at a level is not larger than $-N$, the bottom-up packing process will be terminated.

\noindent \textbf{Example 5.2:\label{exa5.2}} Figure \ref{fig:mdpexample} presents an example of the MDP for the bottom-up packing problem. Same as in Figure~\ref{fig:rlstate}, the colors represent different query labels. Here, we only show state transitions with nonzero probabilities, and we have omitted the rewards to avoid clutter. The ellipse nodes and edges represent states and actions,  respectively. Since there are 3 bottom nodes (rectangle), we initialize 3 upper nodes (circle), and the bottom nodes are to be packed sequentially. When no incoming bottom node is to be inserted at one level, i.e. the leaf node in Figure \ref{fig:mdpexample}, we reach the terminal states at this level and move to the upper level.
\begin{figure*}[tb]
    \centering
    \includegraphics[width=.95\linewidth]{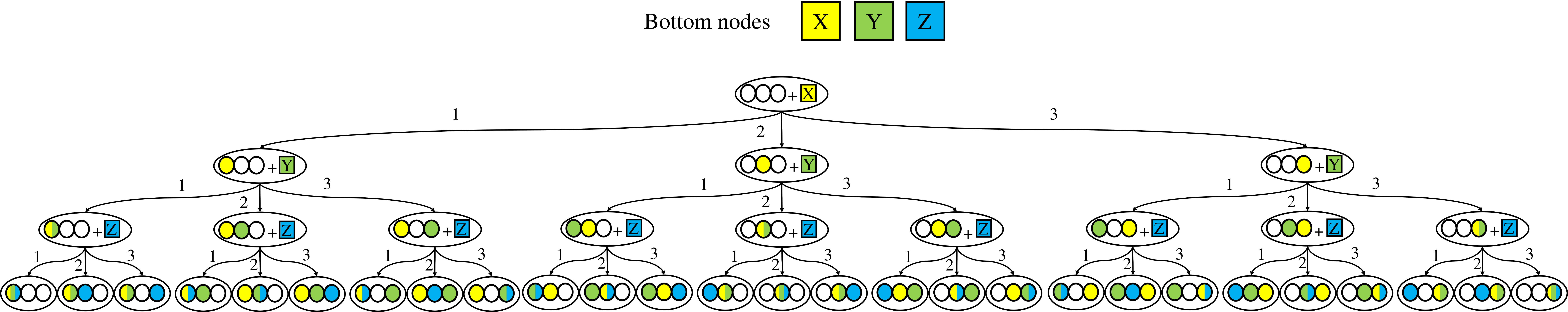}
    \abovecaptionskip 0.2cm
    \caption{An example of MDP formulation for Problem \ref{prob3}}
    \label{fig:mdpexample}
\end{figure*}
% \jz{Not exactly sure what it means:} One cannot further expand the states with no incoming bottom node, and we call such states without outgoing transitions \emph{terminal states}. {\color{red}{[KQ: Do you want to say we stop and move to the upper level when no incoming bottom node is to be inserted at that level?]}}

\subsection{Training}
Recall that Q-learning is a commonly used RL algorithm as introduced in Section \ref{sec2.3}. We train a deep Q-network (DQN) \cite{DBLP:journals/nature/MnihKSRVBGRFOPB15} to project the high-dimensional state and action spaces to low-dimension spaces using neural networks and efficiently predict the value of the Q-function $Q(s, a)$. In our model, we adopt the deep Q-learning with a technique known as experience replay where we store the agent's experience $e_{t} = (s_{t}, a_{t}, r_{t}, s_{t + 1})$ at each time-step $t$. We implement two networks, a policy network $Q$ and a target network $\hat{Q}$ separately, which has been shown to be more stable than using only one network as done in the standard Q-learning~\cite{DBLP:journals/nature/MnihKSRVBGRFOPB15}.

Given a batch of transitions $(s, a, r, s^{\prime})$, the policy network parameters $\theta$ are updated with a gradient descent step by minimizing the mean square error (MSE) loss as shown in Eq. \ref{equ:rloss}, where $\gamma\in(0,1)$ denotes a discount factor determining the importance of future rewards, and $\theta^{-}$ are the parameters of the target network. 
\begin{equation}
\label{equ:rloss}
    L(\theta) = \sum_{s,a,r,s^{\prime}} \left(r + \gamma\max_{a^{\prime}}\hat{Q}(s^{\prime}, a^{\prime};\theta^{-})-Q(s, a;\theta)\right)^{2}
\end{equation}

Note that the target network parameters $\theta^{-}$ are only synchronized with the policy network parameters $\theta$ every \textit{T} steps and are held fixed between weight updates. However, directly copying the weights has been shown to be unstable due to noise and outliers. Inspired by prior works \cite{DBLP:journals/nn/KobayashiI21,DBLP:journals/corr/LillicrapHPHETS15}, we apply the soft update (Eq. \ref{equ:softup}) to the target network. The weights of the target network are updated by interpolating 
% which the new parameters are interpolated
between the weights of the target network and those of the policy network through a fixed ratio $\tau = 0.001$~\cite{DBLP:journals/corr/LillicrapHPHETS15}.
\begin{equation}
\label{equ:softup}
    \theta^{-} = \tau \theta + (1-\tau) \theta^{-}, \tau \ll 1
\end{equation}

We present the learning process in Algorithm \ref{algo:rltrain}. We first initialize the policy network and the target network with the same random parameters (line 1). In each epoch, we reset the replay memory $M$ and the set of upper nodes $G_{u}$ (line 3). Then, the learning process sequential packs the bottom nodes to the upper nodes (lines 4 to 13). For every incoming bottom node $c_i$, we generate the state by combining $G_{u}$ and $c_i$ (line 5). We compute the average number of node accesses based on the query labels of the current upper nodes (line 6). To balance between RL  exploration and exploitation, we use the $\epsilon$-greedy algorithm \cite{DBLP:journals/tnn/SuttonB98} to choose a random action with probability $\epsilon$ (i.e., exploration) or the action that maximizes the action-value function of the policy network (i.e., exploitation) (line 7). After $c_i$ is packed, we update the state representation and compute the average value again (lines 8 and 9). Then, we compute the reward and store this transition in the replay memory (lines 10 and 11). To train the DQN, we draw a batch of transitions to train the policy network (line 12) and periodically copy the policy network parameters to the target network (line 13). Finally, we use the learned action-value function $Q(s, a; \theta)$ to pack the nodes.
\begin{algorithm}[tb]
\caption{DQN Learning for Node Packing}
\label{algo:rltrain}
\footnotesize
\KwIn{$G$, the bottom nodes with query labels; $M$, replay memory; $E$, the number of epochs}
\KwOut{$Q(s, a; \theta)$, action-value function}
Initialize $Q(s, a; \theta), \hat{Q}(s^{\prime}, a^{\prime}, \theta^{-})$\;
\For{epoch $\in [1, E]$}
{
    $G_{u} \leftarrow$ NewList(); $M \leftarrow$ NewList()\;
    % $G_{u}$[0] = [$c_1$]\;
    % Remove the first bottom node $c_1$ from $G$\;
    \For{$c_i \in$ G}
    {
        Update $s$ using $G_{u}$ and $c_i$\;
        Compute the average number of node accesses $N_{a}$ according to $G_{u}.l$\;
        Choose $a$ by the $\epsilon$-greedy method\;
        Pack $c_i$ into $G_{u}$[$a$] and generate the new state $s^{\prime}$\;
        Compute $N_{a}^{\prime}$ according to the new $G_{u}.l$\;
        Compute reward $r$ based on Eq. \ref{equ:reward}\;
        Store transition $(s,a,r,s^{\prime})$ into $M$\;
        Draw a batch of samples from $M$ and perform a gradient step based on Eq. \ref{equ:rloss}\;
        Update $\hat{Q}(;\theta^{-})$ with $Q(;\theta)$ softly based on Eq. \ref{equ:softup} after every \textit{C} steps\;
    }
}
\Return{$Q(s, a; \theta)$\;}
\end{algorithm}

\section{Design Optimizations}
\label{sec6}
\textbf{Choice of CDF models.} For our heuristic partition algorithm, the number of objects with each keyword is approximated by a model that learns the corresponding CDF. Prior works  \cite{DBLP:conf/sigmod/KraskaBCDP18, DBLP:conf/sigmod/Li0ZY020} have used the neural network (NN) to learn the CDF. 
%by minimizing the mean squared error of the predicted value. 
However, learning an NN for each query keyword may
lead to a large number of NNs to be learned and hence high preparation costs.

We observe that the query time of \idxname\ is more sensitive to high-frequency keywords. To decrease preparation costs, we divide keywords into three classes based on their frequency: low ($\leq 0.001 \text{\textperthousand}$), medium ($0.001-0.1 \text{\textperthousand}$), and high ($\geq 0.1 \text{\textperthousand}$). \textcolor{edit}{Previous studies \cite{yang2019gb, wang2014selectivity} have shown that more resources should be allocated to records with more frequent elements to get better prediction accuracy.} When calculating the query cost, low-frequency keywords are ignored as they have little impact on the query time. We adopt Gaussian functions to approximate the data distribution of each medium-frequency keyword and learn an NN to approximate the CDF of each rest keyword. Our empirical results show that such a strategy balances the preparation costs and the query time.
% helps reduce the preprocessing costs significantly without compromising accuracy.

\noindent \textbf{Correlation between keywords.} In Section \ref{sec4}, we consider each keyword independently when approximating $|O_{s}|$ ($s \in \{1, 2\}$) in Eq.~\ref{equ:loss}. This independence  assumption impacts the performance of the heuristic partitioning algorithm when a query has  more than one keyword, e.g., if an object contains $k$ query keywords of a query, this object will be counted $k$ times when predicting the number of objects in a sub-space, leading to inaccurate query cost prediction.

% \jz{please check if my edits are correct}
To solve this issue, we exploit \textit{frequent itemset mining} to discover all frequent keyword sets and extract associations among the given set of keywords \cite{DBLP:conf/sigmod/AgrawalIS93, DBLP:books/mit/fayyadPSU96/AgrawalMSTV96, DBLP:reference/ml/Toivonen10c, DBLP:conf/sigmod/HanPY00}. We apply a classic algorithm, FP-Tree \cite{DBLP:conf/sigmod/HanPY00}, to find frequent keyword sets from the underlying data. Then, we learn a CDF model of objects containing all keywords in one frequent keyword set and use the learned model to predict the number of objects with the set of query keywords more accurately.

\noindent \textbf{Action mask in RL.} %Considering all $N$ upper nodes as actions leads to a large action space with many ``ineffective actions''. 
% In Example \hyperref[exa5.2]{5.2}, the initial upper nodes are both empty.
When the packing of a level starts, the upper nodes are all empty. To choose the upper node to insert for the first bottom node, we observe that actions $a=i$ ($i>1$) are all equivalent to $a=1$. We call these actions \emph{duplicated actions}. Duplicated actions exist when more than one upper nodes are empty. As observed in Figure \ref{fig:mdpexample}, the actions of adding the bottom node to any of these empty upper nodes are equivalent. Such duplicated actions make the exploration inefficient, leading to slow convergence~\cite{DBLP:journals/corr/abs-2103-04541}. 

Thus, motivated by a prior study~\cite{DBLP:conf/sigmod/ZhangCZ022}, another use of the environment is to generate an action mask based on the current state to hide the duplicated actions from the agent. In the example above, before inserting the first bottom node, the action mask generated by the environment makes the agent only chooses action $a = 1$. 

\noindent \textcolor{edit}{\textbf{Training time acceleration.} \idxname has two steps: finding the bottom clusters and packing the bottom clusters through RL. To reduce the training time of \idxname, we design acceleration techniques for both steps. The first technique is to use sampled training queries, following a previous work~\cite{DBLP:conf/sigmod/NathanDAK20}. We use stratified  sampling~\cite{botev2017variance} to obtain query samples that can better represent the distribution of the original workload. 
The second technique groups the bottom clusters using a clustering algorithm to reduce the number of bottom clusters to be packed in the bottom-up packing step. 
%In this way, we can reduce the training time of bottom-up packing, which depends on the size of state-space representations and is related to the number of bottom clusters. 
%A na\"ive grouping method is to regard the center of each bottom cluster as a sample and apply \textit{k}-means clustering \cite{macqueen1967classification}. However, this method ignores the size of the clusters, leading to inaccurate results. Instead, 
We utilize the spectral clustering \cite{ng2001spectral} with the coordinates of the bottom left and top right points of each bottom cluster as features. }

\section{Experiments}
\label{sec7}
% We first describe the experimental setup and then extensively evaluate \idxname\ from different aspects on real-world datasets.

%Overall, this evaluation shows that:
%\begin{enumerate}[leftmargin=*]
%    \item \idxname\ outperforms all baseline indexes on workloads under a variety of different settings, achieving more than one order of magnitude query speedups with a comparable index size.
    
%    \item \idxname's advantage over baseline indexes grows with larger datasets and higher cardinality workloads.
    
%    \item The RL framework, the action mask, and the keyword correlation extracted by frequent itemsets all contribute the index learning efficiency and effectiveness.
%\end{enumerate}
\subsection{Implementation and Setup}
\noindent\textbf{Implementation.} 
The learning process of CDF NN models, Algorithm \ref{algo:train}, and Algorithm \ref{algo:rltrain} are implemented with PyTorch \cite{DBLP:conf/nips/PaszkeGMLBCKLGA19}. The performance evaluation of all index structures is implemented in C++ and compiled using GCC 9.3 with -O3 flag. In the process of generating the bottom clusters, we empirically set 0.1 and 1 to the weights of stage 1 and stage 2, i.e., $w1$ and $w2$, respectively. The CDF network consists of 4 layers, and each hidden layer has 16 units. We use ReLU as the activation function of the hidden layer. The output of the CDF is activated by a sigmoid function. When packing the bottom clusters, we follow the original implementation of DQN \cite{DBLP:journals/nature/MnihKSRVBGRFOPB15}. The neural network consists of 3 layers, and each hidden layer has 64 units. We set the capacity of experience replay to 256, and the discount factor is set to 0.99. For $\epsilon$-greedy algorithm, the initial value of $\epsilon$ is set to 1, and the value decreases with more learning steps, which balances exploration and exploitation well.

\noindent\textbf{Environment.} 
We run single-threaded experiments \textcolor{edit}{in the main memory} on an Ubuntu machine with Intel(R) Xeon(R) Silver 4210R CPU @ 2.40GHz, 128GB RAM, and a 500 GB SSD disk. Besides, we train our CDF models on an Ubuntu machine with Intel(R) Xeon(R) Gold 6240 CPU @ 2.60GHz, 256GB RAM, and RTX 2080 Ti GPU.

\noindent \textbf{Baselines.} We compare \idxname\ with four SOTA conventional indexes, i.e., \textbf{CDIR-Tree} \cite{DBLP:journals/pvldb/CongJW09}, \textbf{SFC-Quad} \cite{DBLP:conf/cikm/ChristoforakiHDMS11}, \textbf{ST2I} \cite{DBLP:conf/cikm/Hoang-VuVF16}, and \textbf{ST2D} \cite{DBLP:conf/ssd/TampakisSDPKV21}. We implement these indexes using the default parameter values reported in their original papers. Note that ST2D is only evaluated on \textbf{FS} by setting the similarity threshold to 0 since it is only suitable for the case that containing a few distinct keywords (a few hundreds) because of the textual clustering.

We also integrate a learned spatial index with a textual index loosely, following traditional spatial keyword indexes. This results in a \emph{learned spatial-first index} (\textbf{SFI}) and a \emph{textual-first index} (\textbf{TFI}). SFI attaches an inverted file for keywords indexing to each leaf node of a learned spatial index, while TFI uses an inverted file as its top-level index and creates a learned spatial index for the objects containing the same keyword. It has been shown that textual-first indexes outperform their spatial-first counterparts~\cite{DBLP:conf/cikm/ZhouXWGM05, DBLP:conf/ssd/VaidJJS05}. Therefore, we only report results for TFI in our experiments. LISA~\cite{DBLP:conf/sigmod/Li0ZY020} is used as the learned spatial index since it returns the exact results. We further extend a learned multi-dimensional index, i.e., Flood~\cite{DBLP:conf/sigmod/NathanDAK20}. We build an inverted file for each grid cell in Flood and also improve its cost function for building the grid index by incorporating the textual information, utilizing our CDF models on the geo-textual data, following the method presented in Section~\ref{sec4}. We denote this index by \textbf{Flood-T}. It splits the data along \textit{only one dimension} in the 2D geographical space, which limits its capability to capture the complex data distribution. 
\textcolor{edit}{We also compare with \textbf{LSTI} \cite{ding2022learned}, the latest index to support spatial keyword queries. This method maps the data into one dimension using a Z-order curve based on the spatial coordinates and builds a RadixSpline index \cite{kipf2020radixspline} using the mapped values. Then, an inverted file is created for each spline point by scanning the dataset again. 
% However, it cannot well capture the spatial and textual correlation and thus its performance fluctuates across different settings.%, and it is outperformed by \idxname, which learns such correlations better.
}

\subsection{Datasets and Workloads}
We use three real-world datasets in Table \ref{exp:data}.
The \textbf{FS} dataset \cite{DBLP:conf/www/YangQYC19} consists of global-scale check-in records of Foursquare (\url{https://foursquare.com/}) from Apr. 2012 to Jan. 2014. A check-in data has a spatial location and its category. The \textbf{SP} dataset includes recreational and sports areas extracted from OpenStreetMap (\url{https://www.openstreetmap.org}).
%\footnote{\label{note:osm}https://www.openstreetmap.org}. 
We use the center of each area and the original description as the spatial location and keywords, respectively. The \textbf{BPD} dataset contains global POIs published by the SLIPO project~\cite{DBLP:conf/ssd/PatroumpasSMGA19} (\url{http://slipo.eu/}). The \textbf{OSM} dataset contains 100M POIs extracted from OpenStreetMap, which is published in UCR STAR \cite{GVE+19}. Each POI has a point location, and its keywords include all related information such as street and category.
% \textbf{SP} and \textbf{BPD} can be seen as textual-rich datasets while \textbf{FS} evaluates the index on the dataset with limited keywords.
\begin{table}[tb]
\small
\caption{Dataset Statistics}
\vspace{-0.3cm}
% \resizebox{\columnwidth}{!}{
\begin{tabular}{|c|c|c|c|c|}
\hline
\textbf{Property}        & \textbf{FS} & \textbf{SP} & \textbf{BPD} & \textbf{OSM} \\ \hline
% Number of data objects         & 3,401,713   & 4,711,156       & 25,244,735                        \\ \hline
% Number of distinct keywords & 462         & 1,276,591       & 24,996,990                        \\ \hline
% Total number of keywords       & 6,339,901   & 11,866,649      & 116,575,920                       \\ \hline

Number of data objects & 3M & 4M & 25M & 100M\\ \hline
Number of distinct keywords & 462 & 1M & 24M & 447M\\ \hline
Total number of keywords & 6M & 11M & 116M & 478M\\ \hline
\end{tabular}
% }
\label{exp:data}
% \vspace{-0.3cm}
\end{table}
\begin{table}[tb]
\small
\caption{Parameters and their settings}
\vspace{-0.3cm}
\begin{tabular}{|c|c|}
\hline
\multicolumn{1}{|c|}{\textbf{Parameter}} & \multicolumn{1}{c|}{\textbf{Setting}} \\ \hline
Query distribution & UNI LAP GAU \textbf{\underline{MIX}} \\ \hline
Query region size (\%) & 0.005 0.01 \textbf{\underline{0.05}} 0.1 0.5 1\\ \hline
Number of query keywords & 1 3 \textbf{\underline{5}} 7 9 \\ \hline
% Query cardinality & 100-200 \textbf{\underline{500-1000}} 1000-2000 \\ \hline
\end{tabular}
\label{exp:query}
% \vspace{-0.3cm}
\end{table}
As there is no public real-world query workload for the geo-textual datasets, we generate the queries by following previous works \cite{DBLP:journals/pvldb/WangQWWZ21, DBLP:journals/pvldb/ChenCJW13, DBLP:journals/pvldb/WangZZLW14, DBLP:conf/sigmod/HuLXAPRY22, DBLP:journals/corr/abs-2103-04541}.
Specifically, to generate a query, we first sample an object in the dataset, and then generate a bounding rectangular area with the location of this object being its center.
Inspired by previous works \cite{DBLP:journals/pvldb/WangQWWZ21, DBLP:journals/corr/abs-2103-04541}, we use four methods to generate the centers: 
(i) \textbf{UNI}, where centers are uniformly sampled from the dataset.
(ii) \textbf{LAP}, where centers are sampled from the Laplace distribution \cite{Kotz_2001}. We set the location and scale parameters, i.e., $\mu$ and $b$, to $|D| / 2$ and $|D| / 10$ respectively, where $D$ is the object set.
(iii) \textbf{GAU}, where centers are sampled from a Gaussian distribution ($\mu=|D| / 2$, $\sigma=100$).
(iv) \textbf{MIX}, composing of the centers generated from the (i) and (ii) in equal proportions.
Finally, we associate keywords for the queries following prior works \cite{DBLP:journals/pvldb/ChenCJW13, DBLP:journals/pvldb/WangZZLW14}.
If the number of query keywords is less than the number of keywords of the center, we choose the query keywords from the sampled object. Otherwise, we randomly choose the remaining keywords from the global keyword set.

To evaluate the performance of indexes in different scenarios, we generate query sets with different numbers of keywords and query sizes. Table \ref{exp:query} summarizes parameters, where default values are in bold and underlined.
We generate 2000 queries under each setting, in which 1000 queries are utilized to test the performance of all the indexes, and others are used to train learned indexes.

\subsection{Query Time Evaluation}
\textcolor{edit}{To evaluate the query time, we execute testing queries 100 times and report the average cost of the queries in each query set.}

\subsubsection{\textbf{Effect of query distribution.}}
In this experiment, we fix other settings except for the query distribution and show the results on all datasets in Figure \ref{exp:dis}.
Clearly, conventional indexes (SFC-Quad, ST2I, and CDIR-Tree) perform worse on the skewed workload since they do not use the query characteristics when constructing the index. 
For the learned indexes, TFI performs even worse than the conventional indexes since it only loosely combines a learned spatial index with a textual index.
Flood-T shows a slight fluctuation in its performance since it learns from the underlying data and the query workload simultaneously, but in the geo-textual scenario, it only splits along one dimension, making it incompatible with the skewed workload.
Our \idxname\ improves the partitioning and adopts RL to build a tree, so it is less sensitive to this alteration.
\begin{figure}[htb]
    \begin{minipage}{.55\linewidth}
        \centering
        \includegraphics[width=\linewidth]{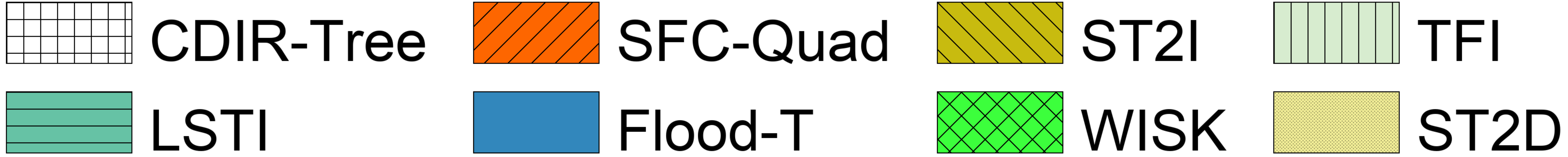}
    \end{minipage}
    \begin{minipage}{\linewidth}
        \centering
        \subcaptionbox{FS\label{dis-fs}}{
            \vspace{-0.2cm}
            \includegraphics[width=.23\columnwidth]{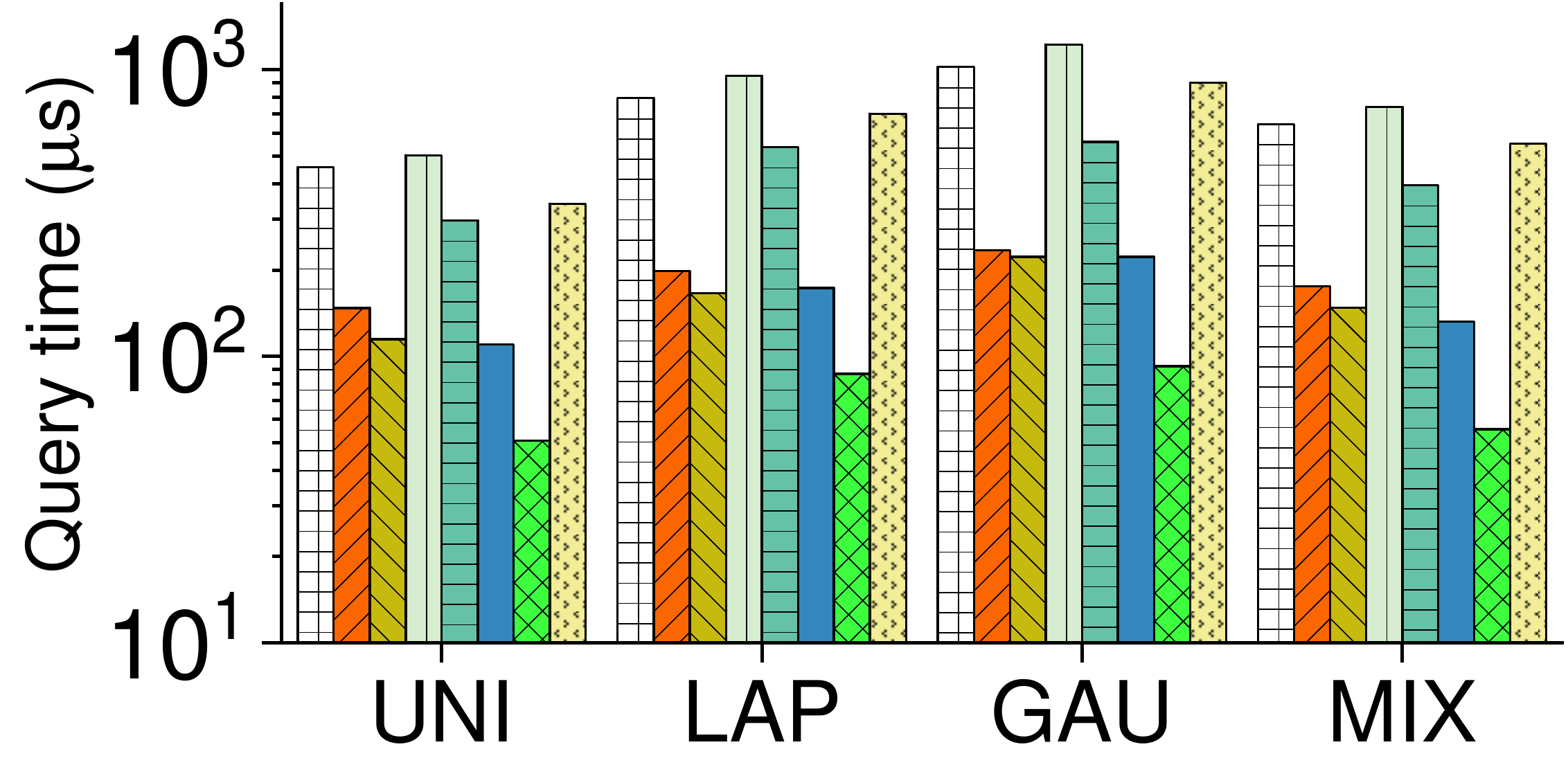}
        }
        \subcaptionbox{SP\label{dis-sp}}{
            \vspace{-0.2cm}
            \includegraphics[width=.23\columnwidth]{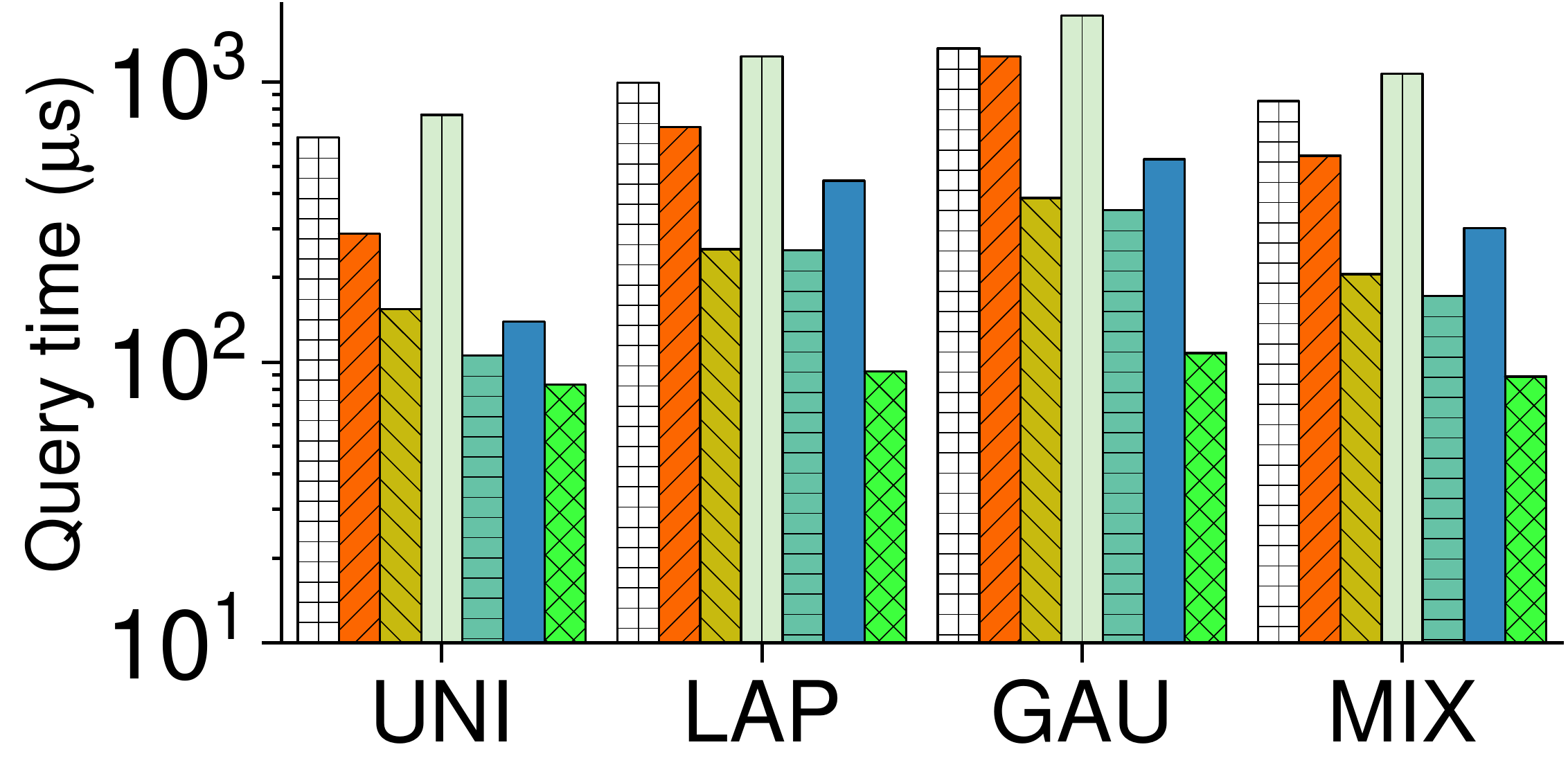}
        }
        \subcaptionbox{BPD\label{dis-bpd}}{
            \vspace{-0.2cm}
            \includegraphics[width=.23\columnwidth]{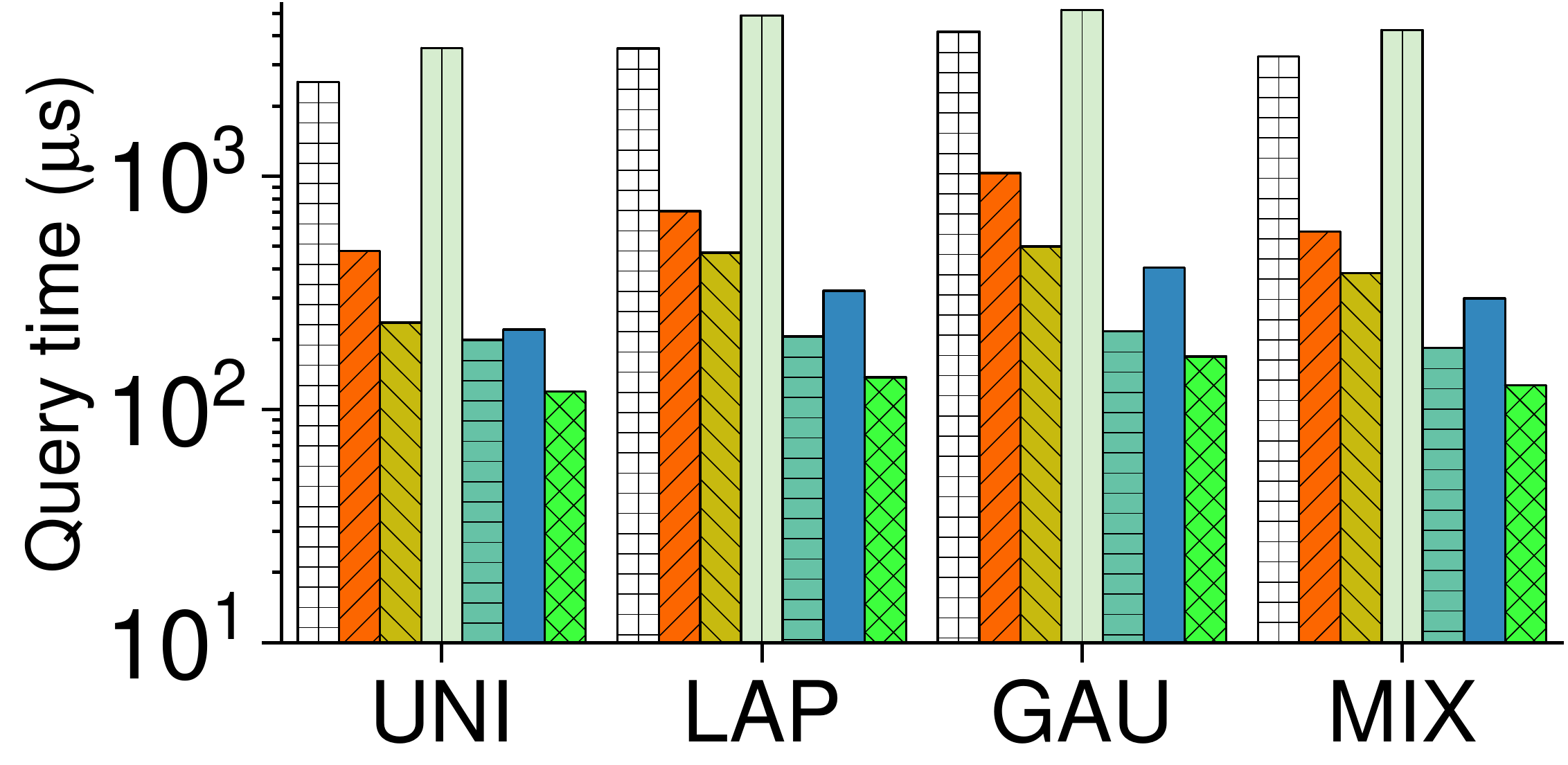}
        }
        \subcaptionbox{OSM\label{dis-osm}}{
            \vspace{-0.2cm}
            \includegraphics[width=.23\columnwidth]{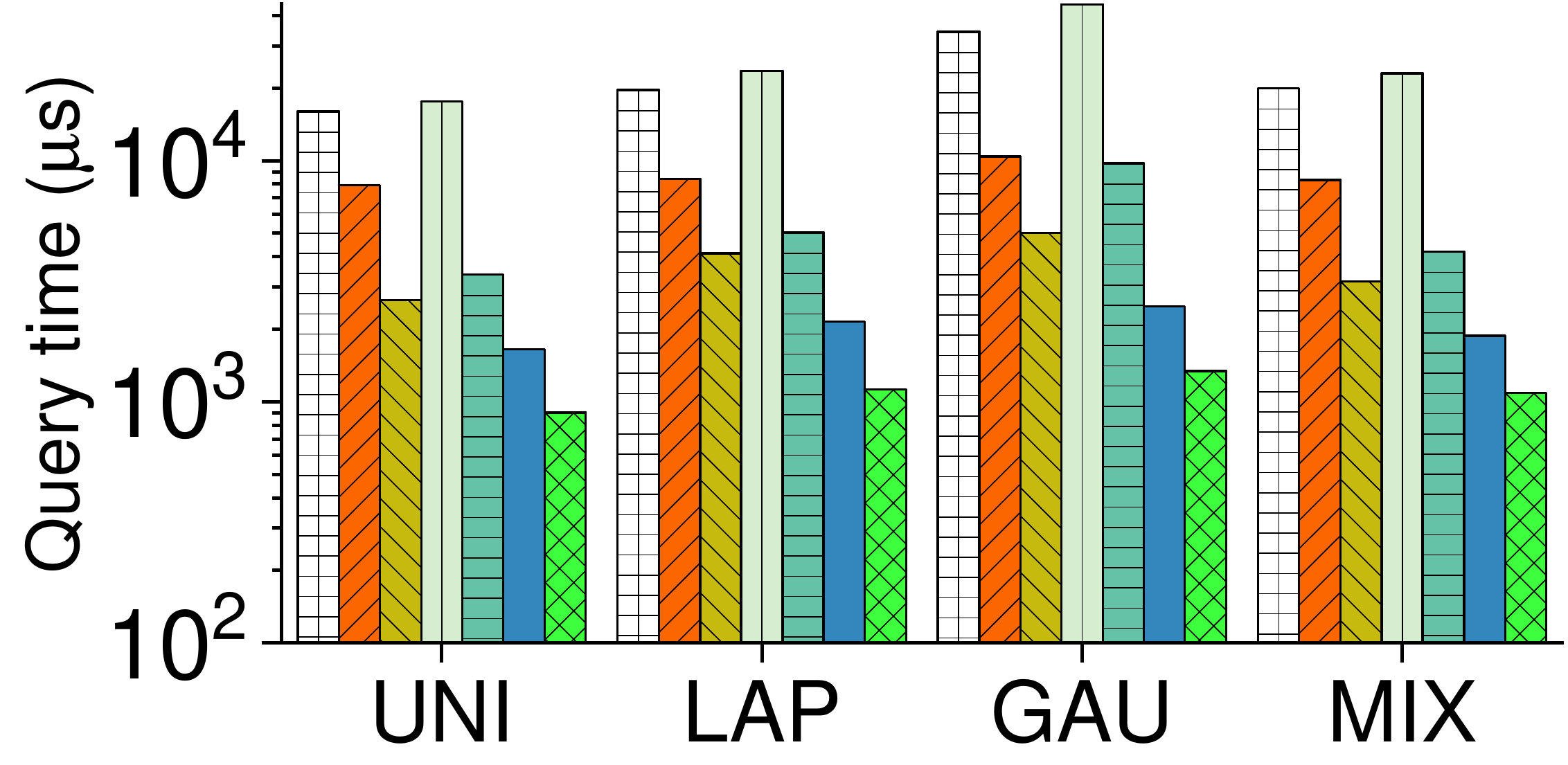}
        }
    \end{minipage}
    \vspace{-0.3cm}
    \caption{Varying the distribution of the query workload}
    \label{exp:dis}
\end{figure}
\subsubsection{\textbf{Effect of query region size.}}
We show the performance of all indexes, by varying the query region size varies from $0.005\%$ to $1\%$ of the whole region in Figure \ref{exp:size}.
Again, \idxname\ performs the best on four datasets.
Besides, Flood-T performs slightly worse than ST2I on \textbf{SP}, even though it optimizes its layout by learning from the data and query workload. It is because it only splits the whole region along one dimension.
Thus, we improve this process to generate the leaf nodes of \idxname and also pack the bottom clusters into a hierarchical structure.
The two techniques simultaneously result in the superiority of \idxname\ over the other indexes.
\begin{figure}[tb]
    \begin{minipage}{.55\linewidth}
        \centering
        \includegraphics[width=\linewidth]{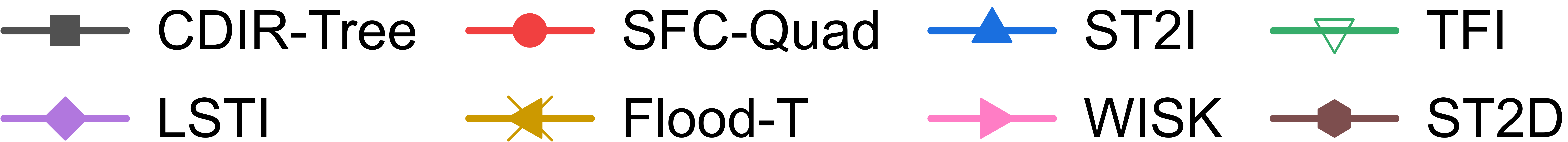}
    \end{minipage}
    \begin{minipage}{\linewidth}
        \centering
        \subcaptionbox{FS\label{size-fs}}{
            \vspace{-0.2cm}
            \includegraphics[width=.23\columnwidth]{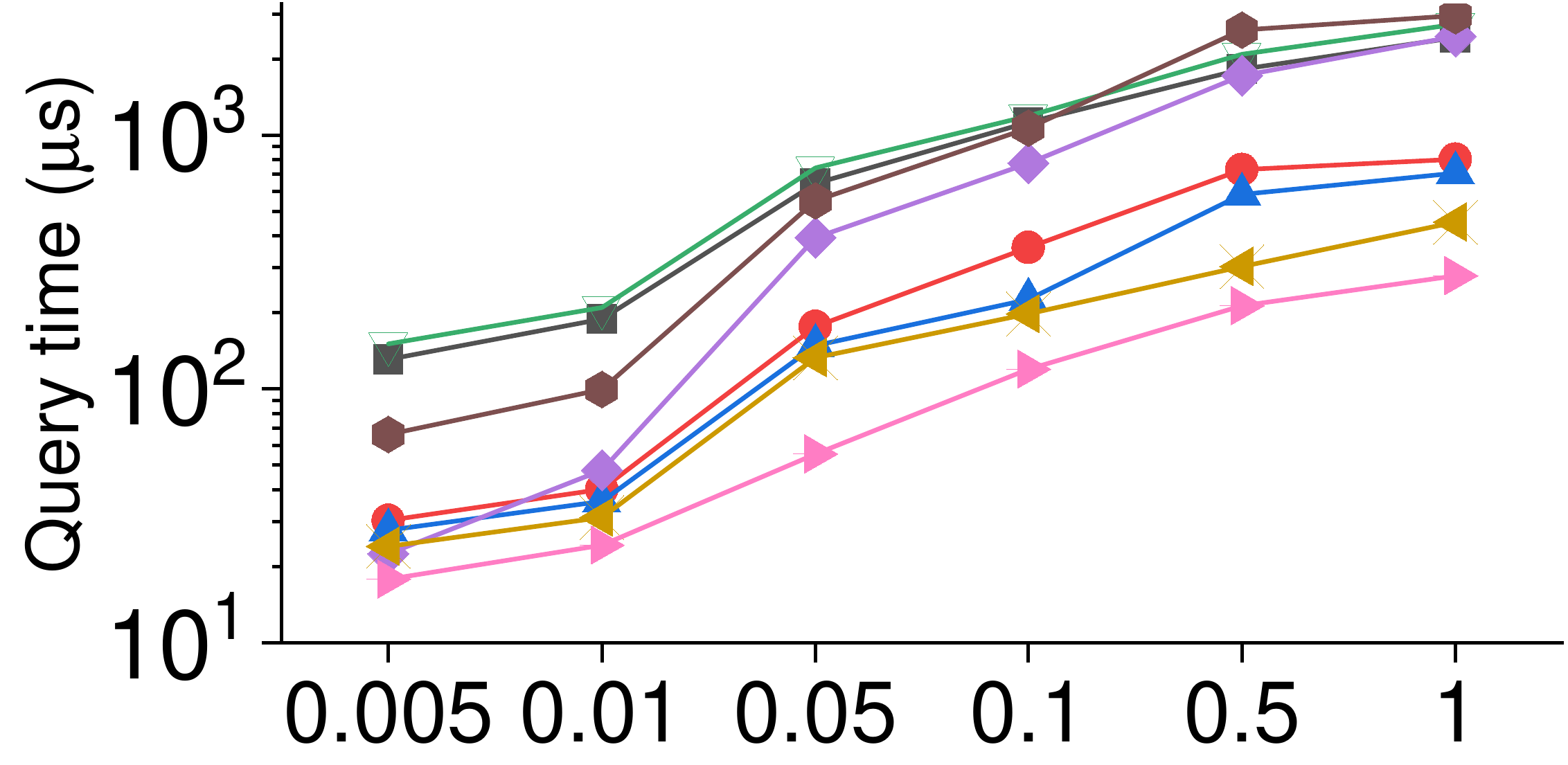}
        }
        \subcaptionbox{SP\label{size-sp}}{
            \vspace{-0.2cm}
            \includegraphics[width=.23\columnwidth]{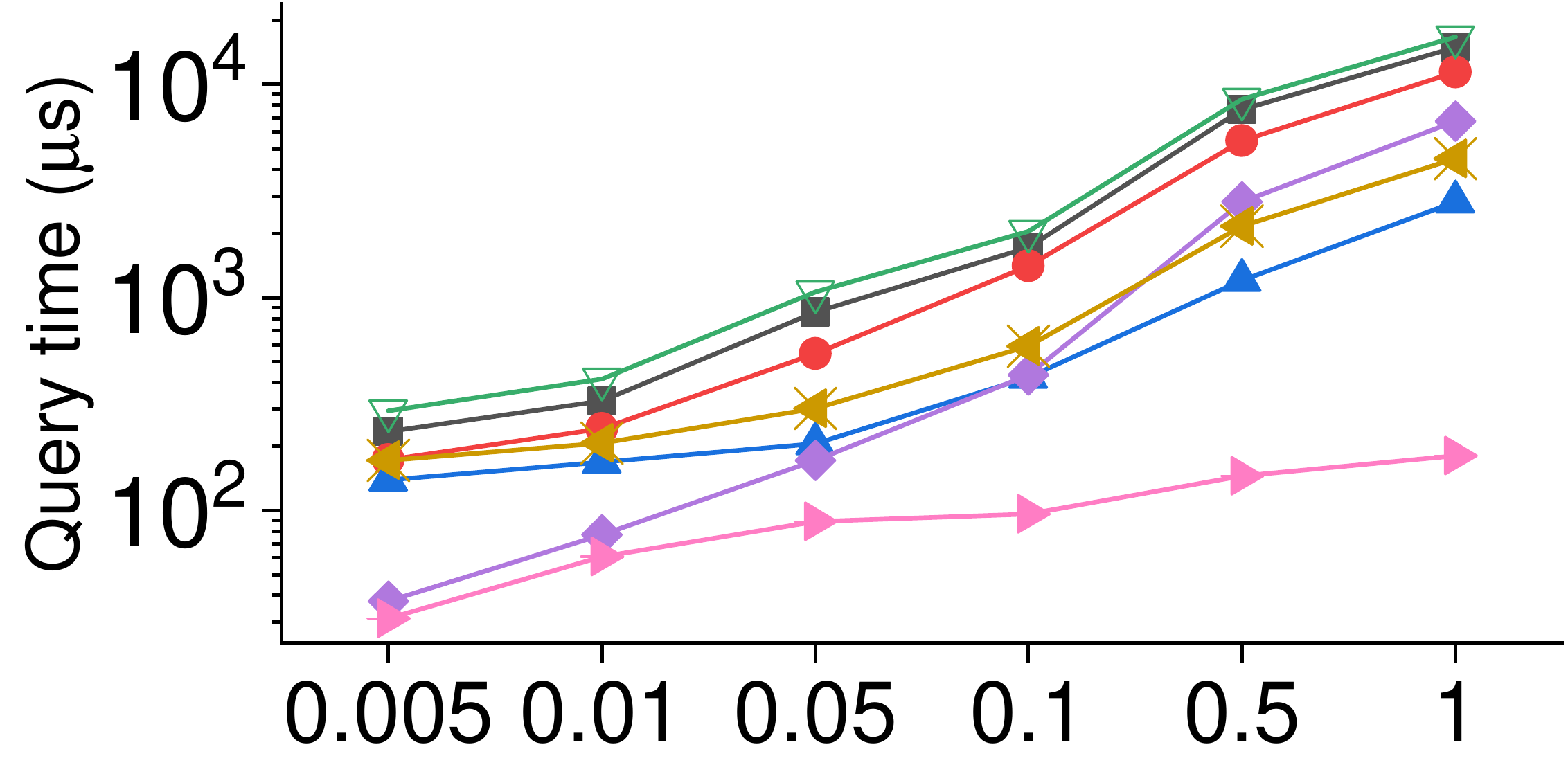}
        }
        \subcaptionbox{BPD\label{size-bpd}}{
            \vspace{-0.2cm}
            \includegraphics[width=.23\columnwidth]{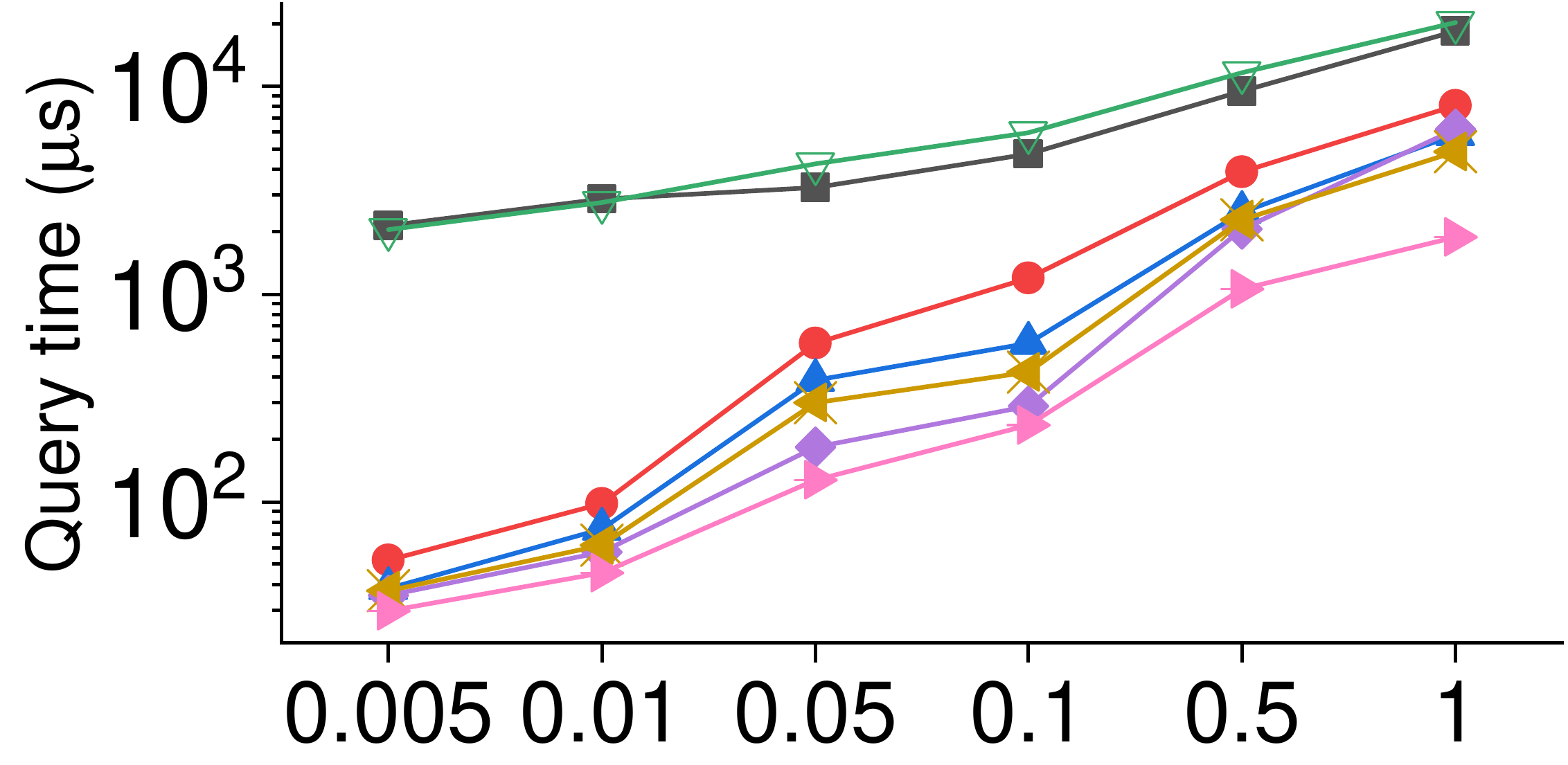}
        }
        \subcaptionbox{OSM\label{size-osm}}{
            \vspace{-0.2cm}
            \includegraphics[width=.23\columnwidth]{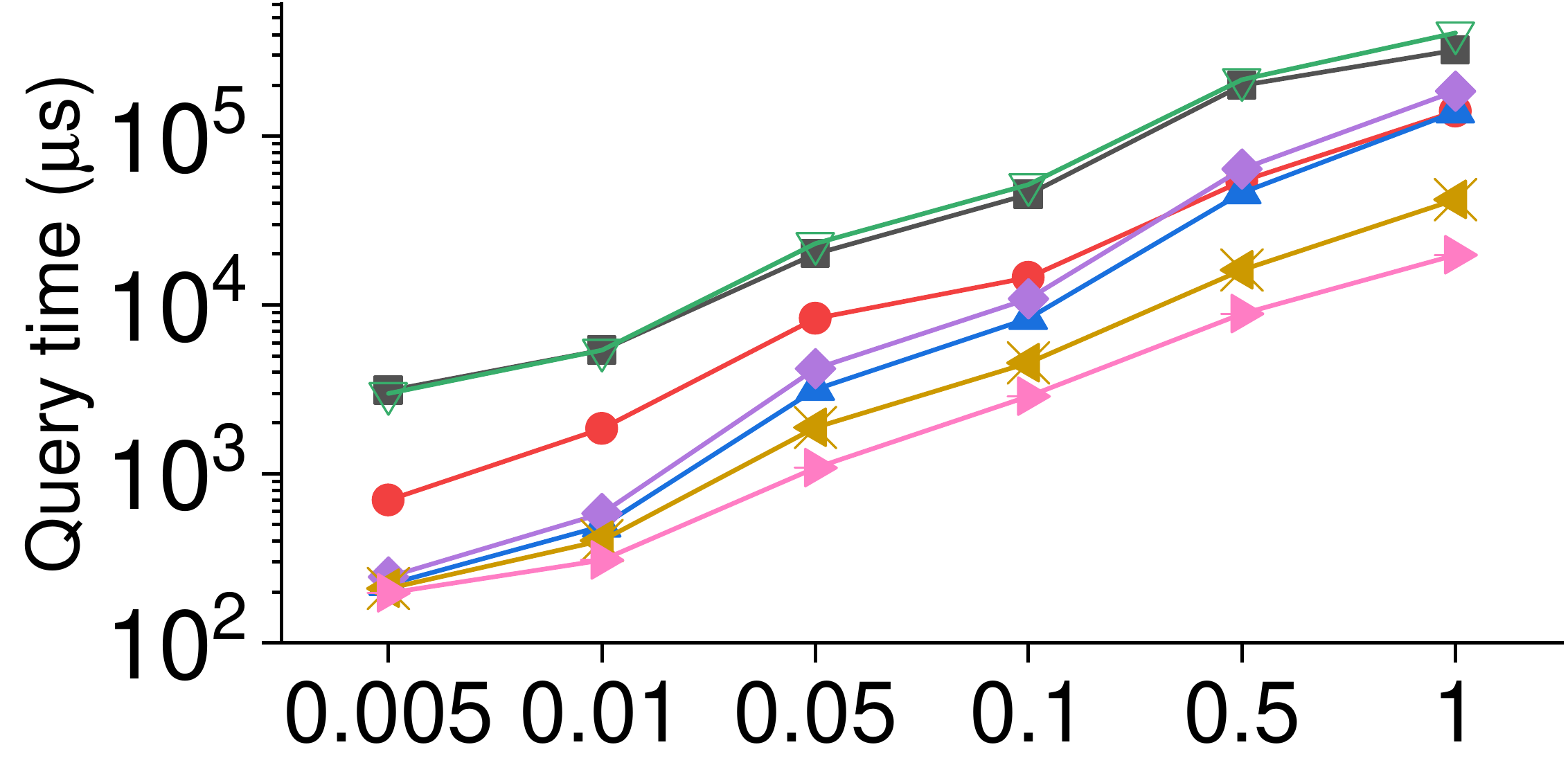}
        }
    \end{minipage}
    \vspace{-0.3cm}
    \caption{Varying the query region size}
    \label{exp:size}
    % \vspace{-0.3cm}
\end{figure}
\subsubsection{\textbf{Effect of number of query keywords.}}
We evaluate the query sets with different numbers of keywords. Figure \ref{exp:keys} shows that the query time of all indexes grows with the query keyword set size.
The reason is that with the increase in the number of query keywords, more candidates need to be verified after the filtering step.
Besides, \idxname\ consistently outperforms other baseline indexes, and its cost grows much slower than those of others, e.g., the increased time of \idxname\ on \textbf{BPD} is around 100 $\mu$s while those of Flood-T and ST2I are both over 250 $\mu$s. Hence, compared to other indexes, \idxname\ is less sensitive to the number of query keywords.
\begin{figure}[tb]
    \begin{minipage}{.55\linewidth}
        \centering
        \includegraphics[width=\linewidth]{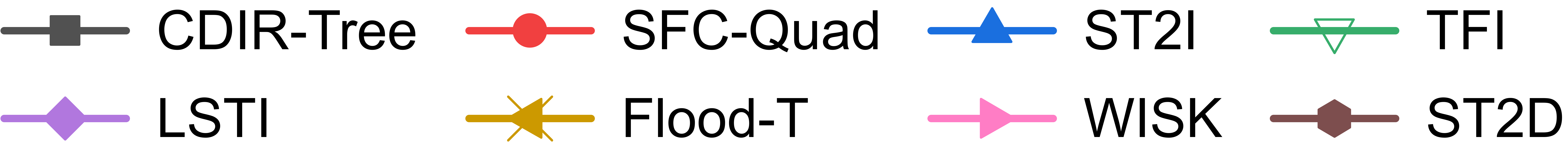}
    \end{minipage}
    \begin{minipage}{\linewidth}
        \centering
        \subcaptionbox{FS\label{keys-fs}}{
            \vspace{-0.2cm}
            \includegraphics[width=.23\linewidth]{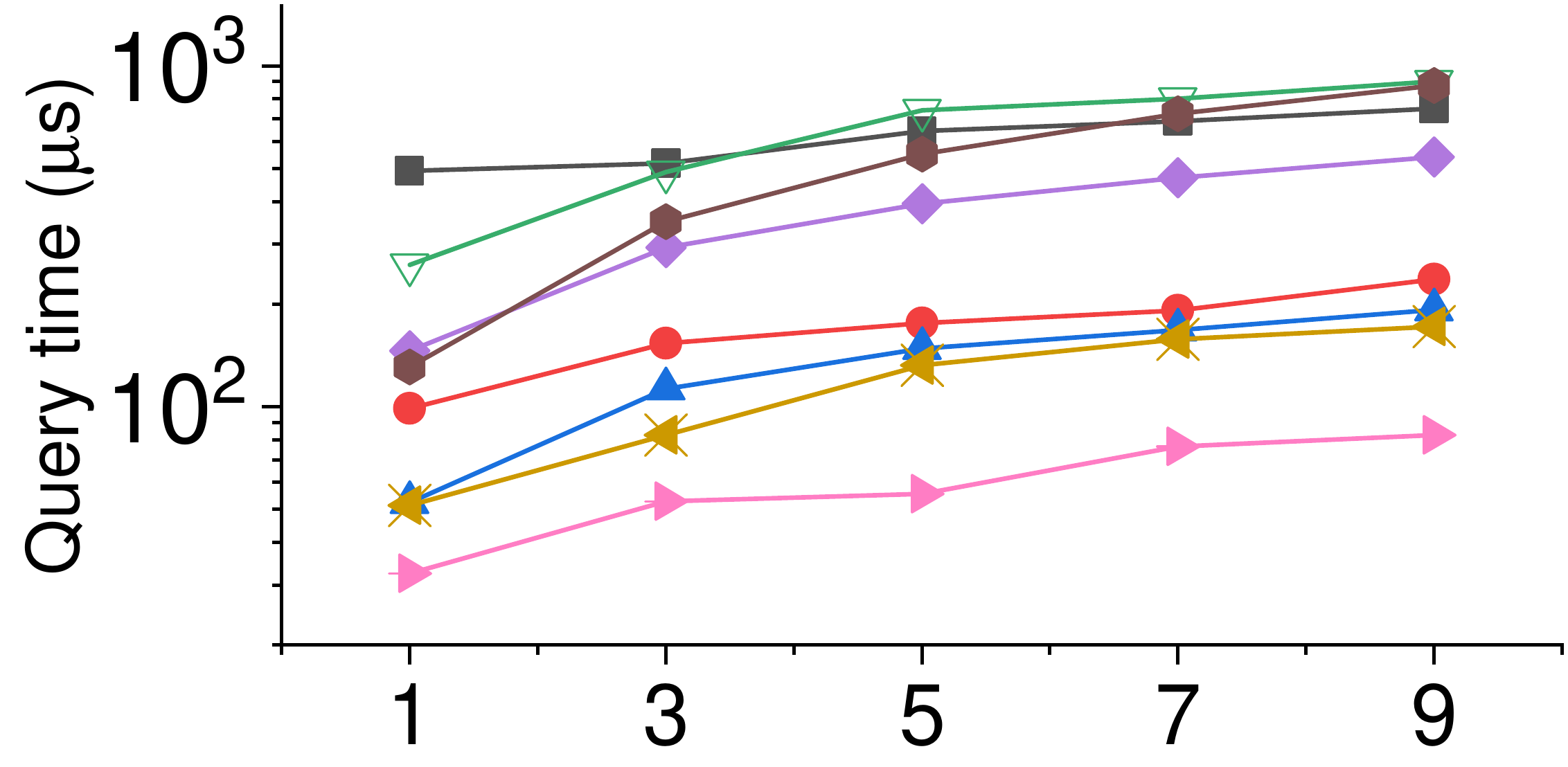}
        }
        \subcaptionbox{SP\label{keys-sp}}{
            \vspace{-0.2cm}
            \includegraphics[width=.23\columnwidth]{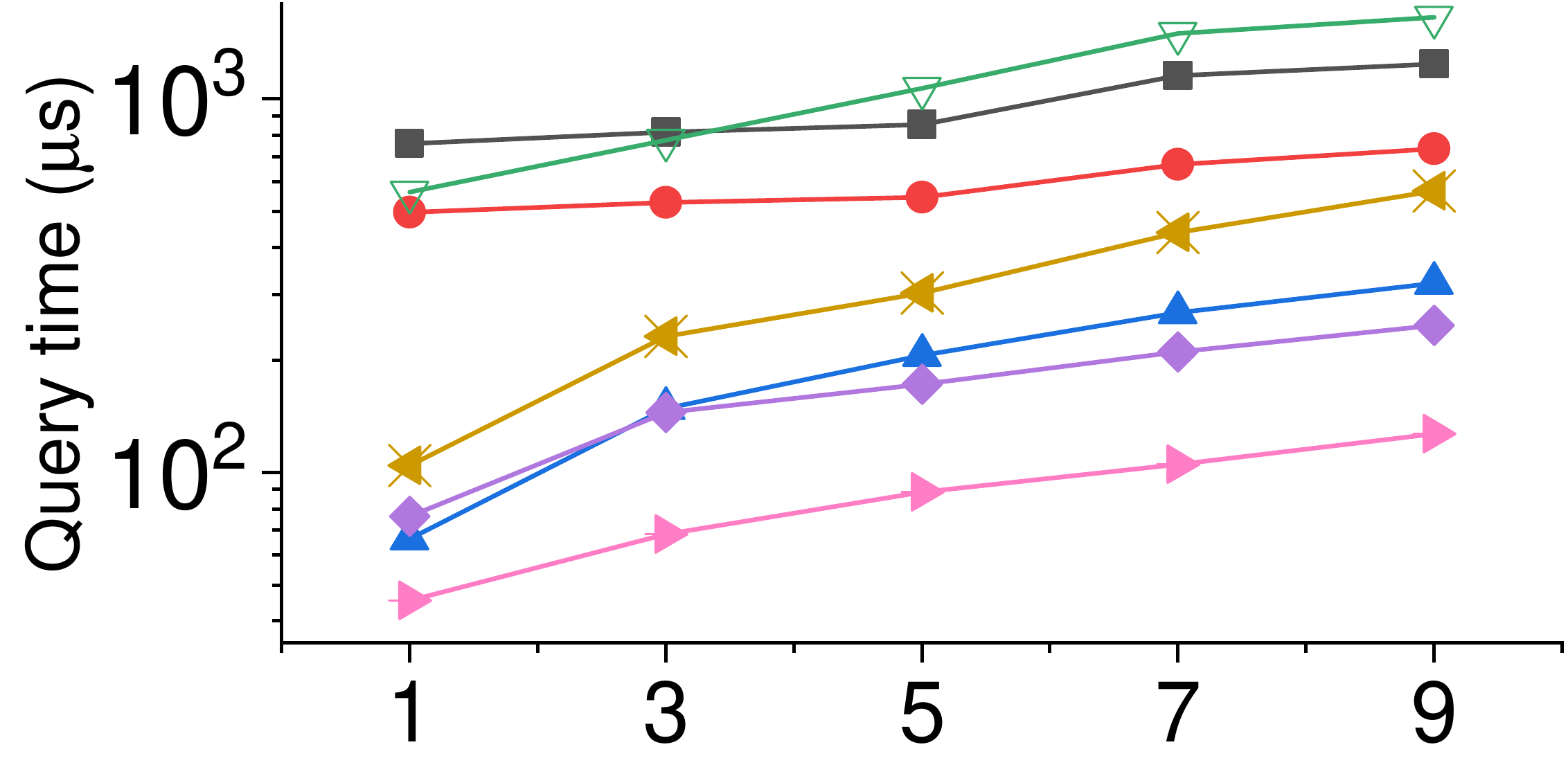}
        }
        \subcaptionbox{BPD\label{keys-bpd}}{
            \vspace{-0.2cm}
            \includegraphics[width=.23\columnwidth]{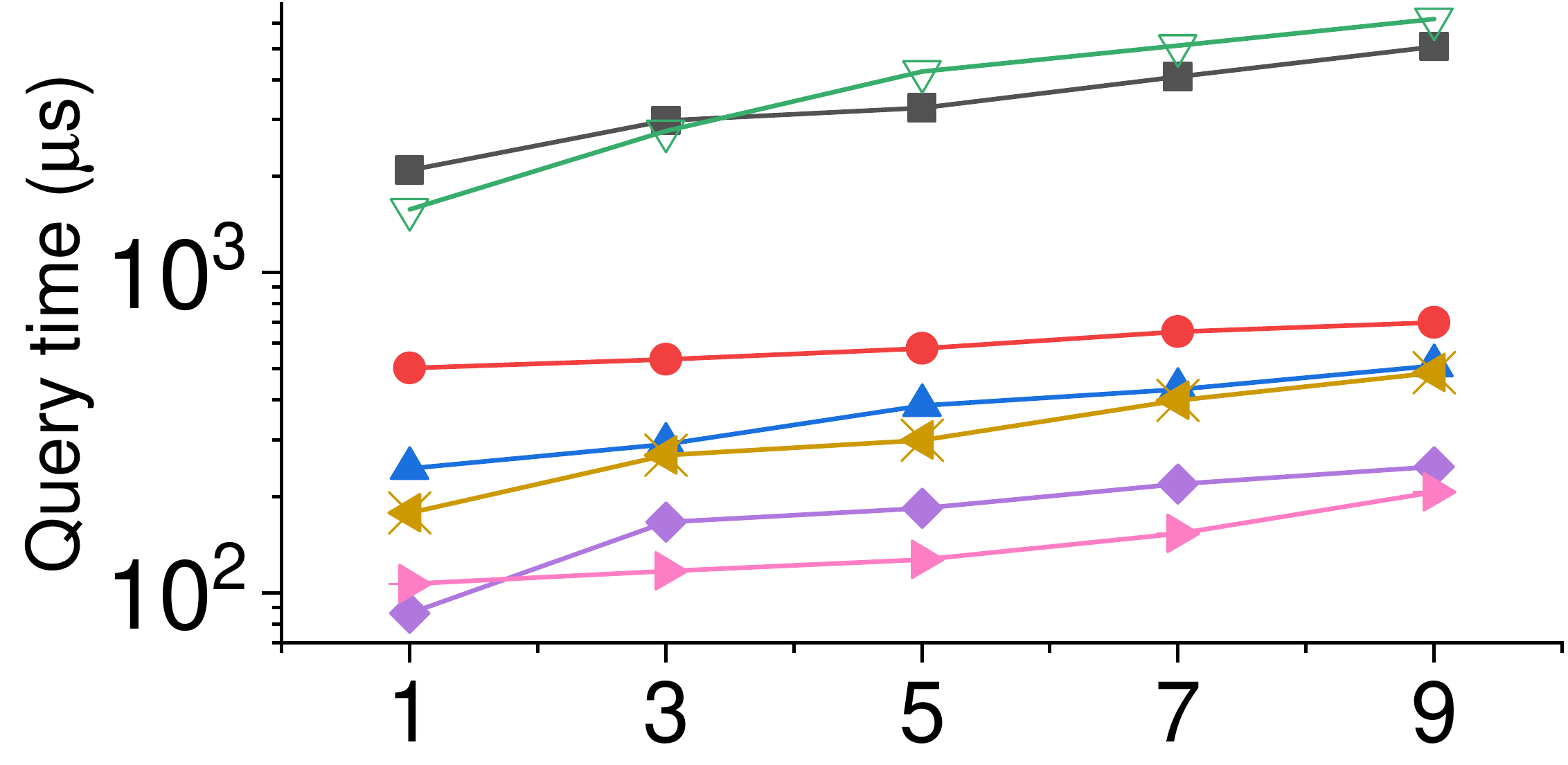}
        }
        \subcaptionbox{OSM\label{keys-osm}}{
            \vspace{-0.2cm}
            \includegraphics[width=.23\columnwidth]{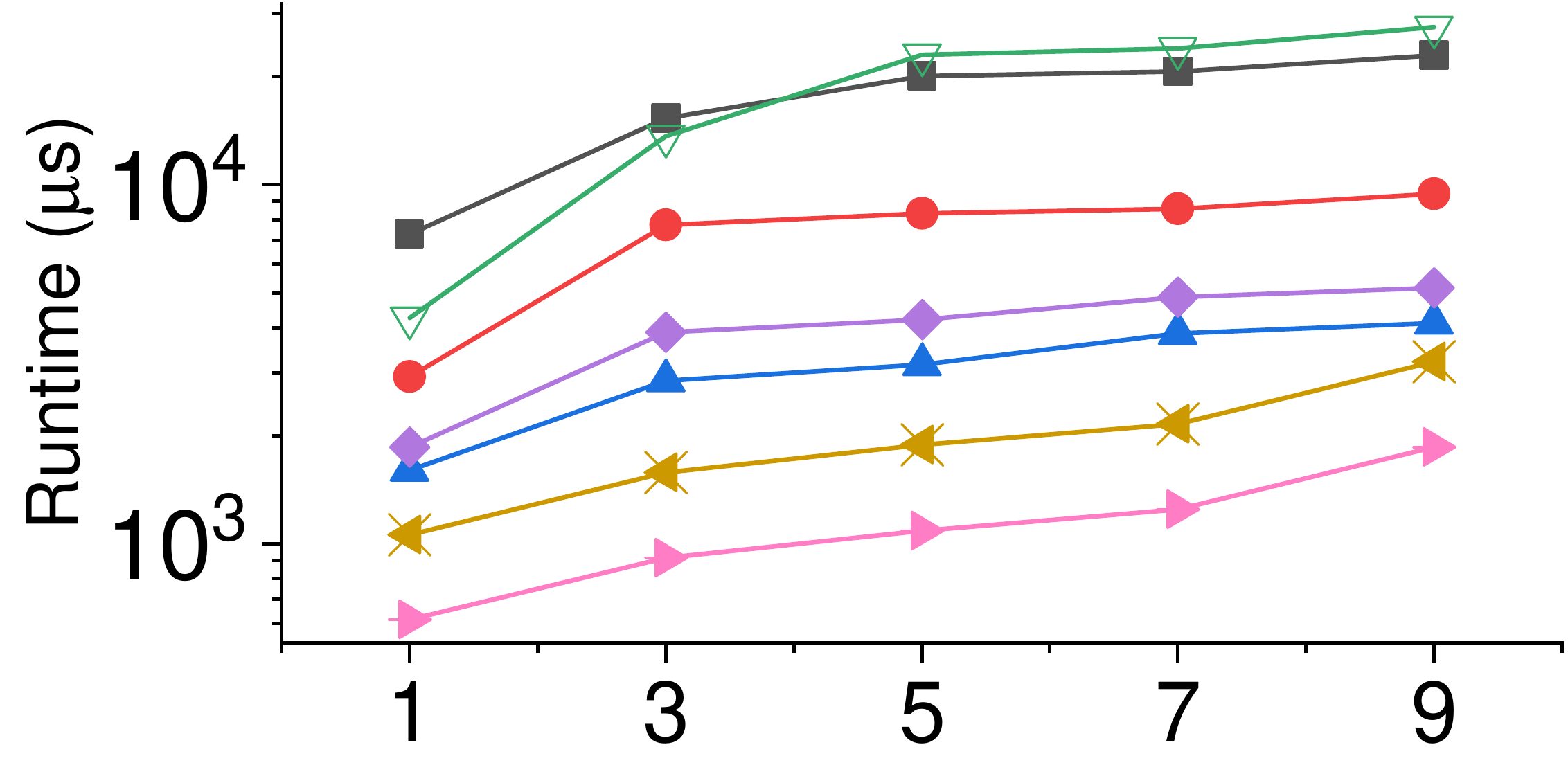}
        }
    \end{minipage}
    \vspace{-0.3cm}
    \caption{Varying no. of query keywords}
    \label{exp:keys}
\end{figure}
\subsubsection{\textbf{Scalability.}}
We generate five sub-datasets of \textbf{OSM} containing from 1 to 100 million objects and run experiments on these sub-datasets. We choose ST2I, LSTI, and Flood-T as our baselines. As shown in Figure \ref{exp:scale}, the query processing time increases with the size of the dataset, but \idxname\ performs more stable.
\begin{figure}[htb]
    \begin{minipage}{.49\linewidth}
        \setcaptionwidth{2in}
        \centering
        \includegraphics[width=.7\linewidth]{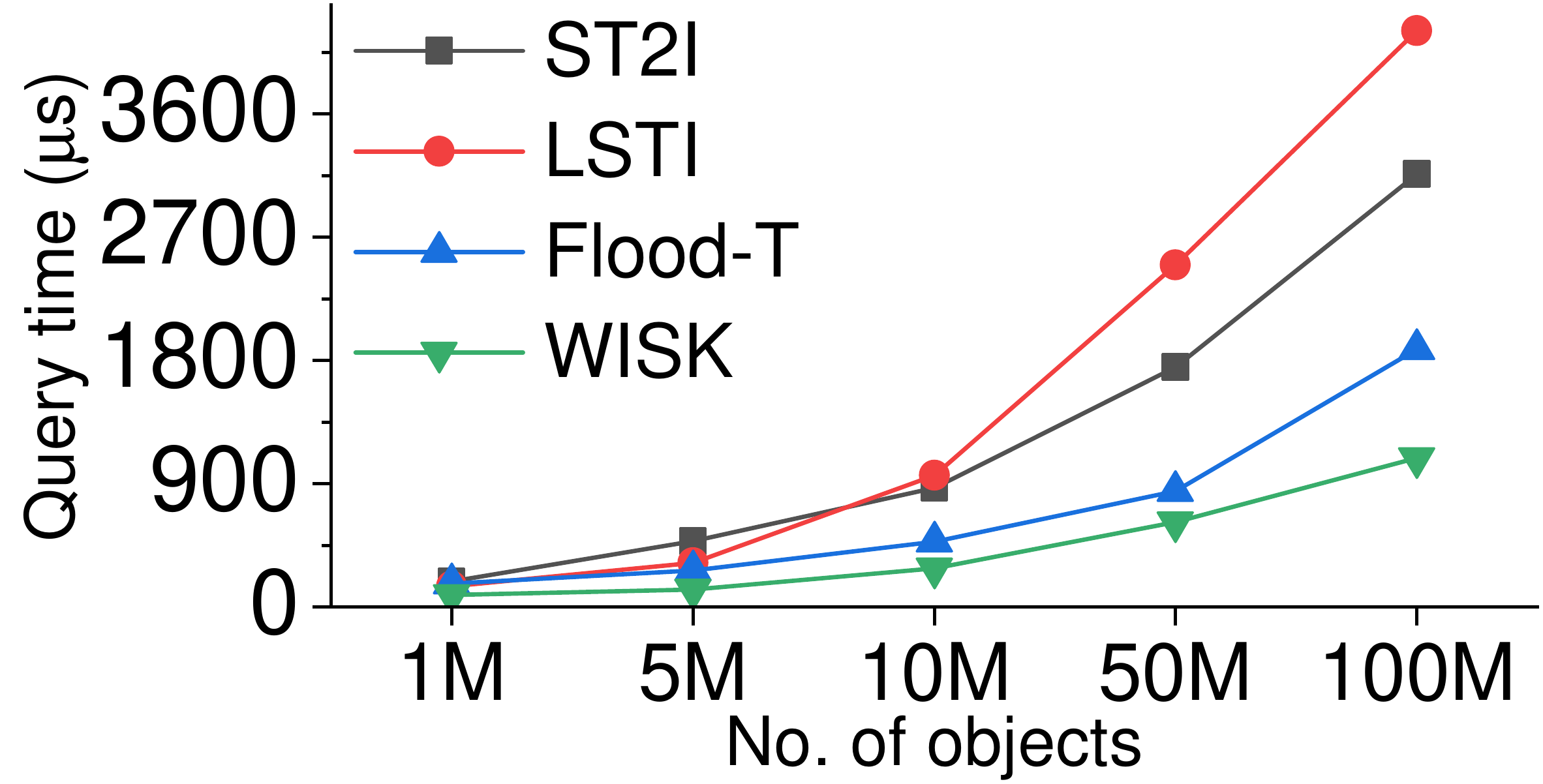}
        \abovecaptionskip 0.1cm
        \caption{Comparison of performance varying dataset size (num records)}
        \label{exp:scale}
    \end{minipage}
    \begin{minipage}{.49\linewidth}
        \setcaptionwidth{2in}
        \centering
        \includegraphics[width=.7\linewidth]{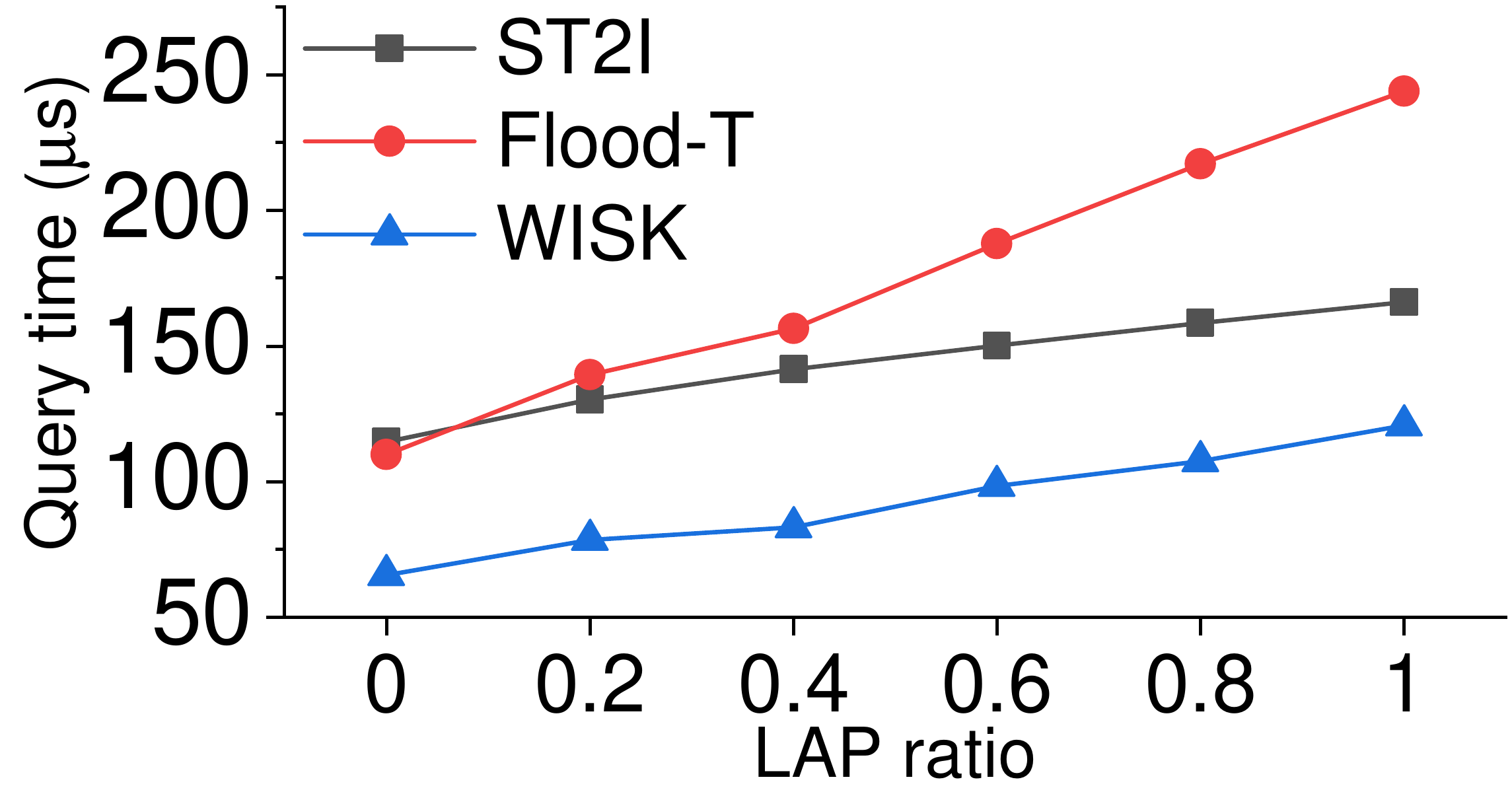}
        \abovecaptionskip 0.1cm
        \caption{Comparison of performance when changing query distribution}
        \label{exp:robust}
    \end{minipage}
\end{figure}
\subsubsection{\textbf{Robustness.}}\label{sec:robust}
We evaluate the performance of ST2I, Flood-T, and \idxname\ when the query distribution changes on \textbf{FS}. We initially train Flood-T and \idxname\ based on the query workload with \textbf{UNI} distribution. Then, we keep the index consistent and adjust the ratio of queries with \textbf{LAP} distribution from 0.2 to 1.0 in the testing query set.
As shown in Figure \ref{exp:robust}, the performance of query-aware indexes becomes worse when query distribution is more different from the training one. However, it can be seen that \idxname\ is more robust than Flood-T due to its improved partitioning algorithm and the bottom-up packing process. Additionally, the query time of ST2I also increases since it ignores the query knowledge when building the index, but the fluctuation is less than the one of Flood-T.

\subsection{Index Size \& Construction}
\subsubsection{\textbf{Index Sizes.}}
Table \ref{exp:space} reports the index sizes. Overall, \idxname\ costs less space than conventional indexes but is comparable to that of the best adapted learned indexes.
In particular, the size of CDIR-Tree is larger than those of the others since each of its nodes has an inverted file. For query efficiency, we have not compressed SFC-Quad, which leads to a larger size. ST2I has the smallest size among the conventional indexes.
Among the learned indexes, the sizes of TFI are the largest, as it uses inverted files.
\idxname\ takes more space than Flood-T on a small dataset since the number of its bottom clusters is similar to the number of the columns of Flood-T, but we build a hierarchical index.
However, on larger datasets, \idxname\ needs less space cost, since Flood-T splits more columns for better performance and builds inverted files for them.
\begin{table}[htb]
\small
\caption{Index structure size}
\vspace{-0.3cm}
\begin{tabular}{|c|c|c|c|c|}
\hline
Index   & \textbf{FS} & \textbf{SP} & \textbf{BPD} & \textbf{OSM} \\ \hline
CDIR-Tree & 2002MB      & 3571MB          & 33.15GB      & 108.45GB     \\ \hline
SFC-Quad   & 1406MB      & 2568MB          & 15.65GB      & 58.71GB      \\ \hline
ST2I    & 761MB       & 1554MB          & 15.18GB      & 56.05GB      \\ \hline
TFI     & 573MB       & 1423MB          & 8.86GB       & 32.05GB      \\ \hline
\textcolor{edit}{LSTI}     & \textcolor{edit}{642MB}       & \textcolor{edit}{1073MB}          & \textcolor{edit}{8.85GB}       & \textcolor{edit}{8.09GB}      \\ \hline
Flood-T & 400MB       & 937MB           & 7.15GB       & 27.94GB      \\ \hline
\idxname    & 483MB       & 980MB           & 7.02GB       & 25.78GB      \\ \hline
\end{tabular}
\label{exp:space}
% \vspace{-0.3cm}
\end{table}

\subsubsection{\textbf{Index Construction Time.}}
We compare the efficiency of index construction algorithms and report the results in Table \ref{exp:construction}.
It takes the minimum time to build SFC-Quad and ST2I on small datasets.
However, the time cost of ST2I significantly increases when the dataset becomes larger, since ST2I is built based on the set of converted points, and its time cost is positively correlated to the total number of keywords.
CDIR-Tree takes the highest time cost because it inserts the objects sequentially.

\begin{table}[htb]
\small
\caption{Index construction time}
\vspace{-0.3cm}
% \resizebox{\columnwidth}{!}{
\begin{tabular}{|c|c|c|c|c|}
\hline
Index                                                            & \textbf{FS} & \textbf{SP} & \textbf{BPD} & \textbf{OSM} \\ \hline
CDIR-Tree                                                          & 391 sec     & 490 sec         & 56.17 min    & 196.87 min   \\ \hline
SFC-Quad                                                         & 20 sec      & 30 sec          & 3.35 min     & 9.18 min     \\ \hline
ST2I                                                             & 19 sec      & 29 sec          & 6.55 min     & 26.23 min    \\ \hline
TFI                                                              & 125 sec     & 283 sec         & 33.75 min    & 143.07 min   \\ \hline
\textcolor{edit}{LSTI}                                           & \textcolor{edit}{23 sec}      & \textcolor{edit}{32 sec}          & \textcolor{edit}{4.16 min}     & \textcolor{edit}{16.32 min}    \\ \hline
\textcolor{edit}{Flood-T}                                        & \textcolor{edit}{188 sec}     & \textcolor{edit}{974 sec}         & \textcolor{edit}{19.66 min}    & \textcolor{edit}{25.97 min}    \\ \hline
\textcolor{edit}{\idxname}                                       & \textcolor{edit}{353 sec}     & \textcolor{edit}{1216 sec}        & \textcolor{edit}{55.28 min}    & \textcolor{edit}{65.37 min}    \\ \hline
\begin{tabular}[c]{@{}c@{}}\textcolor{edit}{\idxname}\\ \textcolor{edit}{(Accelerated)}\end{tabular} & \textcolor{edit}{131 sec}     & \textcolor{edit}{547 sec}         & \textcolor{edit}{12.18 min}    & \textcolor{edit}{17.43 min}    \\ \hline
% Index                                                            & \textbf{FS} & \textbf{SP} & \textbf{BPD} & \textbf{OSM} \\ \hline
% IR-Tree                                                          & 391 sec     & 490 sec         & 56.17 min    & 196.87 min   \\ \hline
% SFC-Quad                                                         & 20 sec      & 30 sec          & 3.35 min     & 9.18 min     \\ \hline
% ST2I                                                             & 19 sec      & 29 sec          & 6.55 min     & 26.23 min    \\ \hline
% TFI                                                              & 125 sec     & 283 sec         & 33.75 min    & 143.07 min   \\ \hline
% LSTI                                                             & 23 sec      & 32 sec          & 4.16 min     & 16.92 min    \\ \hline
% Flood-T                                                          & 143 sec     & 512 sec         & 12.66 min    & 18.86 min    \\ \hline
% \idxname                                                         & 353 sec     & 1216 sec        & 55.28 min    & 65.37 min    \\ \hline
% \begin{tabular}[c]{@{}c@{}}\idxname\\ (Accelerated)\end{tabular} & 151 sec     & 678 sec         & 19.44 min    & 27.21 min    \\ \hline
\end{tabular}
% }
\label{exp:construction}
% \vspace{-0.3cm}
\end{table}

% the time costs of TFI and LSTI are independent of the size of the query workload since they only learn from the underlying dataset
\textcolor{edit}{For the learned indexes, we report the training time. LSTI takes the least time to build because it only needs to scan the whole dataset twice. The time cost of TFI increases significantly when there are more different  keywords. For Flood-T and \idxname, we report the average time as the time costs of query-aware learned indexes usually increase with more query keywords. 
%\idxname costs the most time since it not only finds the bottom clusters by splitting along multiple dimensions but also packs the bottom clusters through RL.
}

\begin{figure}[htb]
    \begin{minipage}{.55\linewidth}
        \centering
        \includegraphics[width=\linewidth]{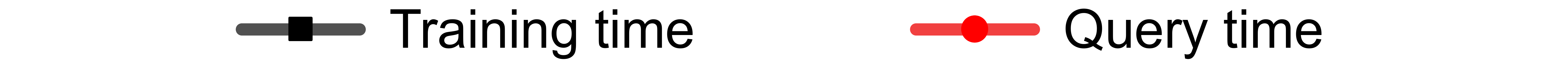}
    \end{minipage}
    \begin{minipage}{\linewidth}
        \centering
        \subcaptionbox{Sampling ratio (\%)\label{sample}}{
            \vspace{-0.2cm}
            \includegraphics[width=.35\columnwidth]{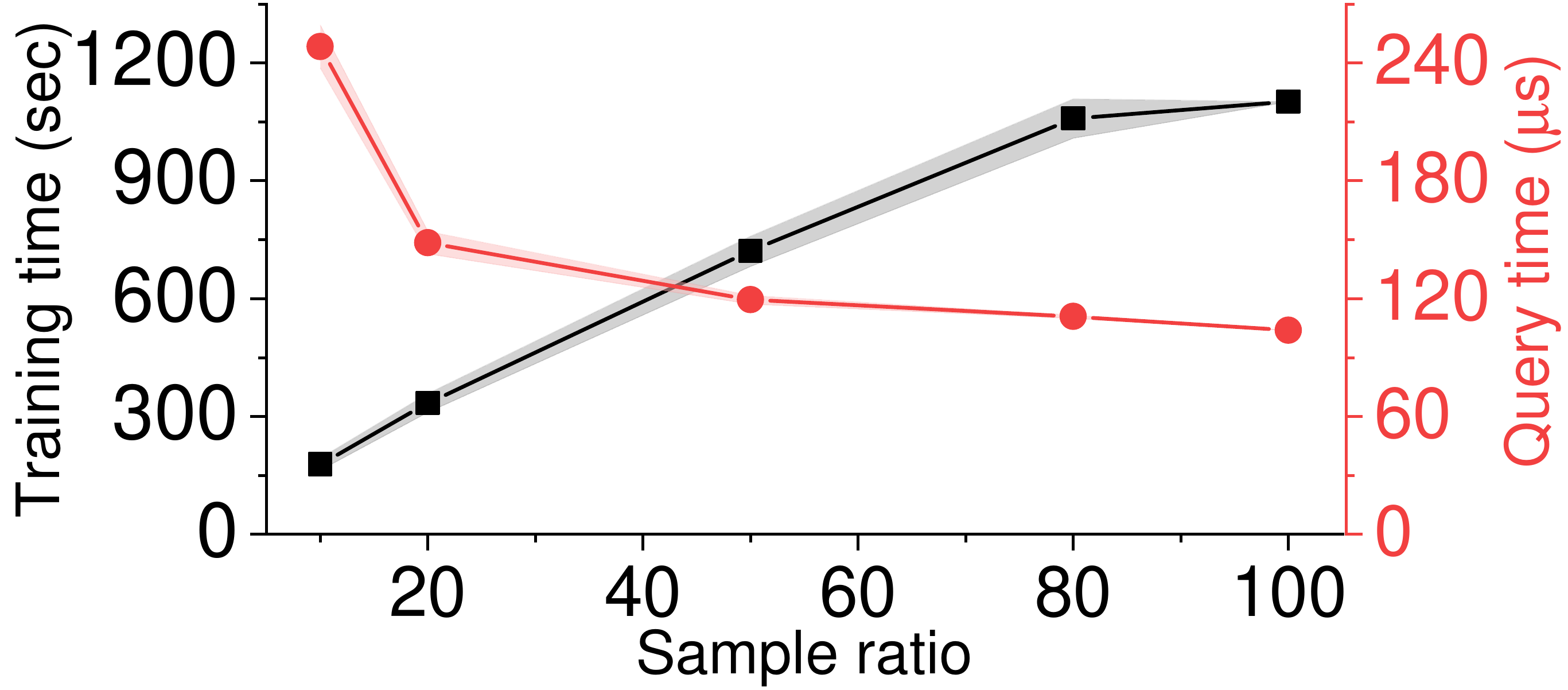}
        }
        \subcaptionbox{Clustering ratio (\%)\label{spectral}}{
            \vspace{-0.2cm}
            \includegraphics[width=.35\columnwidth]{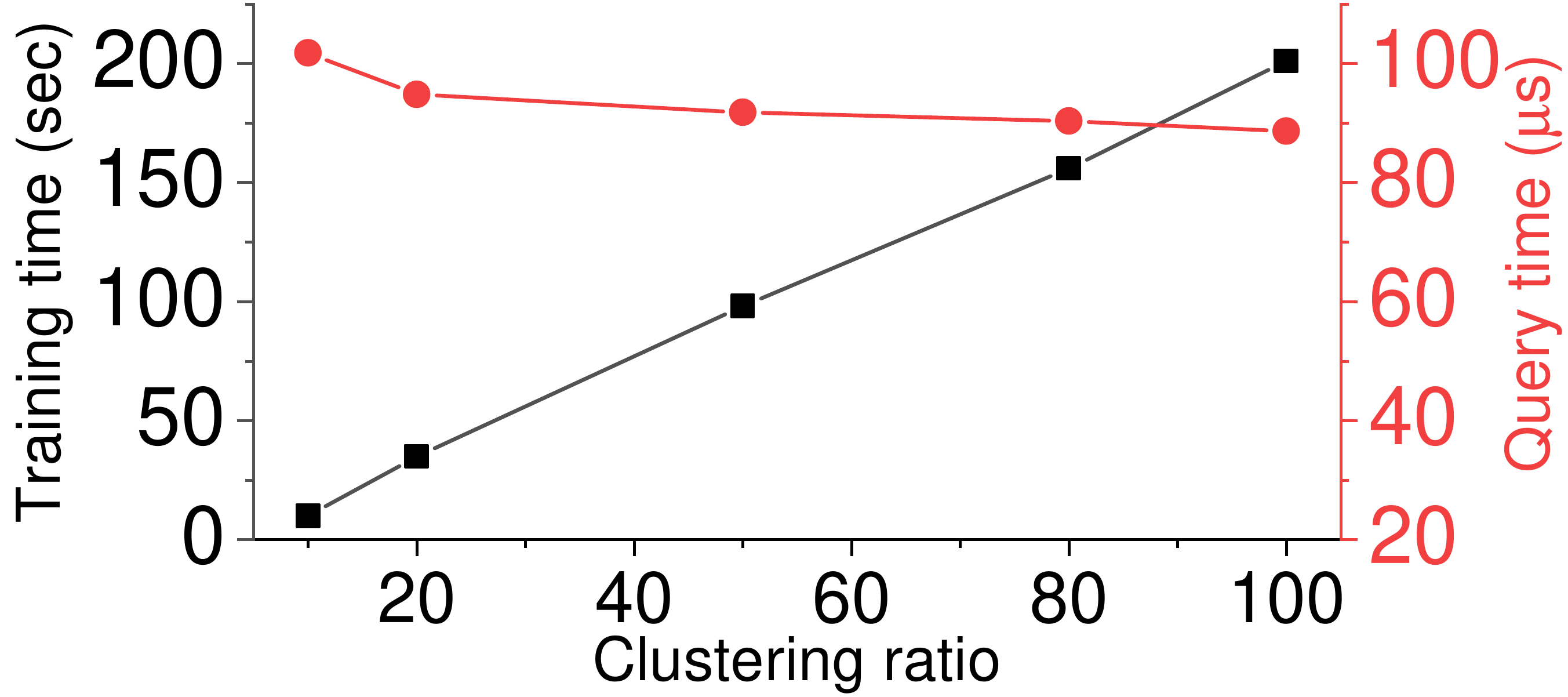}
        }
    \end{minipage}
    \vspace{-0.3cm}
    \caption{Training time and resulting query time on \textbf{SP}}
    \label{fig:speed-up}
    % \vspace{-0.3cm}
\end{figure}

\textcolor{edit}{We designed two training time acceleration techniques as presented in Section~\ref{sec6}. We report the training and query times of \idxname with different sampling ratios in Figure~\ref{sample}. The result of each sampling ratio is an average of 10 runs. While the training time decreases by 72\%, we do not observe a large drop in query performance with a sample of only 30\% of the full query workload. We also observe that the standard deviation (represented by the width of the bands) of the training and query times of \idxname is consistently small for all sampling ratios. This demonstrates that \idxname has a stable performance using stratified sampling.
We vary the clustering ratio, i.e., the number of groups obtained over the number of bottom clusters, to balance the training and querying time. Figure~\ref{spectral} shows that even when the number of bottom clusters decreases by 80\%, the query time of  \idxname still only changes slightly.
We set the sampling ratio to 30\% and the clustering ratio to 20\% and generate the \textbf{Accelerated \idxname}. As shown in Table~4, \idxname has longer training times than the other learned indexes, but the acceleration techniques can reduce index training time up to 4 times while the query time is only affected marginally.}

\eat{
\textcolor{edit}{We can see that \idxname has longer training times than the other learned indexes. It has two steps: finding the bottom clusters and packing the bottom clusters through RL. To reduce the training time of \idxname, we design acceleration techniques for both steps when building \idxname. The first technique is to use sampled  training queries, following a previous work~\cite{DBLP:conf/sigmod/NathanDAK20}. We use stratified sampling~\cite{botev2017variance} to obtain query samples that can better represent the distribution of the original workload. We report the training and query times of \idxname with different sampling ratios in Figure~\ref{sample}. The result of each sampling ratio is an average of 10 runs. While the training time decreases by 72\%, we do not observe much drop in query performance with a sample of only 30\% of the full query workload. 
%The result verifies that the learning time of the bottom clusters is strongly related to the number of queries. \kaiqi{maybe we can remove this sentence. it doesn't help much. Also, this may be a weakness when the number of queries gets larger?}
We also observe that the standard deviation (represented by the width of the bands) of the training and query times of \idxname is consistently small for all sampling ratios. This demonstrates that \idxname has a stable performance using stratified sampling.}
% \jz{not too sure what this is referring to:} and the variance only slightly increases as the query sample size decreases since our sampling method impacts the sampling quality a little. \kaiqi{what do you want to express? Can we just say the low variance of training time and runtime demonstrates that WISK has a stable performance under the sampling method?}}

\textcolor{edit}{The second technique groups the bottom clusters using a clustering algorithm to reduce the number of bottom clusters to be packed in the bottom-up packing step. 
%In this way, we can reduce the training time of bottom-up packing, which depends on the size of state-space representations and is related to the number of bottom clusters. 
%A na\"ive grouping method is to regard the center of each bottom cluster as a sample and apply \textit{k}-means clustering \cite{macqueen1967classification}. However, this method ignores the size of the clusters, leading to inaccurate results. Instead, 
We propose to use spectral clustering \cite{ng2001spectral} with the coordinates of the bottom left and top right points of each bottom cluster as features. We vary the clustering ratio, i.e., the number of groups obtained over the number of bottom clusters, to balance the training and querying time. Figure~\ref{spectral} shows that even when the number of bottom clusters decreases by 80\%, the query time of \idxname does not drop significantly. By setting the sampling ratio to 30\% and the clustering ratio to 20\%, Table~4 shows that the \idxname's training time can be reduced significantly while the query time is only affected marginally.}
}

% \jz{still some blank space at the bottom of a few pages. See if you can make use of these to give room for the algo pseudocode in small font size instead of scriptsize.}

% \textcolor{edit}{Afer applying both acceleration methods, \idxname achieves a dramatic reduction in learning time without sacrificing query performance. \jz{What sampling and clustering ratios are used for this claim? Are these techniques used by default in the rest of the experiments?}}

\subsection{Index Update}
\subsubsection{\textbf{Dynamic Query Workload Changes.}}
To update the index when query distribution changes, we can retrain \idxname periodically following the former study \cite{DBLP:conf/sigmod/NathanDAK20}. %We already show that our index \idxname\ is more robust than the baseline index (Sec.~\ref{sec:robust}), and our \idxname\ can work for a longer period. 
\textcolor{edit}{We generate six workloads for \textbf{FS}. For each workload, we randomly select the query region size and the number of query keywords, and the query distribution adopts the default settings (\textbf{MIX}) and we randomly select the proportions of \textbf{UNI} and \textbf{LAP}. Each workload runs for 30 minutes and consists of 100 queries.
% , and each query consists of up to nine query keywords.
As Figure \ref{fig:query-retrain} shows, at the start of each 30-minute period, i.e., a new query workload starts, retraining \idxname is triggered, which happens in a separate thread and does not interrupt the query processing. While the index is being rebuilt, \idxname runs the new queries on its old layout, which explains the jumps in the figure. The retraining lasts about 3 minutes, and then \idxname switches to the new layout adapted to the new query workload. Thus, the query time drops back again.} 

\textcolor{edit}{
To capture minor changes in query distribution when retraining, we propose to apply incremental updates to the original index, in parallel to the retraining process. We locate the bottom clusters that are affected by the new queries, re-partition these clusters if the query costs can be reduced using new queries, and then insert the new clusters back into the non-leaf nodes they previously belonged to. 
The incremental updates may also help reduce the query times, which explains the multiple drops (e.g., at 00:30 and 01:30).}

\textcolor{edit}{Figure \ref{fig:query-retrain} also indicates the necessity of learning from the query workload. When a new workload arrives, the performance of \idxname drops due to its outdated layout. Re-learning the layouts based on the new workload mitigates the impact of the changing query distribution. The other two indexes do not utilize the queries during construction, which leads to much worse performance than \idxname (e.g. at 01:00, 02:00, and 02:30) on the skewed query workloads (i.e. high proportion of LAP).}
\begin{figure}[htb]
    \centering
    \includegraphics[width=.7\linewidth]{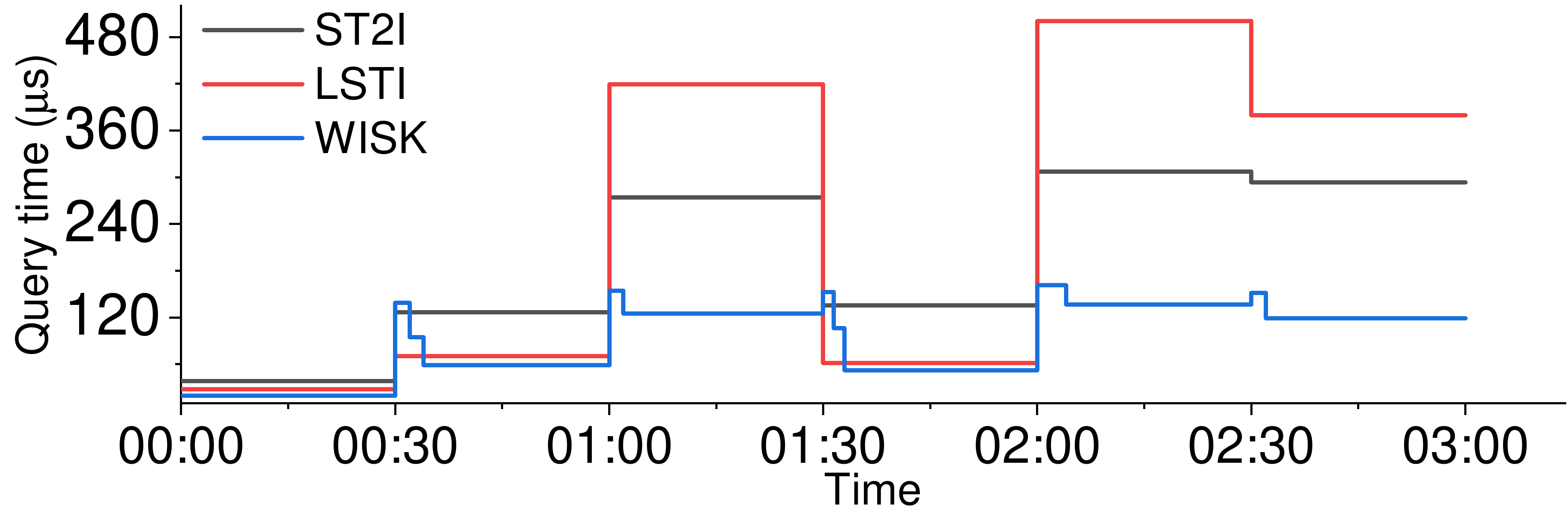}
    \vspace{-0.3cm}
    \caption{Impact of dynamic workload changes}
    \label{fig:query-retrain}
    % \vspace{-0.3cm}
\end{figure}

\subsubsection{\textbf{Data Insertion.}}
\textcolor{edit}{\idxname also handles data insertion well. Given a new object \textit{o}, we can traverse \idxname to find the bottom cluster where \textit{o} falls. Next, we update the inverted file or the bitmap of the affected nodes to obtain an updated index. This simple process, however, cannot guarantee an optimal layout because the bottom clusters might need to be split after the insertion. Thus, we buffer the inserted objects and retrain our index when the buffer is full.}

\textcolor{edit}{We set the buffer size at 100,000 (around 20MB) and run experiments to examine the impact of data insertions. We randomly select 500,0000 objects from \textbf{FS} for the insertions. We insert 100,000 objects every 30 minutes. %Note that all insertions are within the original data space because any out-of-range insertions will not be part of the results of the query workload. 
Figure~\ref{fig:data-insert} shows the performance of ST2I, LSTI, and \idxname. We compare with \idxname using the simple insertion process without retraining. It can be seen that the query time of all indexes increases when more objects are inserted. Between the two \idxname variants, we see that the query time of \idxname without retraining increases faster with more insertions, thus verifying the importance of retraining in improving the query time of \idxname under dynamic data settings. 
%\jz{Should show the average insertion time too.}
We also observe that the retraining process takes only 1 to 2 minutes each time, since only the affected bottom clusters need to be split, and the RL-based packing can inherit knowledge from the previous training process, i.e., the unaffected bottom clusters are initially packed into the previous corresponding upper nodes.}
\begin{figure}[htb]
    \centering
    \includegraphics[width=.7\linewidth]{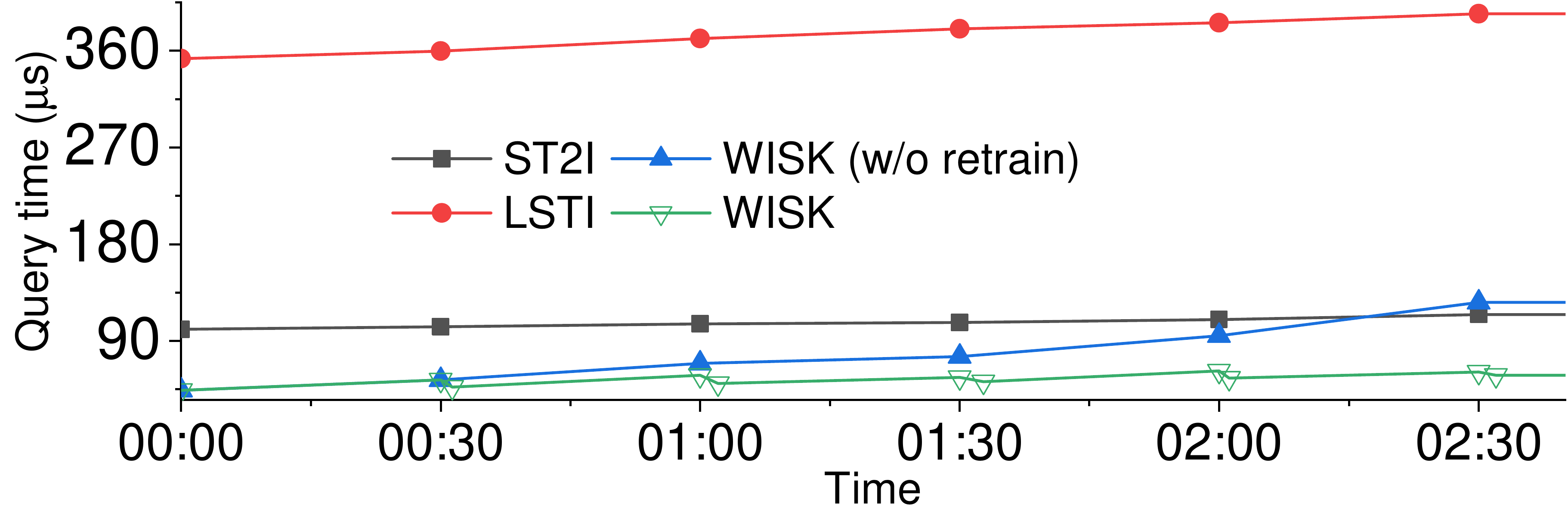}
    \vspace{-0.3cm}
    \caption{Impact of data insertion}
    \label{fig:data-insert}
    % \vspace{-0.3cm}
\end{figure}

\subsection{Ablation Study}
\subsubsection{\textbf{RL-based Packing.}}
We conduct an experiment to compare the cost at the leaf level and that at the non-leaf level. Figure \ref{exp:leaf-inter} shows that the time at the leaf level dominates the query processing time, which occupies around 90\% of the total cost, verifying the way to define cost function is reasonable. In Figure \ref{exp:hier-comp}, we observe that packing our bottom clusters by directly using CDIR-Tree construction method might affect the query time because it may pack some leaf nodes intersecting with various queries.
\begin{figure}[htb]
    \begin{minipage}{.49\linewidth}
        \setcaptionwidth{2in}
        \centering
        \includegraphics[width=.7\linewidth]{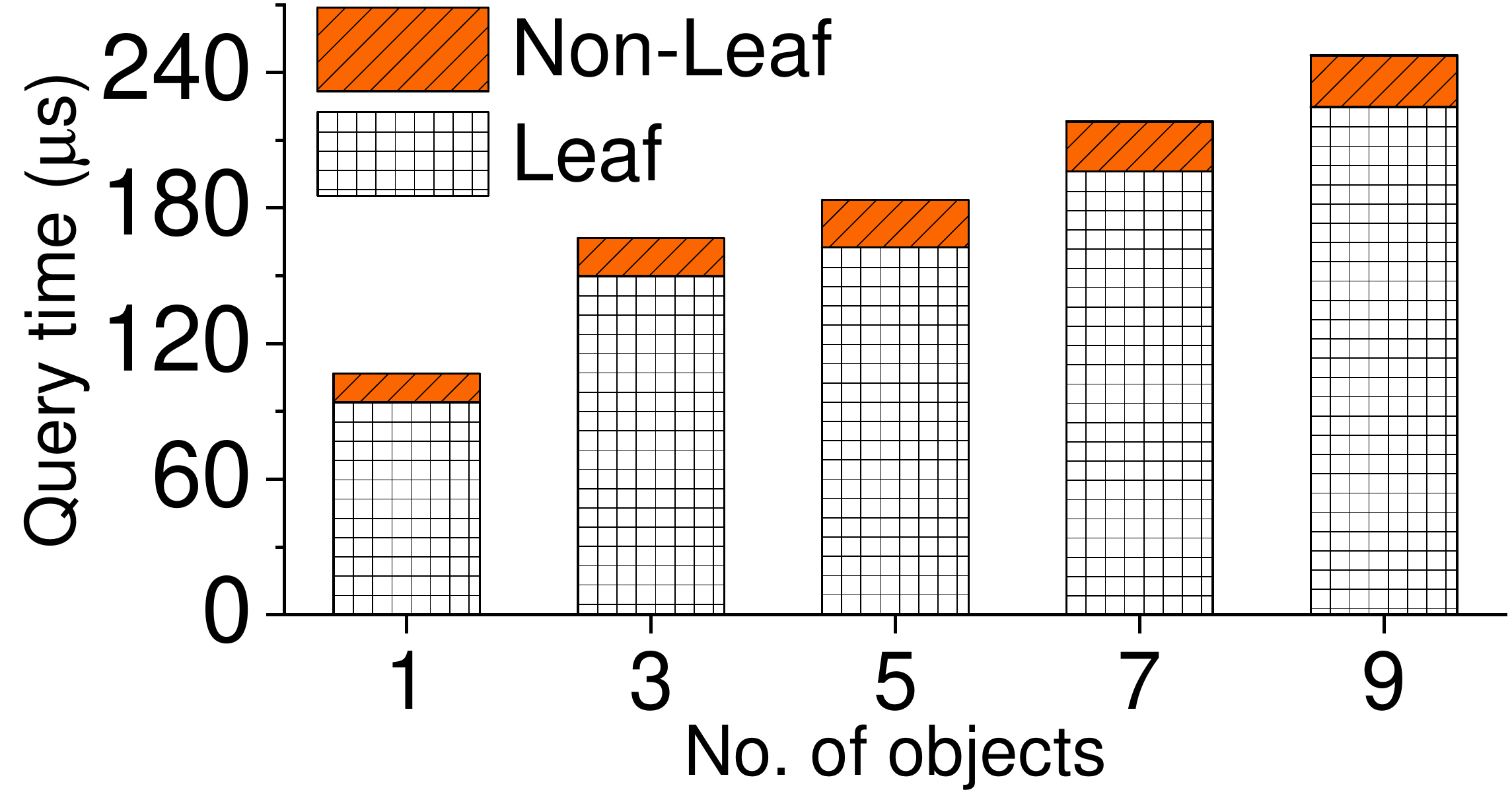}
        \abovecaptionskip 0.1cm
        \caption{Comparison of processing time}
        \label{exp:leaf-inter}
    \end{minipage}
    \begin{minipage}{.49\linewidth}
        \setcaptionwidth{2in}
        \centering
        \includegraphics[width=.7\linewidth]{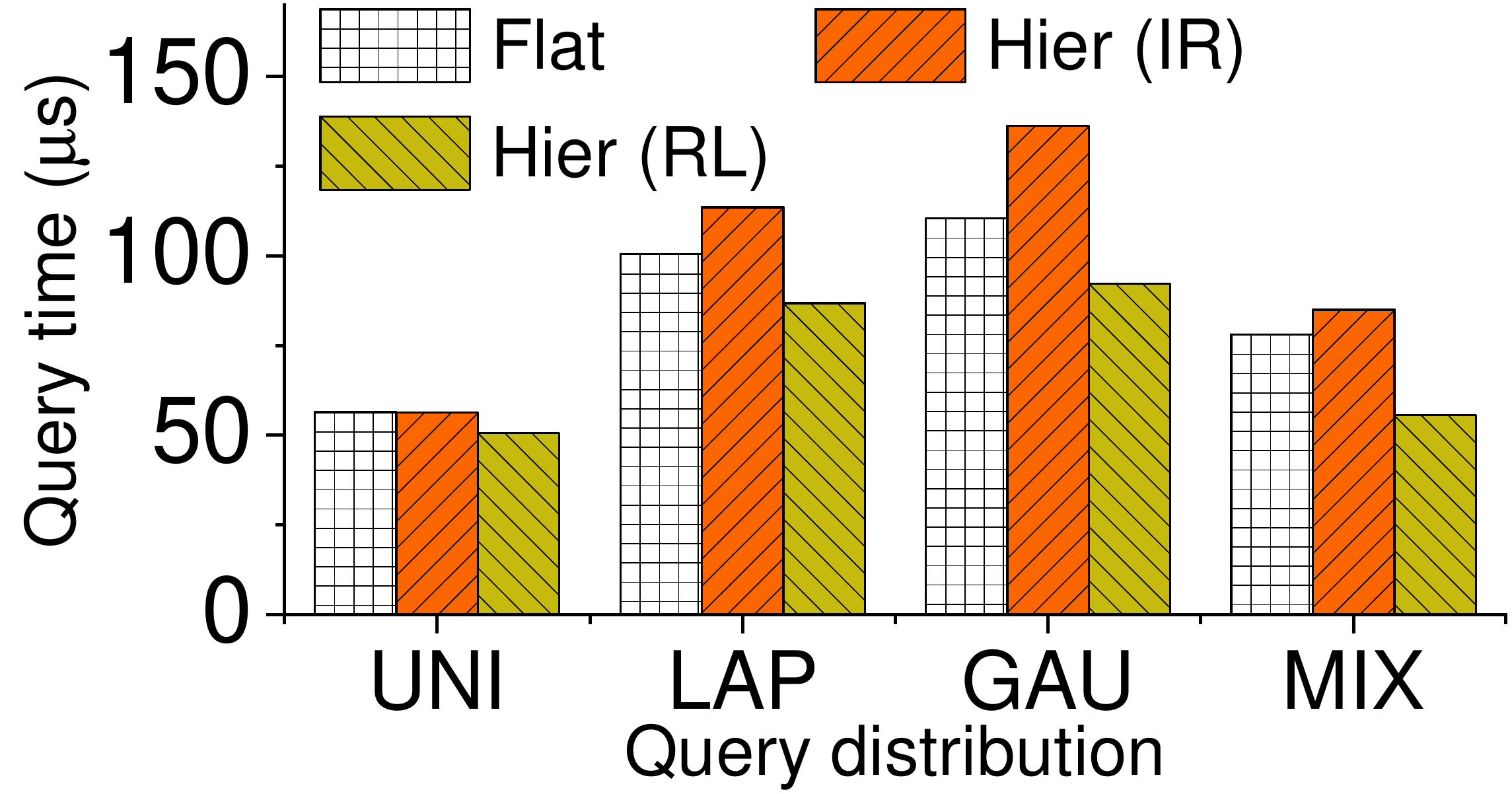}
        \abovecaptionskip 0.1cm
        \caption{Comparison of packing methods}
        \label{exp:hier-comp}
    \end{minipage}
\end{figure}

We evaluate the effectiveness of the bottom-up construction process. 
% Figure \ref{fig:abl_rl} shows the performance comparison on \textbf{BPD}.
As shown in Figure \ref{rl-bpd-keys}, the improvement of the different number of keywords is similar. This is because the number of query keywords has little effect on the number of bottom clusters. Thus, the improvement is stable using this RL-based grouping algorithm.
\begin{figure}[htb]
    \begin{minipage}{.55\linewidth}
        \centering
        \includegraphics[width=\linewidth]{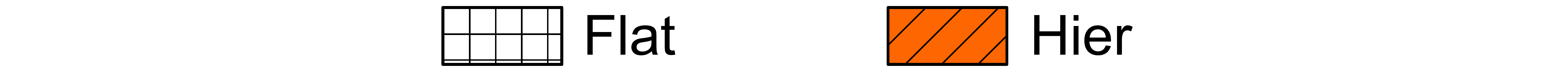}
    \end{minipage}
    \begin{minipage}{\linewidth}
        \centering
        \subcaptionbox{Varying no. of keywords\label{rl-bpd-keys}}{
            \vspace{-0.2cm}
            \includegraphics[width=.35\columnwidth]{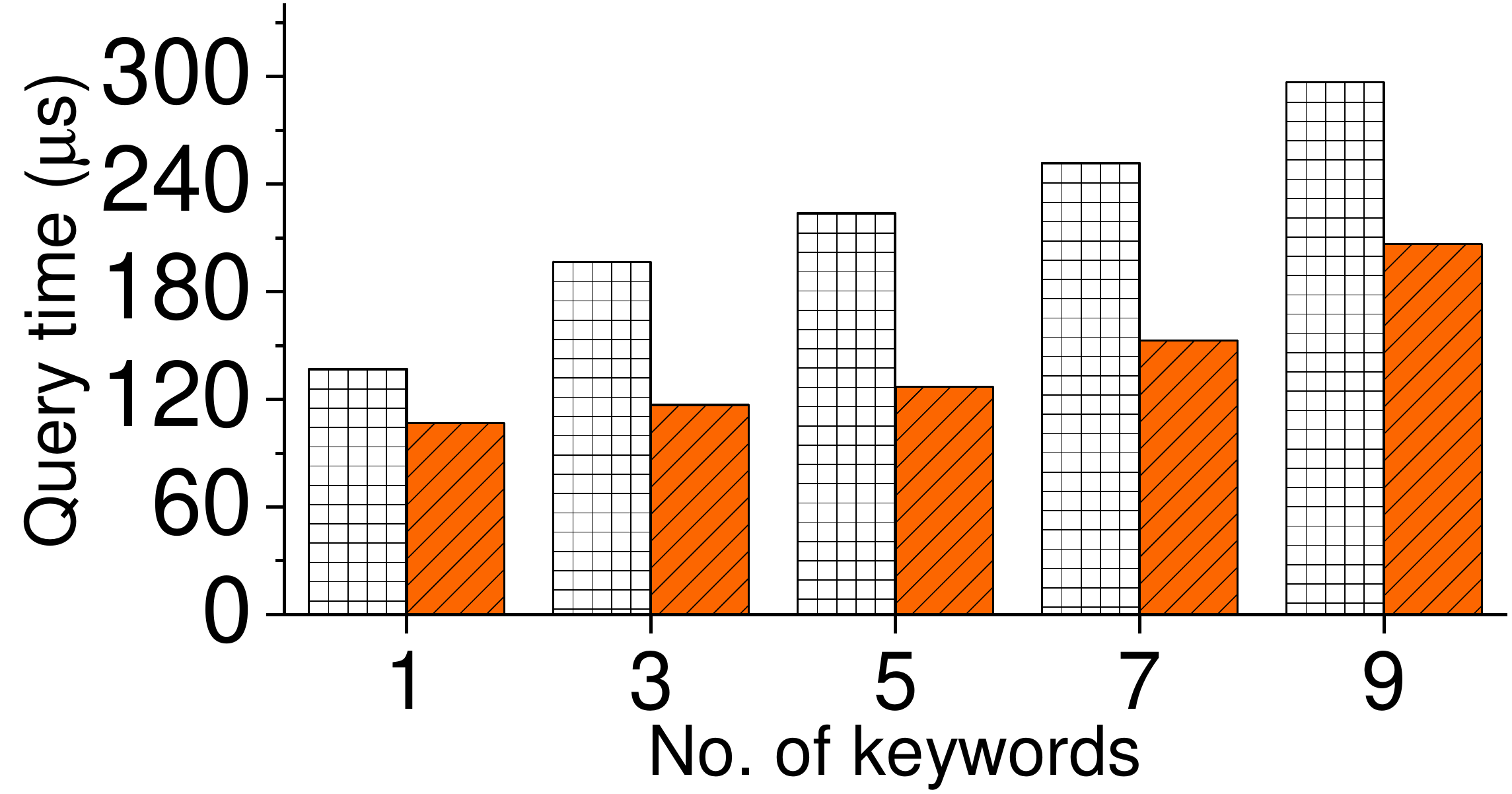}
        }
        \subcaptionbox{Varying query region size\label{rl-bpd-size}}{
            \vspace{-0.2cm}
            \includegraphics[width=.35\columnwidth]{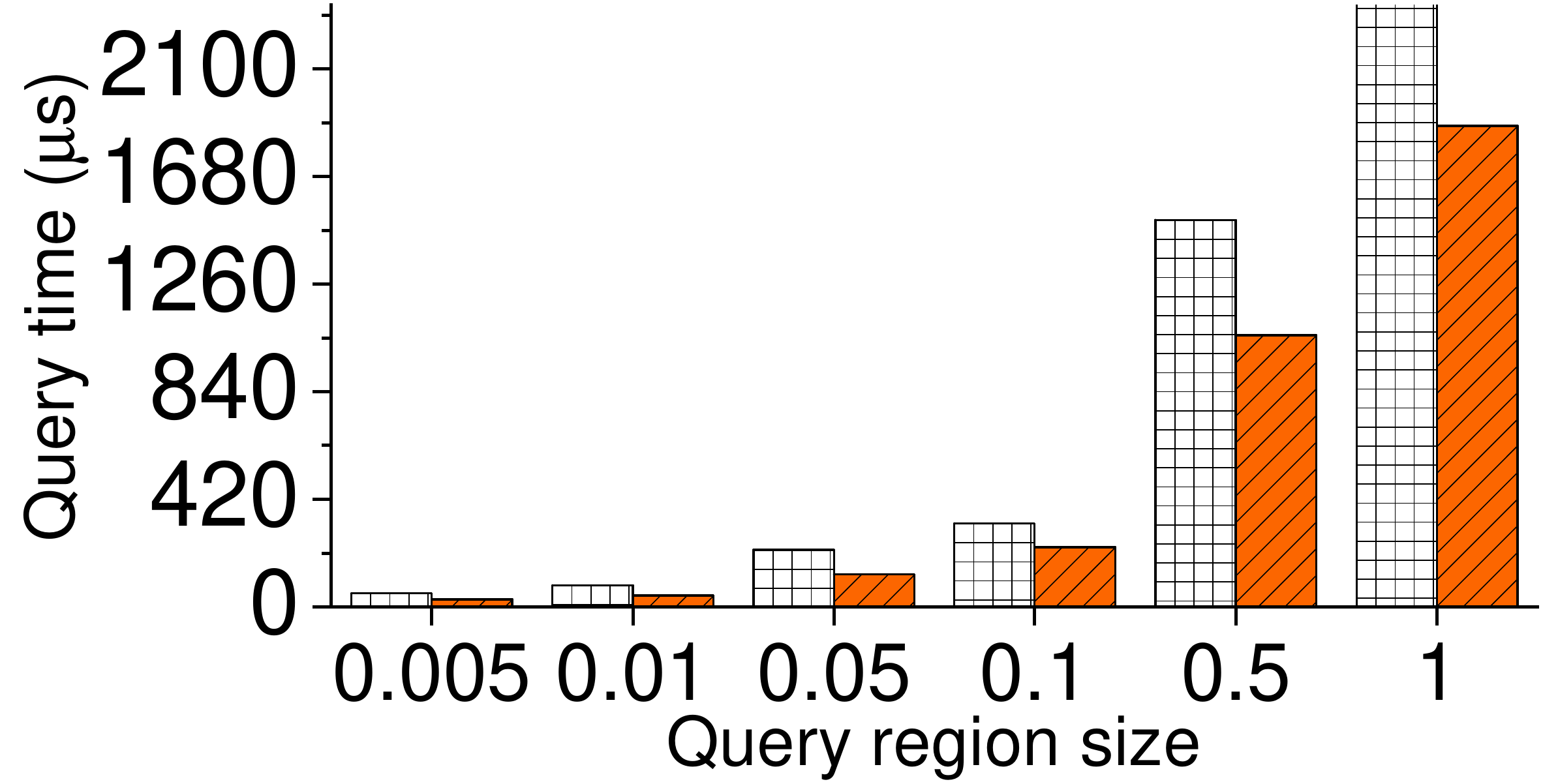}
        }
    \end{minipage}
    \vspace{-0.3cm}
    \caption{Comparison of index layouts on \textbf{BPD}}
    \label{fig:abl_rl}
    % \vspace{-0.3cm}
\end{figure}

It can be seen from Figure \ref{rl-bpd-size} that the improvement of hierarchical indexes becomes more significant for queries of larger region sizes. This is because queries of larger region size correspond to a wider space covered by these queries and more bottom nodes such that the bottom-up packing process can reduce more filtering cost. However, we also observe that the improvement becomes stable as the region size continues to get larger, as the query region covers most of the data space.

\subsubsection{\textbf{CDF Model.}}
\textcolor{edit}{When generating bottom clusters, we use CDF models to estimate the number of objects sharing the same keywords inside a region. To reduce the parameters, we propose to use Gaussian functions and NNs for keywords with different frequencies. In Figure~\ref{gau-nn}, we compare our method with the settings in which only Gaussian models or NN models are used. Although the Gaussian-only method has the least training time, its estimation results are inaccurate, leading to much worse query time. In contrast, the NN-only method achieves the best query time, but it needs much more training time. In comparison, the proposed mixed method achieves similar query performance as the NN-only method without significantly increasing the training time.}

\begin{figure}[htb]
    \centering
    \subcaptionbox{Model selection\label{gau-nn}}{
        \vspace{-0.2cm}
        \includegraphics[width=.35\columnwidth]{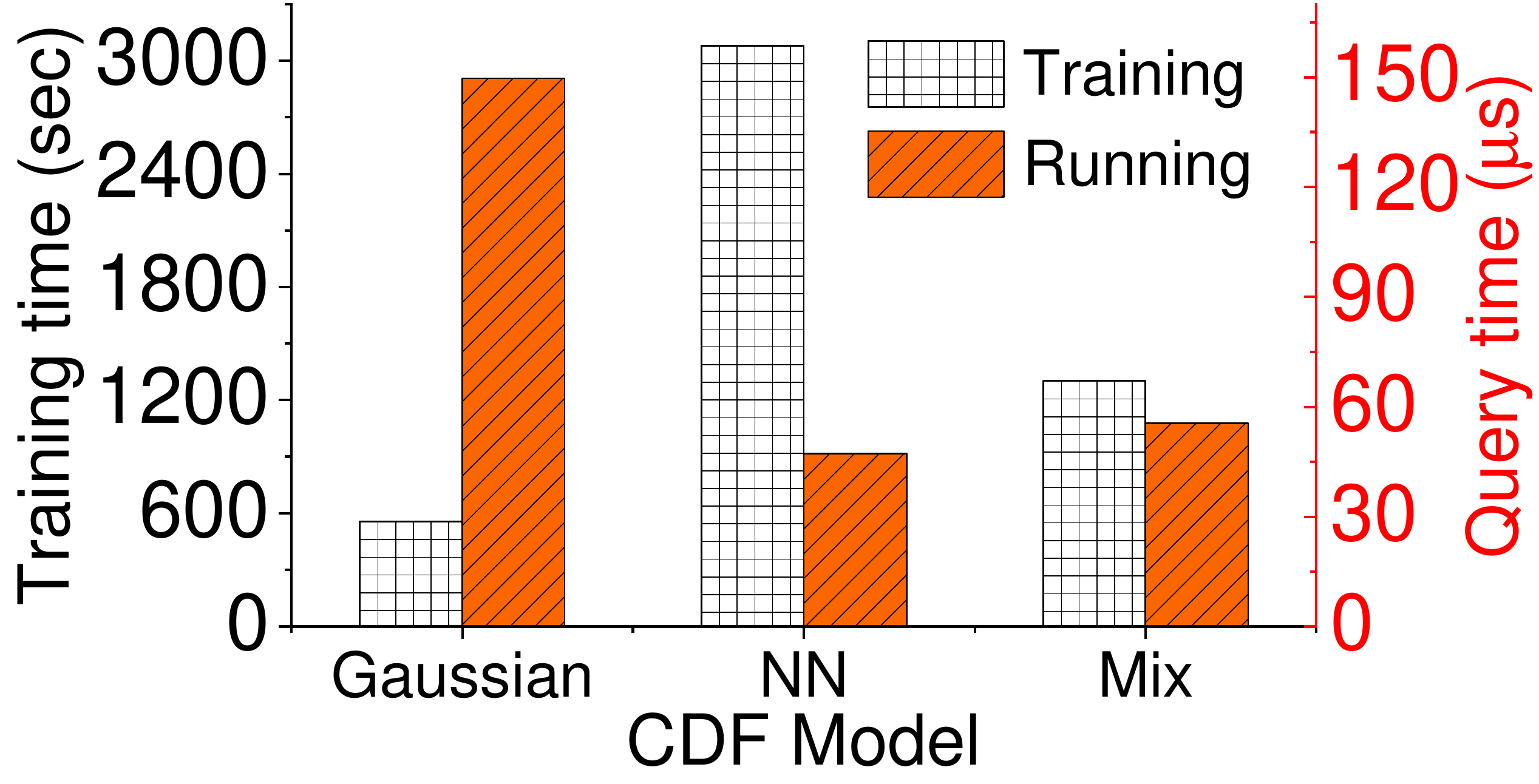}
    }
    \subcaptionbox{1D vs. 2D model \label{cdf-loss}}{
        \vspace{-0.2cm}
        \includegraphics[width=.35\columnwidth]{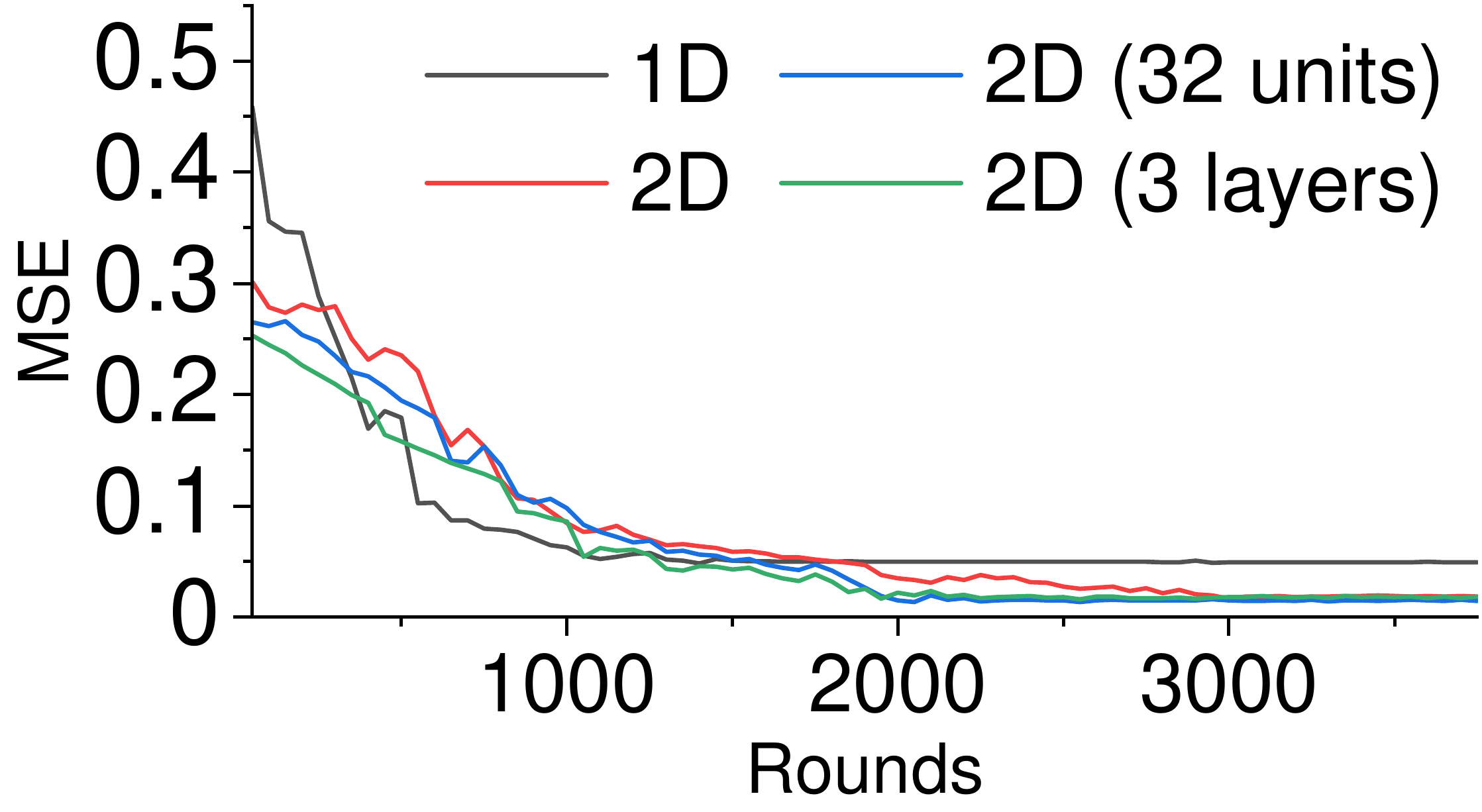}
    }
    \vspace{-0.3cm}
    \caption{Effect of different CDF settings}
    \label{fig:cdf-model}
    % \vspace{-0.3cm}
\end{figure}
\textcolor{edit}{To speed up pre-processing, we assume that the two spatial dimensions are independent following the existing work \cite{DBLP:conf/sigmod/NathanDAK20}. We next study the impact of such an assumption.} 

\textcolor{edit}{
We observe that keywords with higher frequency have a stronger impact on the query time, and thus they need a more accurate CDF estimation. We run experiments with a randomly selected high-frequency keyword on \textbf{FS}.
We train two marginal (1D) models and a joint (2D) model for the selected keyword on the whole dataset. The 1D models and the 2D model all employ a neural network with 2 hidden layers and 16 hidden units. We also compare with two variants of the 2D model, one uses more hidden units (32 units) and the other uses more layers (3 layers). 
We randomly sample 1,000 rectangular query regions within the data space as testing data. For each query region, we compute the proportion of objects that fall within it as the ground truth. The product of the two 1D models and the output of the 2D models are used as the estimation results corresponding to 1D and 2D models, respectively. For every 50 training rounds, we calculate the mean squared error (MSE) between the estimations and the ground truth. Figure \ref{cdf-loss} shows that the 1D model converges much faster while its final loss is comparable to those of the 2D models. These results justify the use of 1D CDFs in our method.}

% \textcolor{edit}{
% We run experiments with a randomly selected high-frequency keyword on \textbf{FS}.
% %
% We train two marginal (1D) models and a joint (2D) model for the selected keyword on the whole dataset. We randomly sample 1,000 coordinates within the data space as testing data. For each coordinate, we compute the proportion of objects within the rectangle formed by it as the ground truth. For every 50 training rounds, the product of two marginal models and the output of the joint model are used as the results corresponding to 1D and 2D models. Then, we calculate the sum of the mean squared error (MSE) between the predictions and the ground truth. Both the 1D models and the 2D model employ a single-layer NN with 16 hidden units. We also compare with variants of the 2D model with more hidden units (32 units) or more layers (3 layers). Figure \ref{cdf-loss} shows that the 1D model converges much faster while its final loss is comparable to those of the 2D models. These results justify the use of 1D CDFs in our method.}

\subsubsection{\textbf{Frequent Itemset (FI)}}
The FI mining extracts the frequent interrelation among keywords. In this experiment, the minimum support is set to $0.01\text{\textperthousand}$ and the maximum size of target itemsets is equal to the number of query keywords.
Figure \ref{fig:abl_fi} shows the effect of FI on the index construction by query efficiency on \textbf{FS} and \textbf{BPD}. 
We see that the FI mining improves the performance of \idxname\ consistently when there is more than one query keyword.
Without the FI mining component, we learn models of each keyword separately where redundancies occur when there is more than one keyword. We also observe that this adaptation is more beneficial given more query keywords. In general, more query keywords lead to a higher possibility of resulting in redundancy since the probability of an object including more than one query keyword increases.

\begin{figure}[htb]
    \begin{minipage}{.55\linewidth}
        \centering
        \includegraphics[width=\linewidth]{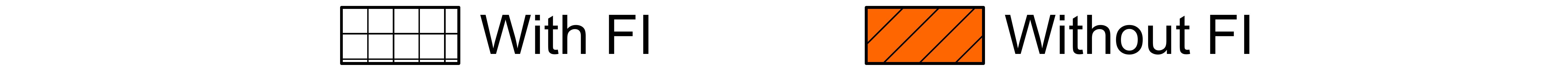}
    \end{minipage}
    \begin{minipage}{\linewidth}
        \centering
        \subcaptionbox{FS\label{fi-fs}}{
            \vspace{-0.2cm}
            \includegraphics[width=.35\columnwidth]{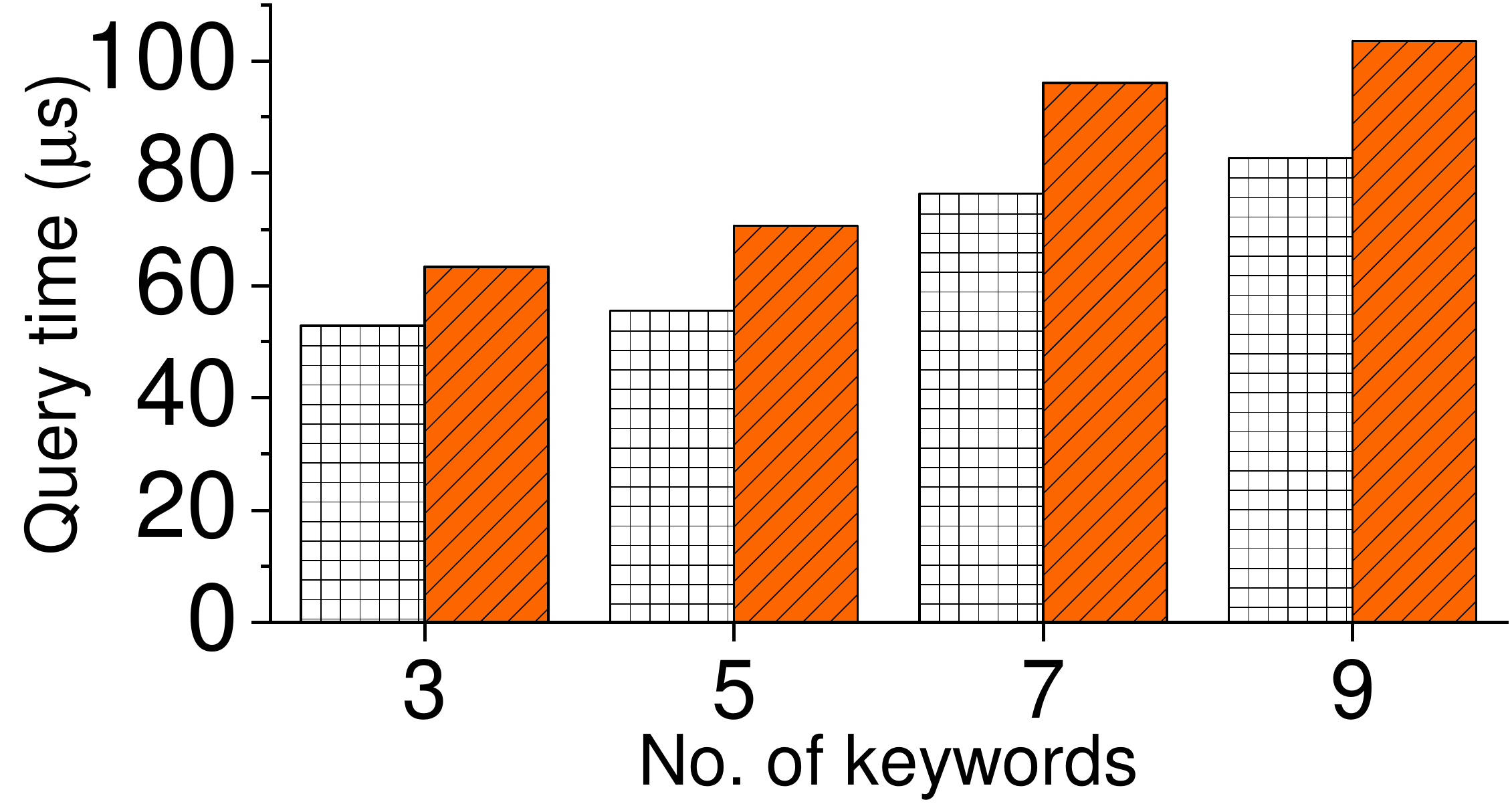}
        }
        \subcaptionbox{BPD\label{fi-bpd}}{
            \vspace{-0.2cm}
            \includegraphics[width=.35\columnwidth]{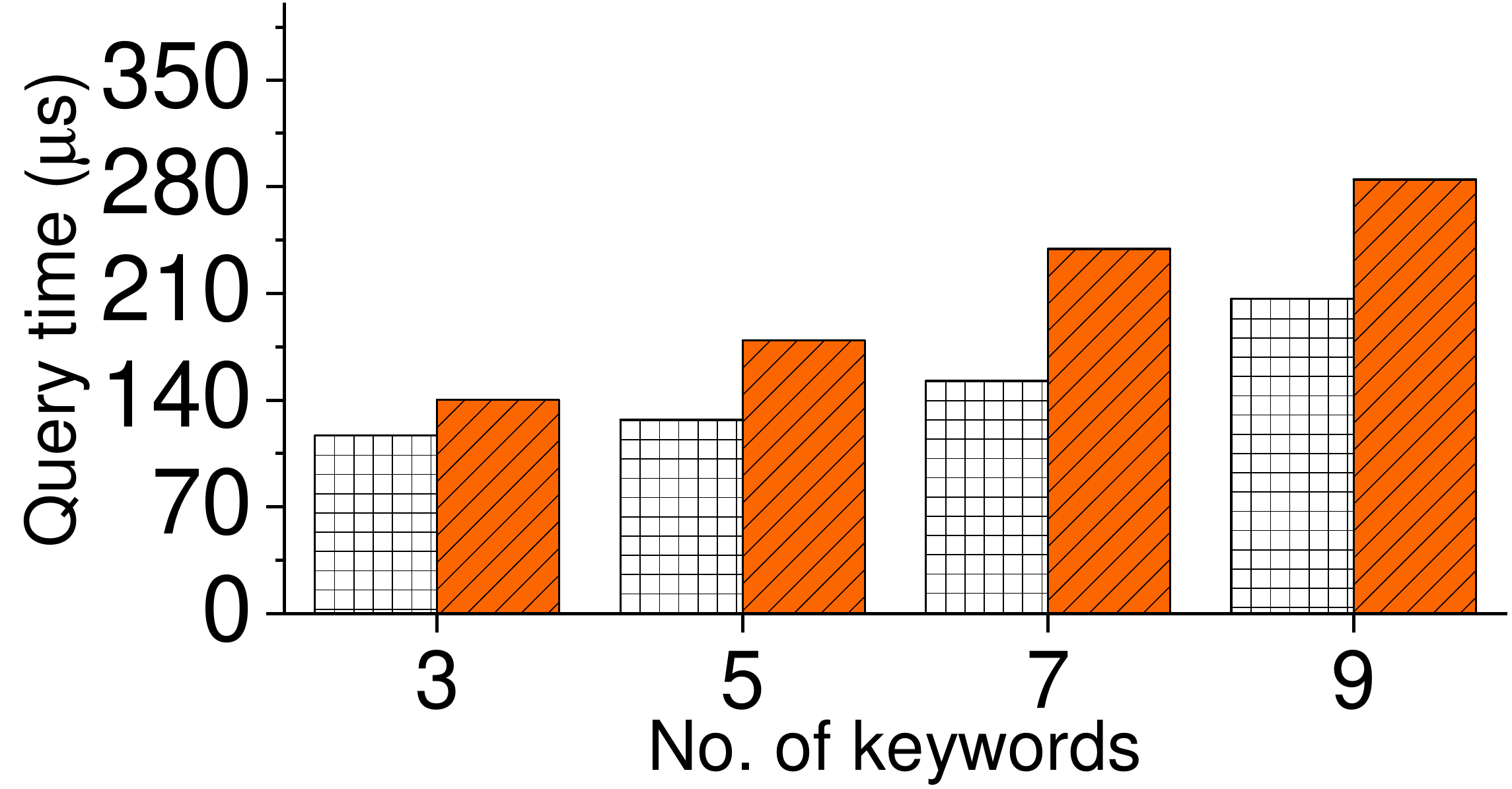}
        }
    \end{minipage}
    \vspace{-0.3cm}
    \caption{Effect of the frequent itemset}
    \label{fig:abl_fi}
    % \vspace{-0.3cm}
\end{figure}
Besides, the performance improvement on \textbf{BPD} is more obvious than that on \textbf{FS}. This is because the number of distinct keywords on \textbf{FS} is much less, making the number of frequent itemsets less than others.
Additionally, the improvement becomes consistent when the number of query keywords is larger than a threshold. The reason is that each object includes finite keywords, and thus we cannot generate frequent itemsets with more keywords.
\subsubsection{\textbf{Action Mask.}}
When using the RL framework, the environment applies the action mask to reduce the action space. Here, we evaluate its effectiveness in two aspects. We use the SmoothL1Loss with the sum reduction as the loss function in our RL framework.
Figure \ref{rl-loss} shows that the RL framework with the action mask can speed up the model convergence and reach a smaller loss. In Figure \ref{rl-reduction}, we sum the total rewards and the number of bottom nodes in each epoch to show the reduction in the average number of accessed nodes. The result shows that the pruning capability is always better when applying the action mask.

\begin{figure}[htb]
    \begin{minipage}{.55\linewidth}
        \centering
        \includegraphics[width=\linewidth]{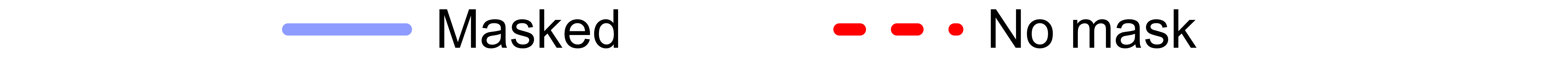}
    \end{minipage}
    \begin{minipage}{\linewidth}
        \centering
        \subcaptionbox{Loss\label{rl-loss}}{
            \vspace{-0.2cm}
            \includegraphics[width=.35\columnwidth]{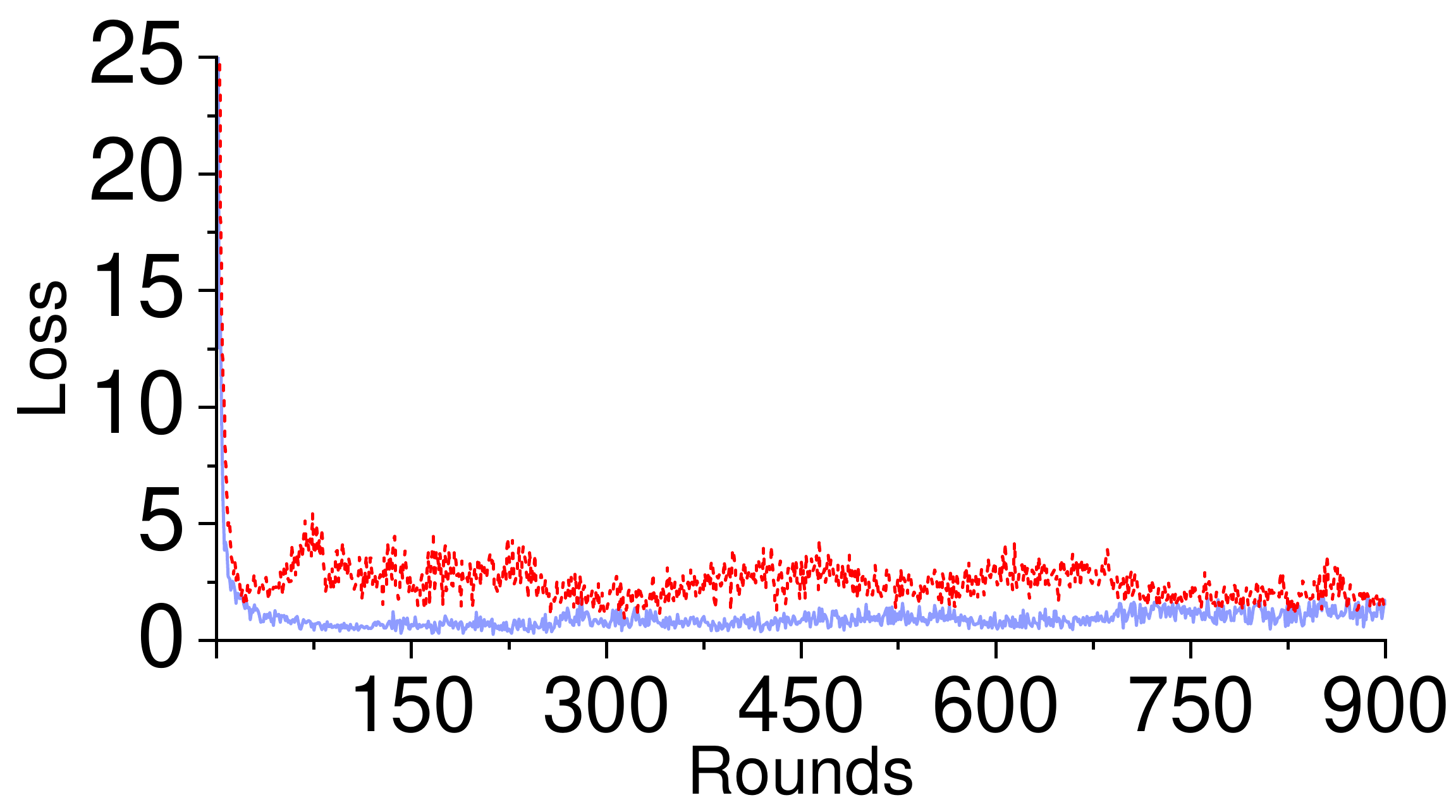}
        }
        \subcaptionbox{Filtering capacity\label{rl-reduction}}{
            \vspace{-0.2cm}
            \includegraphics[width=.35\columnwidth]{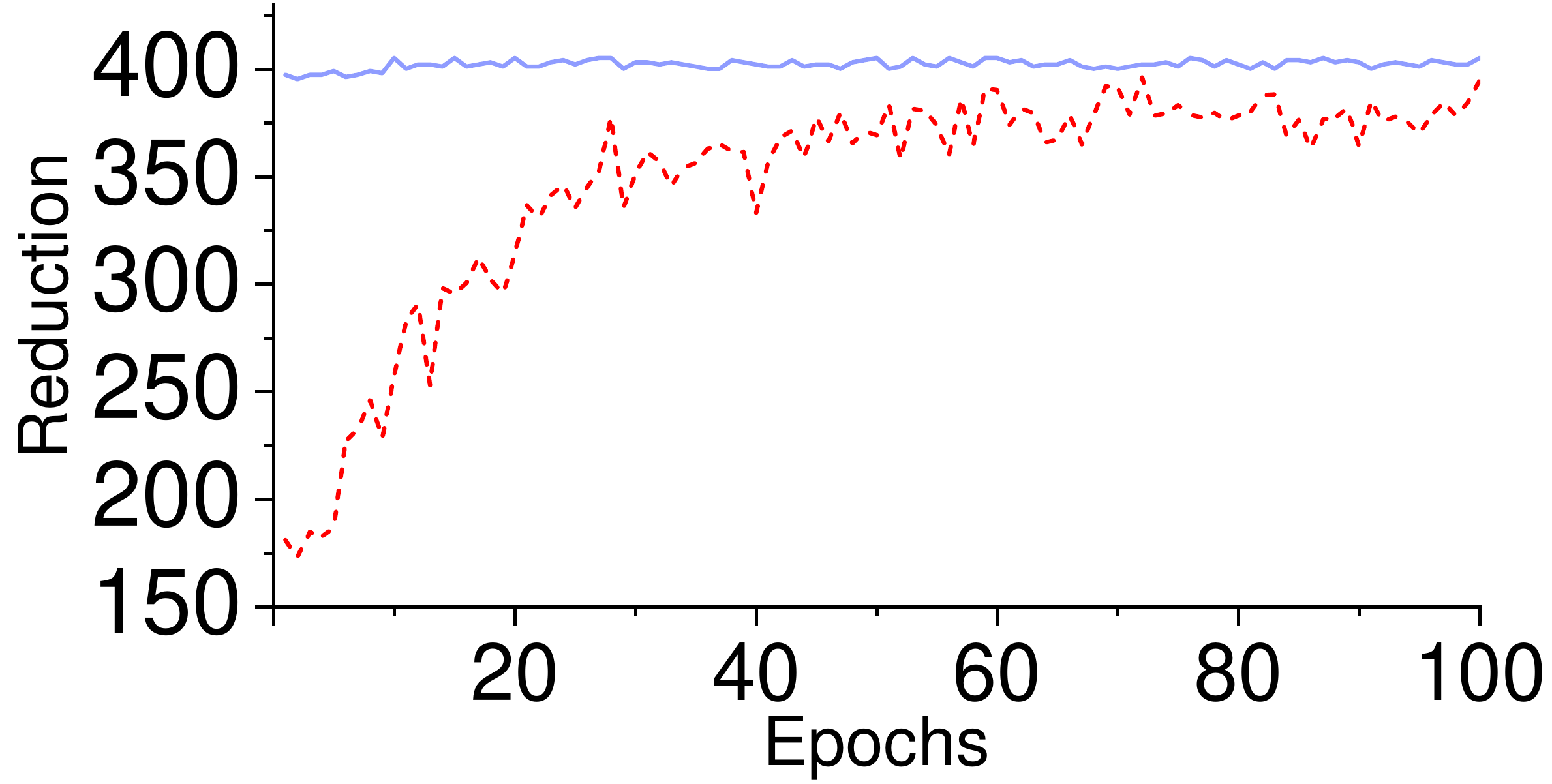}
        }
    \end{minipage}
    \vspace{-0.3cm}
    \caption{Comparison of model convergence and reward}
    \label{fig:abl_mask}
    % \vspace{-0.3cm}
\end{figure}
there are some fluctuations in the results as the RL agent learns from the feedback through trial-and-error interactions with the environment. RL balances the trade-off between exploration and exploitation, which reduces the fluctuation with more training epochs. These results confirm that the action mask helps to decrease the number of training epochs, leading to less training time.

\subsection{Parameter Sensitivity Study}
% We have followed existing studies~\cite{DBLP:journals/nature/MnihKSRVBGRFOPB15} to set the values of most of our hyper-parameters.
\textcolor{edit}{We further evaluate the sensitivity of \idxname's training  and query times to our key parameters. We show the results of varying the numbers of hidden units and layers in the neural networks in Figure \ref{hidden-unit}. Using more hidden units significantly increases the training time but only slightly improves the query time. Increasing the number of hidden layers shows a similar effect. Additionally, increasing the size of the structure requires larger memory space. Thus, we set the default hidden units and layers to 16 and 2.}

\begin{figure}[htb]
    \begin{minipage}{.55\linewidth}
        \centering
        \includegraphics[width=.9\columnwidth]{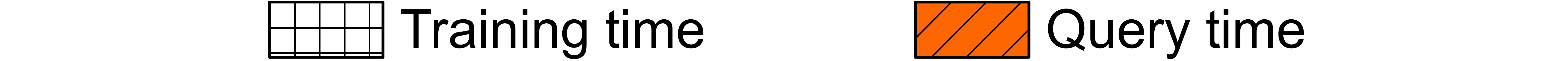}
    \end{minipage}
    \begin{minipage}{\linewidth}
        \centering
        \subcaptionbox{Varying no. of hidden units\label{hidden-unit}}{
            \vspace{-0.2cm}
            \includegraphics[width=.35\columnwidth]{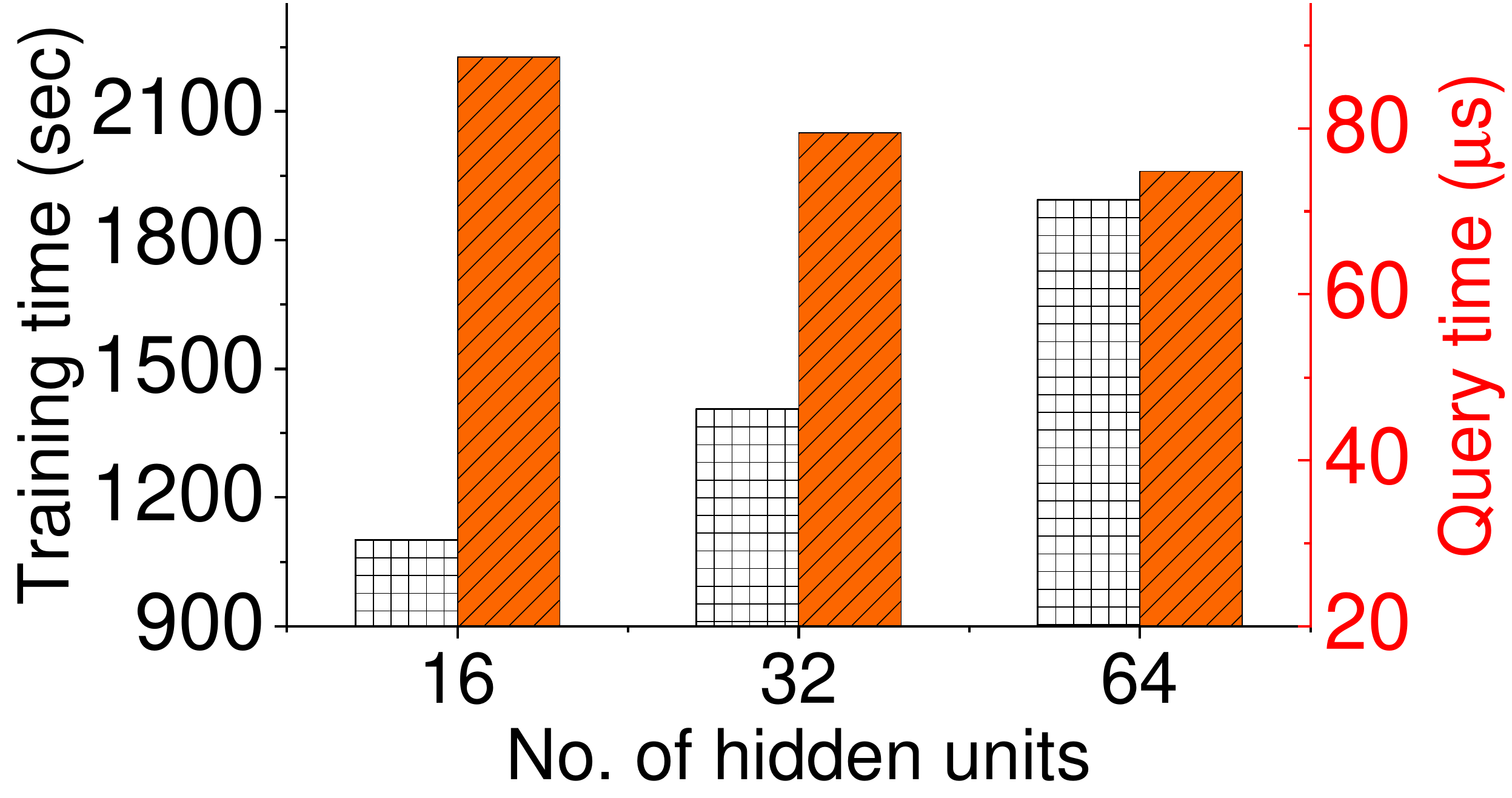}
        }
        \subcaptionbox{Varying no. of hidden layers\label{hidden-layers}}{
            \vspace{-0.2cm}
            \includegraphics[width=.35\columnwidth]{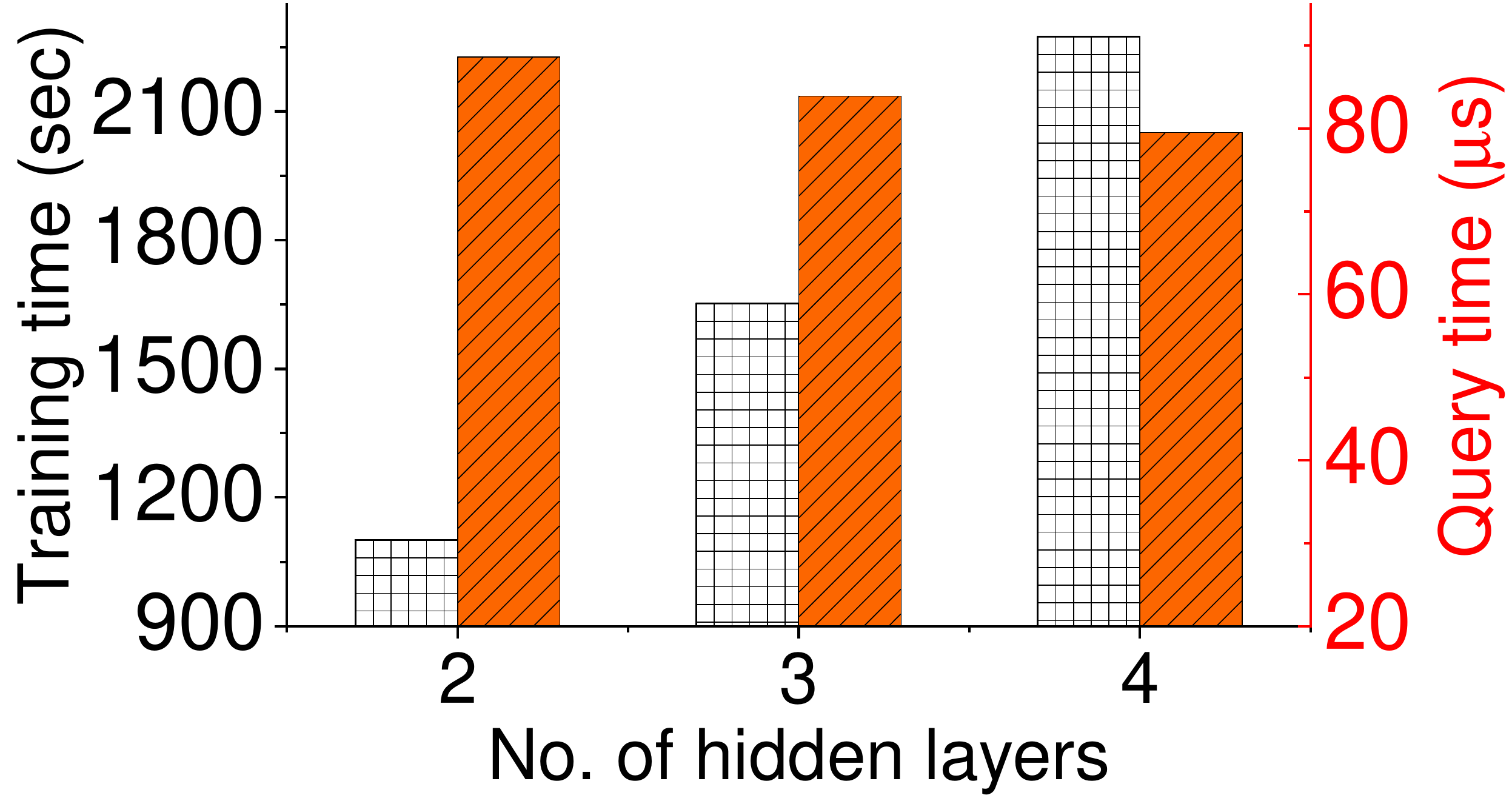}
        }
    \end{minipage}
    \vspace{-0.3cm}
    \caption{Comparison of training time and query time}
    \label{fig:hidden}
    % \vspace{-0.3cm}
\end{figure}
% \textcolor{edit}{
% We also conduct experiments to change the capacity of experience replay and the value of discount factor. As shown in Figure, changing the capacity of experience replay cannot accelerate the rate of convergence. In Figure, we can get a similar conclusion by varying the number of discount factor.}
\textcolor{edit}{We also vary the capacity of experience replay and the discount factor. The results show that they have minor impacts on the query time and the RL convergence rate. We omit the details due to space limits. The performance of \idxname\ is less sensitive to these hyper-parameters, and we can follow the existing work \cite{DBLP:journals/nature/MnihKSRVBGRFOPB15} to set them.}

\section{Related Work}
\label{sec8}
\textbf{Traditional geo-textual indexes.} In recent years, due to the popularity of SKR queries and their applicability in practical scenarios, a series of geo-textual indexes \cite{DBLP:conf/cikm/ChristoforakiHDMS11,DBLP:conf/sigmod/ChenSM06, DBLP:conf/ssdbm/HariharanHLM07, DBLP:conf/cikm/ZhouXWGM05, DBLP:conf/ssd/VaidJJS05, DBLP:conf/ssd/TampakisSDPKV21, DBLP:conf/dexa/KhodaeiSL10} have been proposed to support efficient SKR query processing. 
% Table \ref{sum:indexes} summarizes some representative works.
% According to their employed spatial indexing structures, we could classify them into three categories, namely tree based indexes, grid based indexes, and mapping based indexes.

The general idea of geo-textual indexes is to combine spatial and textual indexes to exploit their pruning capabilities based on the spatial and textual attributes of the data. Early works only loosely combine both types of indexes. For example, 
% Various methods for integrating the spatial and textual indexes into geo-textual indexes have been proposed. The loosely integrated geo-textual indexes are initially designed for the SKR query. 
Vaid et al. \cite{DBLP:conf/ssd/VaidJJS05} propose the first grid-based geo-textual indexes, i.e., the spatial primary index (ST) and the text primary index (TS), which are spatial-first and textual-first integrations, respectively. %The R-tree \cite{DBLP:conf/sigmod/Guttman84} is arguably the dominant index for spatial queries. Accordingly, 
Parallel to this work, the R*-tree-inverted file (R*-IF) and the inverted file-R*-tree (IF-R*) \cite{DBLP:conf/cikm/ZhouXWGM05} combine the inverted file with the R*-tree \cite{DBLP:conf/sigmod/BeckmannKSS90}.
% , a variant of R-tree, with the inverted file through different integration schemes. Mapping-based spatial indexes, for example, Space-Filling Curve (SFC) \cite{peano1890courbe}, always map the multi-dimensional object to a lower dimensional representation. To use this kind of index,
% Soon afterwards, Chen et al. \cite{DBLP:conf/sigmod/ChenSM06} propose the Space-Filling Inverted Index (SF2I) that combines the inverted file with the Hilbert curve \cite{hilbert1935stetige}. In this index, the inverted lists are sequentially stored on hard drives following the order of the Hilbert curve. 
The loose integrated indexes has been shown to result in unsatisfactory query time in both the follow-up study \cite{DBLP:journals/pvldb/ChenCJW13} and our experiments.
% pruning the search space using one index in each step severely affects the performance.

Later works combine spatial and textual indexes more tightly such that they use both types of attributes of the data for search space pruning in parallel. For example, 
% The tightly integrated indexes combine the spatial and textual indexes tightly such that both aspects of information can be used to prune the search space concurrently. 
 % is a grid-based index of this kind. In this index, 
each grid cell in Spatial-Keyword Inverted File (SKIF) \cite{DBLP:conf/dexa/KhodaeiSL10} is presented by an inverted list, and it works in the rectangular object scenario. Hariharan et al. \cite{DBLP:conf/ssdbm/HariharanHLM07} propose the Keyword-R*-tree (KR*-tree). Each node of this index is associated with the set of keywords that appear in the sub-tree rooted at this node. Thus, each tree node can prune the search space with both a spatial region and a set of keywords at the same time.
% SFC-Quad \cite{DBLP:conf/cikm/ChristoforakiHDMS11} builds an inverted index and orders each document based on their locations on the Z-order curve \cite{DBLP:conf/pods/OrensteinM84}. It also maintains a Quad-tree \cite{DBLP:journals/acta/FinkelB74} to for spatial pruning. 
% The idea is that all documents in query region is supposed to be close to each other in the SFC and the inverted list.

The indexes above focused on SKR queries. 
There are also geo-textual indexes \cite{DBLP:journals/pvldb/CongJW09, DBLP:conf/cikm/Hoang-VuVF16, DBLP:conf/edbt/ZhangTT13, DBLP:conf/icde/FelipeHR08, DBLP:conf/ssd/RochaGJN11, DBLP:journals/tkde/WuYCJ12}, such as IR-Tree, developed for other types of spatial keyword queries. Some of these can be adapted to answer SKR queries. However, since they are not tailored for SKR queries, their query time is usually worse \cite{DBLP:journals/pvldb/ChenCJW13}.

\noindent \textbf{Learned indexes.} Kraska et al. \cite{DBLP:conf/sigmod/KraskaBCDP18} propose the recursive model index (RMI), which leverages machine learning models to replace a traditional index over one-dimensional search keys. The motivation is that an index can be seen as a function mapping a search key to the storage position of the corresponding record. Several follow-up studies propose learned indexes for one-dimensional  data~\cite{DBLP:conf/sigmod/DingMYWDLZCGKLK20, DBLP:journals/pvldb/WuZCCWX21, DBLP:journals/pvldb/FerraginaV20}. More details can be found in a benchmark study \cite{DBLP:journals/pvldb/MarcusKRSMK0K20}.

To handle multi-dimensional data, the Z-order model \cite{DBLP:conf/mdm/WangFX019} extends RMI by utilizing a Z-order curve to map multi-dimensional search keys into one-dimensional keys. Since this index might lead to large and uneven gaps between the mapped keys of adjacent objects, RSMI \cite{DBLP:journals/pvldb/QiLJK20} proposes a rank space-based technique, and it further proposes a hierarchical learned partitioning strategy for index learning over large spatial datasets. A parallel work, LISA~\cite{DBLP:conf/sigmod/Li0ZY020}, designs a grid-based index that supports data updates.
The RLR-tree~\cite{DBLP:journals/corr/abs-2103-04541} 
uses machine learning techniques to build a better R-tree without the need to change the structure or query
processing algorithms of the R-tree. 

Although these indexes have shown performance gains by exploiting the data distribution, they have ignored the query workload in index construction. Several studies \cite{DBLP:conf/sigmod/NathanDAK20, DBLP:journals/pvldb/DingNAK20, DBLP:conf/sigmod/MaAHMPG18} take into account the query workload and propose to automatically optimize the index structure for a given data and query distribution. 

% Nevertheless, none of the existing works has developed a learned index for spatial keyword queries. This paper fills the gap.

\noindent \textbf{Reinforcement learning.} RL is often utilized in sequence generation applications, such as game playing \cite{DBLP:journals/nature/SilverHMGSDSAPL16}, machine translation \cite{kang2020dynamic}, and bin packing \cite{DBLP:conf/aaai/ZhaoS0Y021}. Recently, it  has been adapted to solve database optimization problems, such as query optimization~\cite{DBLP:conf/sigmod/ZhangCZ022}, index tuning~\cite{DBLP:conf/sigmod/00010SWNCB22}, and trajectory simplification \cite{DBLP:conf/icde/WangLC21}. However, these problems are quite different from ours by definitions, and so are their state and reward formulations. Thus, our RL formulation requires new designs in the states and reward functions. 

\section{Conclusions and Future Work}
\label{sec9}
We proposed a hierarchical index named \idxname\ for SKR queries, which is jointly optimized for a given dataset and a query workload. \idxname\ is built in two stages.  First, a partitioning algorithm finds the data clusters that minimize the time cost of executing the query workload. Then, an RL-based algorithm packs the data clusters into a hierarchical index in a bottom-up manner for more efficient pruning at query time. 
Learning from the query workload enables \idxname\ to significantly outperform traditional SKR indexes. 
% while the tight integration schema makes \idxname\ outperform baseline algorithms with loose integration. 
Experimental results on real-world datasets show that \idxname\ yields strong query performance over various workloads, achieving up to 8$\times$ times speedups with comparable storage overhead.

There are several directions for future work. 
First, we intend to better answer Boolean $k$NN queries and to support more types of spatial keyword queries.
% First, we plan to study cardinality estimation on geo-textual data. The performance of \idxname\ will be further strengthened with more accurate cardinality estimation. Existing works on selectivity estimation focus on relational databases which do not suit our problem setting, while there is no existing cardinality estimation algorithm for geo-textual datasets. 
Second, even \idxname\ can adapt to workload changes by model retraining, a data-driven learned index, which is compatible with frequent workload shifts and data changes, is an interesting direction to explore.

% \jz{check venue name format consistency: 
% Proceedings of the 1993 ACM
% SIGMOD international conference on Management of data. 
% vs. 
% Proceedings of the 1990 ACM SIGMOD international conference on Management
% of data
% vs. 
% Proceedings of the 2011 ACM SIGMOD International
% Conference on Management of data. 
% vs. 
% Proceedings of the 2018 International Conference
% on Management of Data.
% vs. 
% European Journal
% of Operational Research
% }

%%
%% The acknowledgments section is defined using the "acks" environment
%% (and NOT an unnumbered section). This ensures the proper
%% identification of the section in the article metadata, and the
%% consistent spelling of the heading.
% \begin{acks}
%     \input{sections/acknowledgements}
% \end{acks}

\bibliographystyle{ACM-Reference-Format}
\bibliography{ref-scholar}

\appendix
\section{\textit{k}NN Query Support}
\textcolor{edit}{\idxname can also support the Boolean \textit{k}NN (\textbf{B\textit{k}}) query without any modification to the index layout. A B\textit{k} query on geo-textual data is formed by a set of keywords $\psi$, a spatial query point \textit{o}, and the result size \textit{k}. It aims to retrieve \textit{k} objects, each of which covers at least one keyword in $\psi$ and is top-\textit{k} closest to \textit{o}. To process B$k$  queries, we follow existing works~\cite{DBLP:journals/tkde/WuYCJ12, DBLP:journals/pvldb/CongJW09} by using a best-first search. Here, we compare \idxname with two SOTA indexes, WBIR-Tree \cite{DBLP:journals/tkde/WuYCJ12} and LSTI, and we use the index layout generated under the default setting. As shown in Figure \ref{knn-key}, the query times of WBIR-Tree and our \idxname grow with the number of query keywords, which is expected. 
LSTI shows an opposite trend, as it needs to scan more spline points with fewer keywords. We note that \idxname shows comparable performance with the best baseline results.  Figure \ref{knn-top} further shows the query times when the result size $k$ is varied.  \idxname and WBIR-Tree show stable performance as $k$ increases, while LSTI degrades rapidly. \idxname achieves the best performance when $k > 15$.}

\begin{figure}[htb]
    \begin{minipage}{.55\linewidth}
        \centering
        \includegraphics[width=\linewidth]{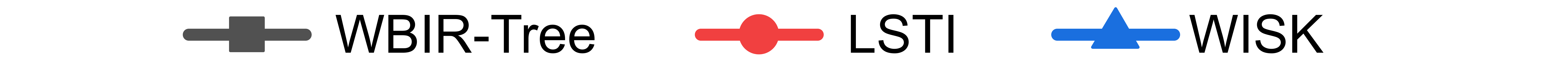}
    \end{minipage}
    \begin{minipage}{\linewidth}
        \centering
        \subcaptionbox{Varying no. of query keywords\label{knn-key}}{
            \vspace{-0.2cm}
            \includegraphics[width=.35\columnwidth]{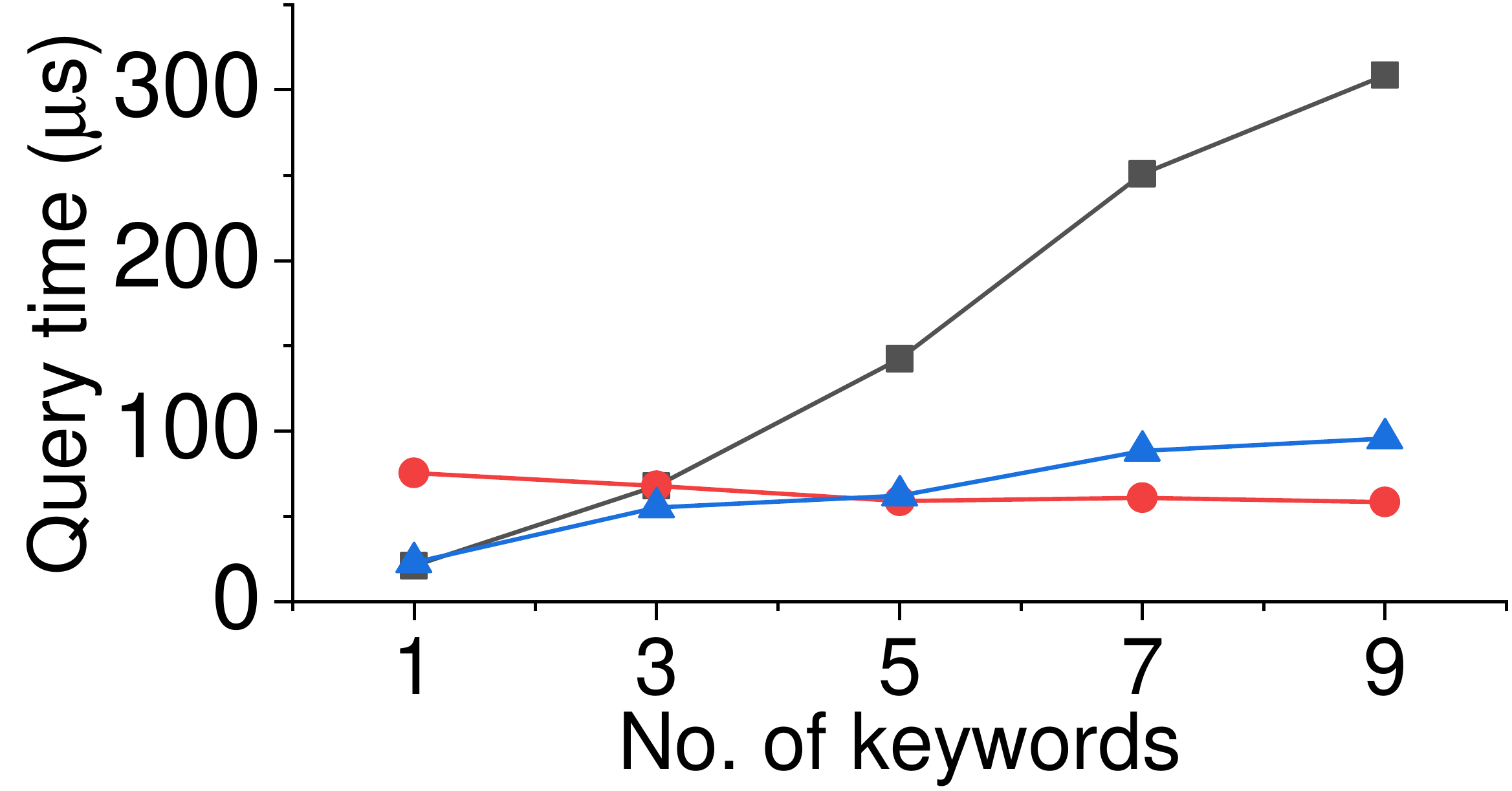}
        }
        \subcaptionbox{Varying $k$\label{knn-top}}{
            \vspace{-0.2cm}
            \includegraphics[width=.35\columnwidth]{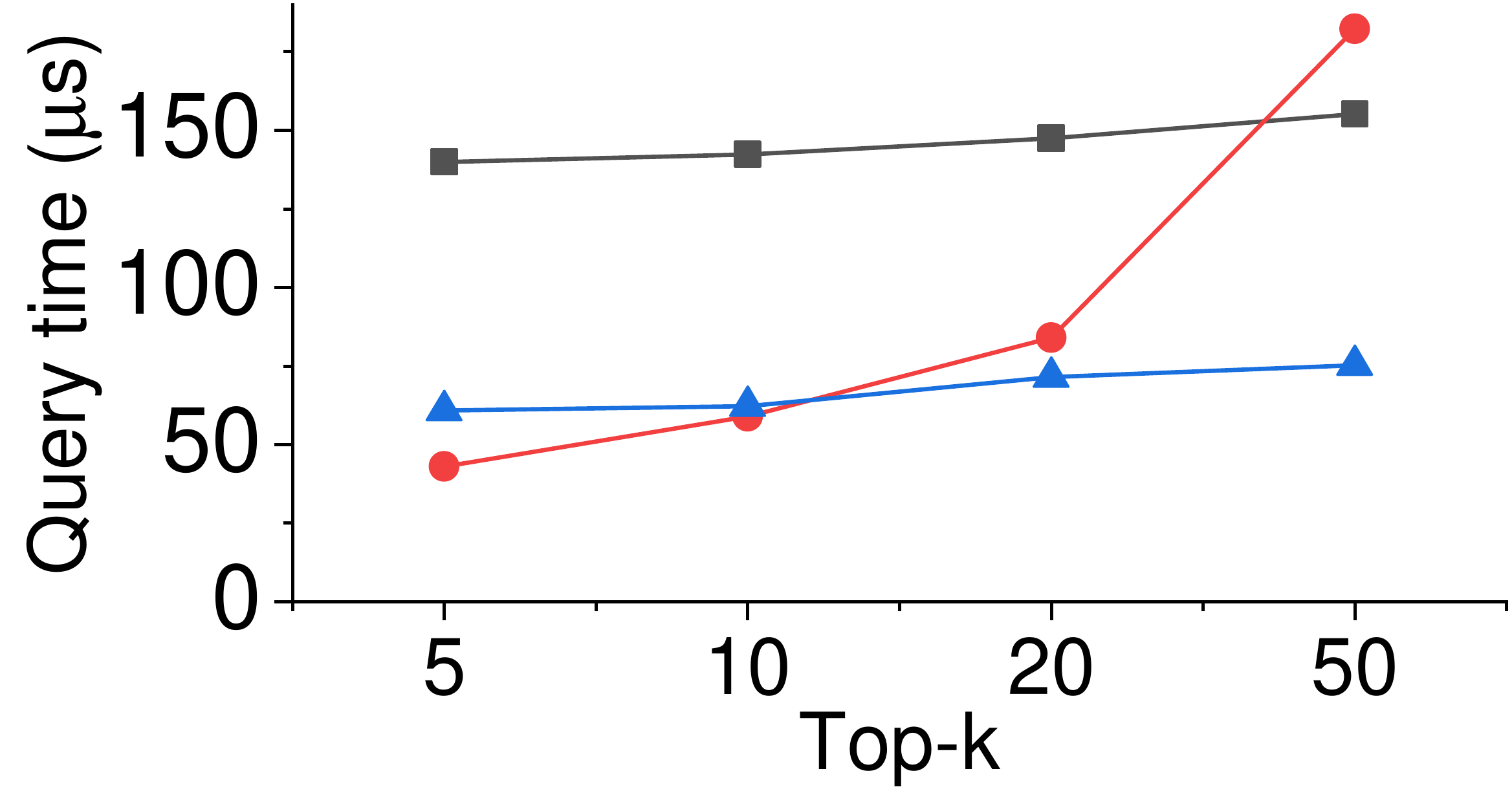}
        }
    \end{minipage}
    \vspace{-0.3cm}
    \caption{\textit{k}NN query time}
    \label{fig:knn-query}
    % \vspace{-0.3cm}
\end{figure}
\textcolor{edit}{Note that these experiments only aim to show that \idxname is applicable to the B\textit{k} query as well. Our work mainly focuses on SKR queries. Optimizing \idxname or designing an optimized learned index for other types of queries remains our future work.}
\end{document}